\DeclareMathAlphabet{\pazocal}{OMS}{zplm}{m}{n}
\let\mathcal\pazocal
\setlist[enumerate]{leftmargin=*}
\spnewtheorem{theorem}{Theorem}{\bfseries}{\itshape}
\spnewtheorem{corollary}[theorem]{Corollary}{\bfseries}{\itshape}
\spnewtheorem{lemma}[theorem]{Lemma}{\bfseries}{\itshape}
\spnewtheorem{proposition}[theorem]{Proposition}{\bfseries}{\itshape}
\spnewtheorem{definition}[theorem]{Definition}{\bfseries}{\itshape}
\spnewtheorem{remark}[theorem]{Remark}{\bfseries}{\upshape}
\spnewtheorem{assumption}[theorem]{Assumption}{\bfseries}{\itshape}
\def \N{\mathbb{N}}               
\def \P{\mathbb{P}}             
\def \1{{\bf 1}}                
\def \0{{\bf 0}}
\def\qed{\hfill$\Box$}
\definecolor{myred}{rgb}{0.9,0,0}  
\definecolor{mygreen}{rgb}{0,0.7,0}  
\definecolor{myblue}{rgb}{0.2,0,0.8}  
\definecolor{orange}{rgb}{1,0.6,0}  
\definecolor{olive}{rgb}{0.5,0.5,0}  
\definecolor{mylila}{rgb}{0.8,0.5,0.2}  
\definecolor{mygrey}{rgb}{0.6,0.6,0.6}  
\definecolor{mybrown}{rgb}{0.65,0.16,0.16}  
\definecolor{mymaroon}{rgb}{0.11,0.0,0.0}
\def \one{\mathcal{I}}
\newcommand{\condvar}{{Q}}  
\newcommand{\condmean}{M}  
\newcommand{\condvarEKF}{{\widetilde{Q}}}  
\newcommand{\condmeanEKF}{{\widetilde{M}}}  
\renewcommand{\P}{\mathbb{P}}
\newcommand{\TimCh}{\tau}  
\def\var{\operatorname{Var}}
\def\cov{\operatorname{Cov}}
\newcommand{\Var}{\var}
\newcommand{\E}{\mathbb{E}}
\newcommand{\Zpath}{{\mathcal{Z}}}
\newcommand{\Zpathn}{{\Zpath_n}} 
\def\cf{{ \color{black} \widetilde{f}_n}}
\def\ch{{ \color{black} \widetilde{h}_n}}
\def\csigma{{\widetilde{\sigma}_n}}
\def\cg{{\widetilde{g}_n}}
\def\cell{{\widetilde{\ell}_n}}
\def\cb{{\widetilde{b}_n}}
\def\dfcd{{U_{\dagger}}}
\def\dfc{{U}}
\def\ecc{{v}} 
\def\dfcmean{{\condmean}_{\dfc}}
\def\dfcdmean{{\condmean}_{\dfcd}}
\def\dfcdvar{{\condvar}_{\dfcd}}
\def\Fprior{{\mathcal{F}^I_0}}
\def\eccmean{{\condmean}_{\ecc}}
\def\eccvar{{\condvar}_{\ecc}}
\def\L{L}
\def\LR{\L^R}
\def\K{d}  
\def\KR{\K^R}
\def\PP{P}
\def\PR{\PP^R}
\def\Rin{R^\text{in}}
\def\Rout{R^\text{out}}
\def\Count{\Theta}
\def\halfsat{\phi}
\def\Isum{\mathcal{I}}
\def\propexposed{\psi}
\newcommand{\mymarginpar}[1]{ \marginpar{{\tiny #1}}}
\renewcommand{\mymarginpar}[1]{}
\date{}    
\begin{document}

\title{Stochastic Models and Estimation of Undetected Infections in the Transmission of Zika Virus}
\titlerunning{Stochastic Models and Estimation of Undetected Infections}        
\author{Lillian Achola  Oluoch  \and Florent Ouabo Kamkumo \and Ralf Wunderlich}
\authorrunning{L.~Achola  Oluoch, F.~Ouabo Kamkumo, R.~Wunderlich} 
\institute{
Lillian Achola  Oluoch  \at The Technical University of Kenya; { Haile Selassie Avenue, P.O. Box 52428 - 00200, Nairobi, Kenya.}
\email{lilianoluoch@tukenya.ac.ke}
\and
Florent.OuaboKamkumo / Ralf Wunderlich \at
Brandenburg University of Technology Cottbus-Senftenberg, Institute of Mathematics, P.O. Box 101344, 03013 Cottbus, Germany;  
\email{\texttt{Florent.OuaboKamkumo / ralf.wunderlich@b-tu.de}}  
}

	\date{Version of  \today}
	
	\maketitle

\begin{abstract}
Zika fever, a mosquito-borne viral disease with potential severe neurological complications and birth defects, remains a significant public health concern. The epidemiological models often oversimplify the dynamics of Zika transmission by assuming immediate detection of all infected cases. This study provides an enhanced SEIR (Susceptible-Exposed-Infectious-Recovered) model to incorporate partial information by distinguishing between detected and undetected Zika infections (also known as “dark figures”). By distinguishing the compartments, the model captures the complexities of disease spread by accounting for uncertainties about transmission and the number of undetected infections.
This model implements the Kalman filter technique to estimate the hidden states from the observed states. Numerical simulations were performed to understand the dynamics of Zika transmission and real-world data was utilized for parameterization and validation of the model. The study aims to provide information on the impact of undetected Zika infections on disease spread within the population, which will contribute to evidence-based decision making in public health policy and practice.
\end{abstract}

\keywords{Stochastic epidemic model \and   Diffusion approximation \and Partial information, \and Extended Kalman filter \and   Zika \and  Estimation of dark figures}	

    	\subclass{
		     92D30 
		\and 92-10 
		\and  60J60 
		\and 60G35  
		\and 62M20 
	}	
\newpage
	
	\setcounter{tocdepth}{2}
	\tableofcontents
	
	\newpage

\section { Introduction}

Understanding the dynamics of transmission of infectious diseases is fundamental to designing effective public health strategies. Stochastic models, which incorporate random variables to account for uncertainties, have proven essential for predicting the course of disease outbreaks. The Zika virus  disease (ZIKV) is  a vector-borne disease transmitted by Aedes mosquitoes that gained worldwide attention during the major outbreak in the Americas between 2015 and 2016. Although often asymptomatic or mildly symptomatic in many individuals, ZIKV has been linked to severe consequences, including congenital microcephaly and neurological disorders, making its control a priority in global health efforts \cite{Baca,Blohm}.

The complexity of modeling ZIKV dynamics increases due to the presence of undetected cases: individuals who remain unreported because they are asymptomatic or because their symptoms are mild and do not lead them to seek medical attention \cite{Bra, BrittonPardoux2019}. These undetected cases are critical, as they introduce substantial uncertainty into the understanding of disease prevalence and transmission. This unreported component of an epidemic, often referred to as ``dark figures'', can cause a significant underestimation of the true scope of an outbreak and lead to misinformed public health interventions.

Furthermore, the challenge of tracking and managing ZIKV transmission is amplified by the partial information available to researchers and health authorities. While some epidemiological data, such as the number of symptomatic reported cases or vector population changes, are readily observable, other key aspects of the epidemic remain hidden. This includes the number of asymptomatic individuals or the extent of underreporting among those with mild symptoms \cite{CDC}. As a result, modeling ZIKV dynamics with only partial information becomes inherently uncertain, complicating the accurate estimation of vital epidemic parameters like the effective reproduction number  and the total number of infections \cite{CDC1,anderson2011continuous}).

To address these uncertainties, it is crucial to adopt stochastic epidemic models that incorporate both the observable and unobservable elements of transmission. Models that integrate the effects of undetected cases allow for more reliable predictions and can inform more effective control strategies. This is particularly relevant for vector-borne diseases like ZIKV, where the transmission process is influenced not only by human hosts but also by environmental conditions and vector biology \cite{Davi}. Accounting for hidden infections, whether asymptomatic or underreported symptomatic cases, can lead to better estimates of disease spread and improve response strategies. Such stochastic models can support decision making in epidemic control, such as evaluating and adjusting intervention strategies, such as vector control, based on updated data.

In this study, we develop a stochastic epidemic model for ZIKV that explicitly accounts for the partial information regarding undetected cases. These include both asymptomatic individuals and those with mild symptoms who do not report their illness. We explore how these unobserved cases affect the overall transmission dynamics and propose methods for improving the estimation of the true burden of the epidemic. By integrating stochastic control approaches, our model will allow for more effective public health interventions in the context of ZIKV and similar infectious diseases as in  \cite{Njiasse}.

\paragraph{Literature Review on Zika Disease}
Zika virus which causes Zika fever is a \textit{Flaviviridae} virus predominantly transmitted by female mosquito bites from  \textit{ Aedes aegypti} and \textit{Aedes albopictus} \cite{Peter}. It poses a serious public health hazard due to its potential to cause severe birth abnormalities (such as microcephaly) and neurological disorders. Other arboviruses transmitted by mosquitoes include Chikungunya and Dengue. While these diseases are primarily transmitted by mosquitoes, research indicates that Zika, unlike other insect-transmitted infections, can also be transmitted through sexual contact, primarily from men to women \cite{Maga}.

In human-to-human transmission, Zika can spread before, during, and after symptoms appear. There is also a high probability of a mother transmitting the disease to her fetus during pregnancy, delivery, or by blood transfusion and breastfeeding \cite{Blohm, Greg}. Asymptomatic carriers may also transmit the virus \cite{CDC}. Studies reveal that the Zika virus can persist in semen for extended periods, possibly up to six months \cite{Mead}, meaning that even months after recovery, the disease might still be transmissible.
The incubation period of the disease is approximately 3-14 days. Infected individuals typically experience no symptoms or moderate symptoms such as muscle and joint soreness, fever, rash, and headaches \cite{WHO1}. Due to these reasons, many infected persons may go undetected, posing a significant risk in transmission of the virus.

Zika virus continues to spread globally, posing a significant hazard to public health as there are currently no identified vaccinations or therapies for treatment and prevention \cite{Song}. Consequently, Zika fever is one of the priority diseases in the WHO's Blueprint for Research and Development \cite{WHO2}, due to its likely persistence as a significant threat in the future.
Therefore, understanding the mechanisms of Zika transmission and the effectiveness of intervention efforts is crucial in mitigating its impact. In many scenarios, a portion of infected individuals may go undetected through testing or surveillance systems, leading to undetected infections. This study thus proposes a classic epidemiological model that can be used to understand the spread of infectious diseases within a population, particularly considering undetected cases. Understanding the dynamics of undetected infections is crucial for effective disease control and public health interventions.

Numerous mathematical models have been developed to predict the spread of Zika virus disease \cite{Pad, Bra}. A stage-structured model was developed to examine the impact of sexual transmission \cite{Sas}. Another compartmental model of Zika propagation, considering vector-borne and sexual transmission, was proposed by \cite{Gao}, which used an SEIR-type model for humans and an SEI-chain for vectors, distinguishing between asymptomatic and symptomatic infected individuals. Additionally, a compartmental model accounting for vector-borne and sexual transmission, differentiating between sexes and migration, was developed by \cite{Baca}. Transmission of Zika fever using two vector management strategies, lowering mosquito biting rates and population size, was proposed by \cite{Supa}.
In \cite{Denes}, we developed a non-autonomous model that incorporated many important features of Zika transmission, including weather seasonality, sexual and vector-borne transmission, the prolonged period of infectiousness after recovery, and the role of asymptomatically infected humans. However, in the study, we did not address the fact that within the susceptible and exposed population,  there may be those who were infected but undetected. 

No  Zika model has addressed the fact that not all state variables can be directly observed and others go undetected, even though the inclusion of non-detected compartments has been proposed \cite{Kamkumo}. In the study, they used the stochastic epidemic model to explore state variables that could not be directly observed and termed them as the ``dark figures'' problem. The application of the stochastic model in epidemics is not novel. Stochastic epidemic modeling, with a focus on compartmental models, has been used to determine the distribution of ultimate epidemic size, the impact of various infectiousness patterns, and the quantification of stochastically maintained oscillations \cite {green}.
Hence, in the present work, based on our model in \cite{Denes}, we propose a stochastic epidemic model by incorporating partial information on Zika infections, distinguishing between detected and undetected cases. 

\paragraph{Literature Review on General Filtering Methods}
In the context of stochastic epidemic models with partial observability, filtering techniques serve as indispensable tools for inferring unobservable states and estimating unknown parameters. These models typically classify system components into \emph{observable states}, which can be measured or reported, and \emph{latent states}, which remain hidden from direct observation. A variety of filtering methods have been employed in the literature to recover these hidden dynamics and calibrate model parameters using noisy, incomplete, or indirect observations.

A significant challenge in epidemic modeling is the scarcity or complete absence of data regarding the true infection process. This problem is addressed using a Bayesian framework  \cite{oneill}, by implementing Markov chain Monte Carlo (MCMC) techniques to jointly estimate missing data and infer model parameters. Their approach facilitates inference even when key epidemiological events, such as infection times, are unobserved.

Extending this methodology, \cite{britton} introduced a model in which the population structure is described by a random graph, resembling an SIR-type process. Their work integrates MCMC methods for simultaneous inference of disease transmission rates and characteristics of the social contact network, allowing a coherent estimation of both epidemic and structural parameters.

Further developments include state-space models incorporating modified SEIR dynamics. \cite{calvetti} proposed a Bayesian particle filtering algorithm for estimating the temporal evolution of both hidden states and unknown parameters from noisy daily infection data related to the COVID-19 pandemic. Their method enables robust estimation even in the presence of measurement noise and under-reporting.

Similarly, \cite{lal} employed an ensemble Kalman filter (EnKF) within a SIRD modeling framework to describe COVID-19 progression. The EnKF is particularly well-suited for high-dimensional or nonlinear systems, offering an efficient recursive estimation of both model parameters and unobserved epidemiological compartments.

A discrete-time stochastic SIR model with latent infectious states was considered by \cite{colaneri}, where the transmission rate and the actual number of infectious individuals were assumed to be unobservable. Their approach utilizes a hidden Markov model (HMM) formulation, with nested particle filtering techniques employed to recover the hidden states and estimate the reproduction number and other model parameters.

To address the presence of non-Gaussian features such as skewness and outliers in epidemic data, \cite{alyami} introduced a skew Kalman filter (SKF). Traditional Kalman filters often assume Gaussianity and can be sensitive to extreme values. The SKF generalizes the Kalman framework to accommodate asymmetric distributions, leading to more robust Bayesian inference, particularly for state estimation in the presence of irregular or noisy data.

In general, Bayesian filtering frameworks, including MCMC and sequential Monte Carlo (SMC) methods like particle filters, offer powerful probabilistic tools for estimation under uncertainty \cite{cappe,doucet,sarkka}. These methods approximate posterior distributions of hidden variables through sampling, making them flexible and suitable for a wide range of nonlinear and non-Gaussian models. However, they often come with a computational cost: convergence can be slow and the algorithms may be computationally intensive, especially when applied to complex models or large-scale datasets.

\paragraph{Literature Review on Kalman Filtering in Epidemic Modeling}
Increasing research has used Kalman filtering techniques to improve state space estimation in compartmental epidemic models. A stochastic SEIR (R) DSD model has been introduced by \cite{Zhu}.  An Extended Kalman Filter (EKF) was employed to infer both model parameters and unobserved states, enhancing short‐term forecasting accuracy.
 \cite{Hasan} follow a comparable route, employing an EKF within a classical SIRD model to dynamically track disease progression, enabling refined temporal parameter estimation.

On the other hand ,\cite{Zeng} implemented a switching Kalman Filter based on linear Gaussian assumptions. However, such flexibility comes at the cost of increased computational demand and challenges in calibrating transition probabilities among regimes. In an alternative direction,  \cite{Njiasse} apply EKF within a stochastic optimal‐control framework. Under partial information, their methodology addresses decision‐making problems faced by policymakers aiming to contain an epidemic efficiently while balancing public health and economic costs.
In a related study, \cite {Kamkumo} estimated the unobservable component from the observations using the extended Kalman filter approach to account for nonlinearity of the state dynamics.

Further expanding the filtering toolbox, \cite{Chenli} explore conditional nonlinear Gaussian systems (CNGS). These systems support fast joint estimation of both latent states and model parameters in nonlinear settings under partial observability. A major advantage lies in the availability of closed‐form expressions for conditional distributions, enabling efficient data assimilation and uncertainty quantification.
Building on this, \cite{Chenma} highlighted the suitability of conditional Gaussian frameworks for multiscale stochastic systems. Their work emphasizes the capacity of such models to capture non-Gaussian, non-linear dynamics characteristic of epidemics and other natural phenomena.

Several studies have successfully applied ensemble Kalman filters to dengue transmission models, which is another mosquito-based disease such as the Zika virus. For example, \cite{Zhou} utilized an SIR-type compartmental model coupled with the Ensemble Adjustment Kalman Filter (EAKF) to forecast the weekly incidence of dengue in Guangzhou, China, over the 2011-2017 period. Their model incorporated mosquito density and meteorological variables (such as temperature and rainfall) as covariates. The EAKF approach allowed for dynamic estimation of effective transmission rate, population susceptibility, and peak timing, generating accurate short-term outbreak predictions.

In a related effort, \cite{Yang} developed a metapopulation network model, also driven by EAKF, to predict dengue spread in cities in Guangdong province. By assimilating case reports and climatic data across multiple nodes, the model improved forecasts of spatial and temporal disease dynamics, outperforming single-city models in epidemic peak prediction up to 10 weeks in advance.
Furthermore, a comparative analysis by \cite{Indriani} evaluated both EnKF and EKF methods within a SIRS model framework. They reported high forecast accuracy, with the EKF slightly outperforming the EnKF in terms of computational efficiency and convergence speed.

\paragraph{Our Contribution}
This study provides a novel contribution to the field of infectious disease modeling by introducing, for the first time, an extended Kalman filter framework for tracking Zika virus transmission under partial observability. Unlike existing approaches, which predominantly rely on deterministic or simple stochastic compartmental models, the proposed methodology is rooted in stochastic epidemic models derived from large-population limits of continuous-time Markov chains. These models are specifically designed to incorporate unobserved epidemic states, enabling the systematic estimation of hidden infections. A central innovation is the formulation of a cascade of partially hidden compartments, where either the inflow or outflow is observed, but not both. This allows the model to leverage all available data without overestimating unobservable quantities. To capture nonlinear dynamics inherent in vector-borne disease transmission, we apply the extended Kalman filter (EKF), which facilitates real-time inference of latent states. Through comprehensive simulation studies based on a Zika transmission model, we demonstrate that although initial uncertainty in estimates is high, the EKF quickly assimilates information and accurately tracks the evolution of hidden states. These results highlight the potential of Kalman filtering techniques in improving public health decision making in the presence of incomplete data and under-reporting, particularly in diseases such as Zika, where traditional models fall short.

\paragraph{Notation} 
Let $x \in \mathbb{R}^d$ be a vector with components denoted by $x^1, \ldots, x^d$, and let $\|x\|$ represent its Euclidean norm. For a matrix $A \in \mathbb{R}^{d \times d}$, the entry in the $i$-th row and $j$-th column is written as $A^{ij}$, and $\|A\|$ denotes the Frobenius norm. The $d \times d$ identity matrix is denoted by $\mathbb{I}_d$, and the zero vector in $\mathbb{R}^d$ is denoted by $0_d$.

\paragraph{Paper organization}
The rest of this paper,is organized as follows. Section \ref{sec:Stochastic} presents the stochastic epidemic modeling framework underpinning the analysis. In Section~3, we develop the Zika virus transmission model, beginning with a compartmental representation, followed by a simplified version, a base model, and finally an extended model capturing more detailed dynamics. Section~4 addresses the estimation of unobservable states using Kalman filtering techniques for conditionally Gaussian state-space models. Section~5 discusses the numerical results, including parameter specification and initialization procedures for the extended model and Kalman filter. An appendix contains a  list of  notations and provides technical details and proofs that were removed from the main text.

\section{Stochastic Epidemic Modeling Framework}
\label{sec:Stochastic}

Following the framework introduced in \cite{Kamkumo2}, we briefly recapitulate the core components of the stochastic compartmental model relevant to the current work. We consider a structured population model in which a closed population of constant size $N \in \mathbb{N}$ is subdivided into $d \in \mathbb{N}$ compartments, each representing a subpopulation with a distinct epidemiological status. The model incorporates $K \in \mathbb{N}$ distinct transition mechanisms between these compartments.

Let $T > 0$ denote a fixed time horizon. The stochastic processes describing the model are defined on a complete filtered probability space $(\Omega, \mathcal{F}, \mathbb{F}, \mathbb{P})$, where the filtration $\mathbb{F} = (\mathcal{F}_t)_{t \in [0,T]}$ satisfies the usual conditions of right-continuity and completeness. Let $X = (X(t))_{t \in [0,T]}$ denote a $\mathbb{Z}^d$-valued  process, where each component $X^i(t) \in \{0, \ldots, N\}$ represents the number of individuals in compartment $i \in \{1, \ldots, d\}$ at time $t$. The filtration $\mathbb{F}$ is assumed to be generated by $X$, i.e., $\mathcal{F}_t = \sigma(X(s): 0 \leq s \leq t)$.

\paragraph{Microscopic Dynamics}

Let $\Count_k(t)$ denote the cumulative number of transitions of type $k \in \{1,\ldots,K\}$ that have occurred in the interval $[0, t]$. The evolution of the process $X(t)$ can then be expressed as
\begin{align} \label{stateX}
    X(t) = X(0) + \sum_{k=1}^{K} \xi_k \, \Count_k(t),
\end{align}
where each $\xi_k \in \mathbb{Z}^d$ is a transition vector encoding the change in state caused by a single transition of type $k$. Specifically, $\xi_k^i = +1$ if the transition causes an inflow into compartment $i$, $\xi_k^i = -1$ if it causes an outflow from compartment $i$, and $\xi_k^i = 0$ otherwise. This formulation presumes that at most one individual is transferred between compartments during a single transition event.

A standard assumption in many compartmental epidemic models is that the transitions are independent and memoryless. Accordingly, we model the transition counting processes $\Count_1, \ldots, \Count_K$ as mutually independent, time-inhomogeneous Poisson processes with intensity functions $\lambda_k(t, X(t))$, which depend on the current time and system state.
To express these processes in terms of homogeneous Poisson processes, let $\Pi_k(t)$ denote a standard unit-rate Poisson process for each $k = 1, \ldots, K$. Then, define the stochastic time-change
\[
\TimCh_k(t) := \int_0^t \lambda_k(s, X(s))\,ds.
\]
The non-homogeneous counting process can then be represented as
$\Count_k(t) = \Pi_k(\TimCh_k(t))$. 
Substituting into \eqref{stateX}, the state dynamics of the system can be rewritten as
\begin{align} \label{stateX_Poisson}
	X(t) = X(0) + \sum_{k=1}^{K} \xi_k \, \Pi_k\left( \int_0^t \lambda_k(s, X(s))\,ds \right), \quad { X(0) = x_0.}
\end{align}

This stochastic representation defines a continuous-time Markov chain (CTMC) over the discrete state space $\{0, \ldots, N\}^d$, with time-dependent transition intensities governed by the system’s current state.

\paragraph{Macroscopic  Dynamics  -- Large Population Limits}

For large population sizes $N$, the stochastic dynamics of the continuous-time Markov chain (CTMC) described in \eqref{stateX_Poisson} can be approximated by a continuous-state diffusion process.  This approach is based on the study of the asymptotic behavior of a properly rescaled CTMC  for $N\to\infty$  and the application of a functional  law of large numbers and  central limit theorem, see Britton and Pardoux ~\cite[Chapter 2, Section 2.2-2.3]{BrittonPardoux2019}, Anderson and Kurtz \cite[Chapter 1, Section 3.2]{anderson2011continuous}, Ethier and Kurtz \cite[Chapter 4, Section 7]{ethier2009markov},  Guy et al.~\cite{guy2015approximation}, and \cite[Section 3.4]{Kamkumo2}. 
There it is shown that under suitable regularity conditions, the CTMC process converges weakly, as $N \to \infty$, to a continuous-state diffusion process $X^D = (X^D(t))_{t \in [0 , T]}$ governed by the following  system of stochastic differential equations (SDEs) 
\begin{equation} \label{DA1}
	dX^D(t) = f_X(t, X^D(t))\,dt + \sigma_X(t, X^D(t))\,dW(t), \quad X^D(0) = x_0,
\end{equation}
where $W(t)$ is a $K$-dimensional standard Brownian motion. The drift and diffusion terms are given by
\begin{align} \label{DA_drift_diffusion}
	\begin{split}
		f_X(t, x) &= \sum_{k=1}^{K} \xi_k\, \lambda_k(t, x), \quad \text{and}\quad 
		\sigma_X(t, x) = \left( \xi_1 \sqrt{\lambda_1(t, x)},\, \ldots,\, \xi_K \sqrt{\lambda_K(t, x)} \right).
	\end{split}
\end{align}

This limiting system captures the macroscopic (aggregate) behavior of the epidemic under stochastic fluctuations and is commonly referred to as the diffusion approximation of the underlying CTMC.

In practical applications such as statistical inference or real-time filtering, it is often preferable to work with a time-discretized version of the dynamics, since observations (e.g., incidence data, case reports) are typically available at discrete time intervals (e.g., daily or weekly). To this end, we partition the time interval $[0,T]$ into $N_t \in \mathbb{N}$ equally spaced intervals of length $\Delta t = T / N_t$, and define the discrete time points $t_n = n \Delta t$ for $n = 0, \ldots, N_t$.Applying the Euler–Maruyama discretization scheme to the diffusion system \eqref{DA1} yields the following stochastic recursion with the initial value $X^D_0 = x_0$:
\begin{align} \label{Discrete_Dyn1}
	X^D_{n+1} = X^D_n + f_X(t_n, X^D_n)\, \Delta t + \sigma_X(t_n, X^D_n)\, \sqrt{\Delta t}\, \mathcal{E}_{n+1},
\end{align}
where $X^D_n \approx X^D(t_n)$ denotes the discrete-time approximation of the state at time $t_n$, and $\{\mathcal{E}\}_{n=1}^{N_t}$ is a sequence of independent standard normally distributed vectors in $\mathbb{R}^K$, i.e., $\mathcal{E}_n \sim \mathcal{N}(0_K, \mathbb{I}_K)$.

This discrete-time representation will serve as the foundation for the development of inference procedures under partial observation in subsequent sections.

\paragraph{Models with Hidden States}
\label{sec:Hiddenstates}
In realistic Zika epidemic scenarios, not all components of the population can be directly observed. For instance, asymptomatic human infections and unmonitored mosquito infections are typically hidden from surveillance systems. To incorporate this partial observability, we partition the full epidemic state vector into hidden and observable components.

Suppose the total system is described by a $d$-dimensional state vector $X^D_n$ at discrete time $t_n$, as in \eqref{Discrete_Dyn1}. We assume that $d_1 < d$ of these compartments are unobservable (latent, hidden), while the remaining $d_2 = d - d_1$ compartments are observable .

We decompose the state as
\[
X^D_n = \begin{pmatrix}
Y_n \\ Z_n
\end{pmatrix},
\]
where $Y_n \in \mathbb{R}^{d_1}$ contains the hidden states (e.g., asymptomatic human carriers ), and $Z_n \in \mathbb{R}^{d_2}$ represents the observable states (e.g., confirmed symptomatic human infections).

The evolution of this system can be described by a coupled stochastic recursion with initial values $Y_0= y_0, Z_0 = z_0$:
\begin{align} \label{state_YZ_Zika}
\begin{split}
Y_{n+1} &= f(n, Y_n, Z_n) + \sigma(n, Y_n, Z_n) \, \mathcal{E}^1_{n+1} + g(n, Y_n, Z_n) \, \mathcal{E}^2_{n+1}, \\
Z_{n+1} &= h(n, Y_n, Z_n) + \ell(n, Y_n, Z_n) \, \mathcal{E}^2_{n+1},
\end{split}
\end{align}
where $\mathcal{E}^1_{n}$ and $\mathcal{E}^2_{n}$ are independent sequences of standard normally distributed  vectors:
\[
\mathcal{E}^1_n \sim \mathcal{N}(0_{k_1}, \mathbb{I}_{k_1}), \quad \mathcal{E}^2_n \sim \mathcal{N}(0_{k_2}, \mathbb{I}_{k_2}),
\]
with $k_1 + k_2 = K$, the total number of stochastic drivers in the system.The noise vector $\mathcal{E}^1_n$ captures randomness exclusively affecting hidden compartments , while $\mathcal{E}^2_n$ influences both hidden and observable states. The coefficient functions  are defined on $\mathcal{G}=\{0,\ldots,N_t\} \times \mathbb{R}^{d_1} \times \mathbb{R}^{d_2}$ with 
$f: \mathcal{G} \to \mathbb{R}^{d_1}$, 
$h: \mathcal{G} \to \mathbb{R}^{d_2}$, 
$\sigma: \mathcal{G} \to \mathbb{R}^{d_1 \times k_1}$, 
$g: \mathcal{G} \to   \mathbb{R}^{d_1 \times k_2}$,  
$\ell: \mathcal{G} \to  \mathbb{R}^{d_2 \times k_2}$.
and encode the drift and diffusion terms  of the system. They will be specified in Appendix B based on the structure of the Zika transmission model presented in the next section.

\section{Zika Models}
\label{sec:model}
 \subsection{Introduction}
This section develops a class of stochastic models designed to describe the dynamics of Zika virus transmission. 
As a starting point, Subsection  \ref{compartmental} discusses the development of a stochastic compartmental model that integrates both human and vector populations. The model extends classical epidemic formulations by explicitly accounting for asymptomatic and undocumented infections, waning immunity, and vector-host interactions, which are key features influencing the spread of ZIKV. The modelling assumptions are also captured in this subsection.

In Subsection \ref{sec:symplistic}, we introduce a foundational compartmental model capturing key features of Zika transmission dynamics. A notable challenge in Zika epidemiology is the high proportion of asymptomatic or mildly symptomatic cases, many of which remain undetected due to limited testing or clinical presentation. In addition, some people who experience symptoms may not seek a formal diagnosis or healthcare services, further contributing to under-reporting. This leads to a substantial discrepancy between reported and actual infection counts, which must be addressed within the model structure. The base model incorporates these hidden processes explicitly.

Among the hidden compartments are those representing individuals who have recovered from infection and are assumed to be temporarily immune. The transitions into these recovered compartments (e.g., from symptomatic infection) are often observable, as recovery is typically recorded. However, the transition out of these states, specifically, the gradual loss of immunity and return to susceptibility is typically unobserved in surveillance systems. For Zika virus, evidence suggests that recovery confers full immunity for a limited period (see, e.g., \cite{griffin}), after which immunity wanes and susceptibility may return.

To incorporate this biological reality and leverage the available information on recovery timing, we refine the base model in Subsection \ref{sec:Extended}. The extended model introduces  cascades of recovered compartments, which stratify recovered individuals by the duration since recovery, often referred to as ``recovery age''. These cascade compartments allow the model to track the gradual transition from full immunity to susceptibility in a structured manner, based on the time elapsed since infection resolution. By modeling this delay explicitly, the extended model supports more reliable estimation of underreported cases which are commonly referred to as ``dark figures'' by capturing both the observable inflow into recovery and the unobservable outflow back to susceptibility.

\subsection{Compartmental Modeling of Zika Transmission}  \label{compartmental}  

In this section, we introduce a stochastic compartmental model designed to capture the transmission dynamics of the ZIKV, with particular emphasis on unobserved epidemiological states. Motivation comes from two critical challenges associated with ZIKV surveillance and control. First, a substantial fraction of infections are asymptomatic or present with only mild symptoms, resulting in low test-seeking behavior and, consequently, underreporting. Second, a subset of individuals can undergo an informal or unofficial diagnosis, for example, through self-testing or peer confirmation, without the cases being included in official health records. These factors contribute to significant uncertainty about the true burden of infection within the population.

Given the often benign course of ZIKV in symptomatic cases and the negligible associated mortality, a large pool of undetected cases can accumulate without triggering major public health responses. However, if this hidden infectious population grows unchecked, it can become a significant driver of transmission, undermining disease control efforts and possibly seeding new outbreaks. Therefore, accounting for these unobservable compartments is critical for effective modeling, forecasting, and policy formulation.
Empirical studies have also shown that immunity to ZIKV is not necessarily life-long. Although initial infection induces a neutralizing antibody response, this immune protection tends to decrease over time, approximately within 24 months. \cite{magal,griffin}.

\paragraph{Model Structure}
To model the above features, we start with a classic SIRS (Susceptible-Infected-Recovered-Susceptible) model, which is often used to study recurrent viral infections where immunity wanes over time. However, to accurately reflect the biological transmission mechanism of ZIKV,  we extend the standard SIRS model to include the mosquito vector population to account for vector–host interactions. Furthermore, this model exclusively considers sexual transmission from men to women, as the number of transmissions from women to men is low \cite{CDC1,Davi}.

The human population is stratified into male and female subgroups, each subdivided into compartments reflecting disease progression and immunity status: susceptible, exposed,  detected and undetected asymptomatic infected, symptomatic infected, and recovered. The mosquito vector population is modeled similarly  using susceptible, exposed and infectious compartments, incorporating seasonality.

\paragraph{Model Assumptions}
Our proposed Zika epidemic models operate under the following key assumptions with important practical justifications. The \textit{human population size is held constant at $N$}, neglecting births and natural deaths - a reasonable simplification for short-term epidemic modeling where Zika-related mortality is negligible. The \textit{mosquito birth rate  varies seasonally} according to a time-dependent function $B_v(t)$ capturing realistic temporal fluctuations in vector abundance driven by climate and environmental conditions.
Both the \textit{human and vector populations are homogeneously mixed} within and between disease compartments, which means, in particular, that every susceptible individual has the same probability of encountering infectious individuals — a standard epidemiological assumption that simplifies the transmission dynamics. Finally, all model parameters  are \textit{time-homogeneous}, with the exception of mosquito birth rate which captures seasonal variation concentrated in vector population dynamics where it matters most. These assumptions create a computationally feasible framework suitable for epidemic forecasting and control optimization in resource-limited settings while capturing the essential biological features of Zika transmission.

\subsection{Simplified Zika Model}
\label{sec:symplistic}
Zika virus presents a unique epidemiological challenge due to its dual modes of transmission: vector and direct human-to-human transmission through sexual contact \cite{Lessler, Mead}. Traditional vector host models are insufficient to capture the full scope of ZIKV dynamics, particularly in settings where sexual transmission plays a non-negligible role in sustaining outbreaks \cite{Yakob}. To address this, we construct a transmission framework that integrates vector-borne and sexual transmission pathways of the Zika virus.  Here we start with a simplified model capturing only the key properties.  In subsequent subsections, we include a more refined compartmental structure and introduce explicit \emph{exposed} compartments for male and female individuals and the vector to capture latency period between infection and infectiousness.  Furthermore, in Section \ref{sec:Extended}, we extend the model to include so-called cascade compartments in order to capture the transition from complete to waning immunity after recovery depending on the time elapsed since recovery. 

\begin{figure}[h]
	\centering
	\includegraphics[width= 0.7 \textwidth]{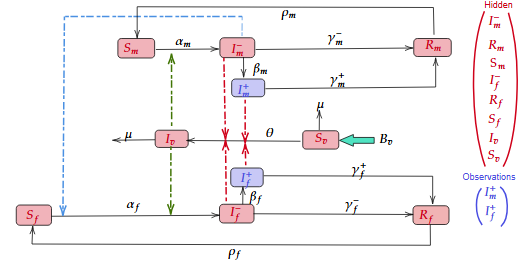}
    \label{fig:model}
	\caption{Simplified Zika virus transmission dynamics considering gender (male human$ ( {m})$, female human$ ({f}$)) and vector (${v}$).The blue compartments indicate the observable states, while the red compartments are the unobserved states (dark figures). The green dashed lines indicate the vector-to-human transmission while red lines indicate the human-to-vector transmission and blue lines indicate the human-to-human transmission. A black solid arrow shows the transition from one compartment to the next. }
	\label{model1}
\end{figure} 

The human compartments are subdivided into susceptible (\(S_m, S_f\)), infected but undetected (\(I^-_m, I^-_f\)), confirmed infected (\(I^+_m, I^+_f\)), and recovered classes, which are hidden (\(R_m, R_f\))  as illustrated in Figure \ref{model1}. The mosquito population is modeled with susceptible (\(S_v\)) and infected (\(I_v\)) classes. The model tracks the transitions between these compartments driven by infection, recovery, sexual contact, and mosquito bites. Crucially, only a subset of infections are observed, reflecting the underreporting common in real-world surveillance data \cite{Oduyebo, Turmel}. This hybrid transmission structure and observability framework enable a more realistic representation of ZIKV transmission, supporting improved inference and evaluation of control strategies.

Susceptible males (\(S_m\)) and females (\(S_f\)) become infected by bites from infected mos\-qui\-toes  (\(I_v\)). Using the placeholder notation $\dagger=m,f$ the infection rate is modeled by 
\begin{align}
	\label{infect_rate1}
	\lambda_\dagger^{S\to I}=\lambda_\dagger^{S\to I}(S_\dagger,I_v)=\alpha_\dagger S_\dagger \frac{I_v}{\halfsat+I_v},
\end{align}
with some positive constant parameters $\alpha_\dagger,\kappa_\dagger$. The reasoning behind this choice is as follows. In the CTMC modeling framework sketched in Section \ref{sec:Stochastic} the expected number of transitions from $S_\dagger$ to $I_\dagger$, i.e., the expected number of infections,  within a small interval of time of length $\Delta t$ is given by  $\lambda_\dagger^{S\to I} \Delta t$. Denote  by $A_\dagger$ the random event  that a single susceptible person from $S_\dagger$ becomes infected within  $\Delta t$ time units, and its probability by $\P(A_\dagger)$. Assuming a homogeneously mixed population and that bites and the subsequent development of infection among susceptible individuals are stochastically independent, we are within the framework of a Bernoulli scheme. Thus, the  random number of infections in $\Delta t$ is binomially distributed with parameters $S_\dagger$ and $\P(A_\dagger)$, and expectation $S_\dagger \P(A_\dagger)$. Therefore, we find for the above considered expectation   
\begin{align}
	\label{infect_rate2}
	\lambda_\dagger^{S\to I} \Delta t=S_\dagger \P(A_\dagger).
\end{align}
$\P(A_\dagger)$ is proportional to $\Delta t$ and the probability that a single susceptible is bitten by an infected  mosquito. For the latter we use Holling's Type II function $ \frac{I_v}{\halfsat+I_v}$, in which $\halfsat>0$ is the so-called ``half-saturation parameter''.   That means, for  $I_v=\halfsat$ we have $\frac{I_v}{\halfsat+I_v}=\frac{1}{2}$. This function approaches $0$ and $1$ for $I_v\to 0$ and $I_v\to\infty$, respectively. For small numbers of infected mosquitoes $I_v$, it increases almost linearly with $I_v$, and for large  $I_v$ it captures saturation effects, since it tends to $1$  for $I_v\to\infty$. 

Summarizing, we have $\P(A) \sim \frac{I_v}{\halfsat+I_v} \Delta t$. Denoting the factor of proportionality by $\alpha_\dagger$ and substituting into \eqref{infect_rate2} we obtain \eqref{infect_rate1}.

  Following infection, individuals may remain undetected (\(I^-_m, I^-_f\)), either due to an asymptomatic presentation or limited access to tests. Some asymptomatic infections are detected through surveillance or clinical diagnosis  controlled by the parameters \(\beta_m, \beta_f\)  and recorded as confirmed cases (\(I^+_m, I^+_f\)). Recovery occurs in both states, such that both observed and hidden infected individuals move to the recovery compartment (\(R_m, R_f\)), with transitions governed by rates \(\gamma^-_m, \gamma^+_m, \gamma^-_f,\) and \(\gamma^+_f\). Individuals who recover from infection (\(R_m, R_f\)) are assumed to acquire temporary immunity. However, over time, immunity can wane due to a natural decline in protective antibodies or lack of long-term immunity, thus returning individuals to the susceptible pool (\(S_m, S_f\)). 
In the proposed model, the parameter \(\rho\) represents the rate at which individuals lose immunity and return to the susceptible class. This parameter applies separately to male (\(\rho_m\)) and female (\(\rho_f\)) populations, reflecting possible differences in immune waning dynamics between the sexes. 
A notable feature of this model is the inclusion of unidirectional sexual transmission from males to females. This mechanism reflects empirical findings that ZIKV can persist for a long time in semen \cite{Turmel, Mead}. Vector dynamics involves recruitment of susceptible mosquitoes (\(S_v\)) at rate \(B_v\)  which is time-dependent to capture seasonality effects. 
 $B_v $ is considered as the relative birth rate at which new mosquitoes are produced compared to their mortality rate.

Mosquitoes can be infected when biting infectious humans. The associated transmission rate is modeled by the usual contact term which is the product of $S_v$, the fraction of infectious humans and and a scaling factor $\theta>0$, and given by     $\theta S_v{(I^-_m+I^+_m+I^-_f+I^+_f)}/{N}$, where N is the total population size.
 Infected vectors (\(I_v\)) can then transmit the virus to new human hosts before dying at rate \(\mu\).  
   
Table \ref{model1} outlines the structure and all possible transitions of the  model. The system is composed of both observable and unobservable components, which are grouped into two state vectors. The vector of latent (hidden) states is given by 
$Y = (I^{-}_m, R_m,S_m,I_f^- R_f,S_f, I_v, S_v)^\top$, while the vector of observable states is $Z = (I^{+}_m, I^{+}_f)^\top$. This results in a total of $d = 10$ distinct states and $K = 14$ possible transition pathways.\\
Within the hidden states, we have the \textit{fully hidden} states ($S_m$, $S_f $ $I_v$, and $S_v$) which are not directly observable at either entry or exit. There are also the \textit{partially hidden} states ($I^{-}_m$ and $I^-_f$) where the  inflow from $S_m,S_f$ and the outflow to $R_m,R_f$ cannot be tracked, but  the outflow to  $I^{+}_m,I^+_f$ is observed.  $R_m$ and $R_f$  are also partially hidden, since the inflow from $I_m^+,I_f^+$ is observable whereas the outflow to $S_m,S_f$ and the inflow from   $I_m^-,I_f^-$ cannot be observed.

\begin{table}[th]
\footnotesize
\begin{tabularx}{\linewidth}{ |r| l|| l|| l|}
\hline
     K &Transition & Transition vectors  &   Intensities    \\ 
\hline
\hline
    1   & Infection  to $I^{-}_m$ &\phantom{-}1, 0,-1,\phantom{-}0,\phantom{-}0,\phantom{-}0,\phantom{-}0,\phantom{-}0,\phantom{-}0,\phantom{-}0 &  $\alpha_m S_m \frac{I_v}{\halfsat+I_v} = \alpha_m  Y^3\frac{Y^7}{\halfsat+Y^7}$ \\ 
   2 & Testing from $I^{-}_m$ & -1,\phantom{-}0,\phantom{-}0,\phantom{-}0,\phantom{-}0,\phantom{-}0,\phantom{-}0,\phantom{-}0,\phantom{-}1,\phantom{-}0 &  $\beta_{m}^- I_m = \beta_{m}Y^1 $   \\
    3 &Recovery from $ I ^{-}_m$ & -1,\phantom{-}1,\phantom{-}0,\phantom{-}0,\phantom{-}0,\phantom{-}0,\phantom{-}0,\phantom{-}0,\phantom{-}0,\phantom{-}0 & $ \gamma_m^- I^{-}_m = \gamma_m^- Y^1 $\\ 
      4 &Recovery from $ I ^{+}_m$ & \phantom{-}0,\phantom{-}1,\phantom{-}0,\phantom{-}0,\phantom{-}0,\phantom{-}0,\phantom{-}0,\phantom{-}0,-1,\phantom{-}0 &  $ \gamma_m^+ I^{+}_m =\gamma_m^+ Z^1 $  \\
 5  & Loss of immunity i $R_m$ &\phantom{-}0,-1,\phantom{-}1,\phantom{-}0,\phantom{-}0,\phantom{-}0,\phantom{-}0,\phantom{-}0,\phantom{-}0,\phantom{-}0  & $ \rho_m R_m = \rho_m Y^2  $    \\ 
 \midrule
    6 &Infection  to $I^{-}_f$  & \phantom{-}0,\phantom{-}0,\phantom{-}0,\phantom{-}1,\phantom{-}0,-1,\phantom{-}0,\phantom{-}0,\phantom{-}0,\phantom{-}0 &  $\big(\alpha_f   \frac{I_v}{\halfsat+I_v} + \omega  \frac{I^-_m+I^+_m}{N_m}\big)S_f $  \\ 
     &  &  &  $ =(  \alpha_f   \frac{Y^7}{\halfsat+Y^7} + \omega  \frac{Y^1+Z^1}{N_m}) Y^3$  \\ 
    7 & Testing from $I^{-}_f$     &\phantom{-}0,\phantom{-}0,\phantom{-}0,-1,\phantom{-}0,\phantom{-}0,\phantom{-}0,\phantom{-}0,\phantom{-}0,\phantom{-}1& $\beta_{f}^- I_f = \beta_{f}Y^4 $ \\
    8 & Recovery from $ I ^{-}_f$  &\phantom{-}0,\phantom{-}0,\phantom{-}0,-1,\phantom{-}1,\phantom{-}0,\phantom{-}0,\phantom{-}0,\phantom{-}0,\phantom{-}0 &  $\gamma_f^- I^{-}_f = \gamma_f^- Y^4 $ \\ 
    9 &Recovery from $ I ^{+}_f$ & \phantom{-}0,\phantom{-}0,\phantom{-}0,\phantom{-}0,\phantom{-}1,\phantom{-}0,\phantom{-}0,\phantom{-}0,\phantom{-}0,-1 & $\gamma_f^+ I^{+}_f =\gamma_f^+ Z^2 $  \\    
  10 &Loss of immunity in $R_f $   & \phantom{-}0,\phantom{-}0,\phantom{-}0,\phantom{-}0,-1,\phantom{-}1,\phantom{-}0,\phantom{-}0,\phantom{-}0,\phantom{-}0&  $ \rho_m R_f = \rho_f Y^5  $    \\
  \midrule
    11 &Infection of  vector  &\phantom{-}0,\phantom{-}0,\phantom{-}0,\phantom{-}0,\phantom{-}0,\phantom{-}0,\phantom{-}1,-1,\phantom{-}0,\phantom{-}0& 
     { $\theta S_v{(I^-_m+I^+_m+I^-_f+I^+_f)}/{N}$} \\[0.5ex]
      &  &&
     $= \theta Y^8 {  {(Y^1+Z^1)+ Y^4_f+Z^2)/{N}}}$ \\
     12 &Birth of  vector  &\phantom{-}0,\phantom{-}0,\phantom{-}0,\phantom{-}0,\phantom{-}0,\phantom{-}0,\phantom{-}0,\phantom{-}1,\phantom{-}0,\phantom{-}0 & $   B_v(I_v+S_v)  =B_v(Y^{9}+Y^{10})$\\
      13 &Death of susceptible vector  &\phantom{-}0,\phantom{-}0,\phantom{-}0,\phantom{-}0,\phantom{-}0,\phantom{-}0,\phantom{-}0,-1,\phantom{-}0, \phantom{-}0& $ \mu S_v = \mu Y^{10}$ \\
       14 &Death of infected   vector  &\phantom{-}0,\phantom{-}0,\phantom{-}0,\phantom{-}0,\phantom{-}0,\phantom{-}0,-1,\phantom{-}0,\phantom{-}0,\phantom{-}0& $ \mu I_v = \mu Y^9 $ \\
\hline

\caption{Transition vectors and intensities of the   simplified Zika model, with state processes $Y = (I^{-}_m, R_m, S_m, I^{-}_f, R_f, S_f, I_v, S_v)^\top$ and $Z = (I^{+}_m, I^{+}_f)^\top$;  total number of compartments $d= 10 $ and number of transitions, $K=14$ } 
\label{Tab1}
\end{tabularx}
\end{table}

  The stochastic evolution of the system is described by a non-homogeneous continuous-time Markov chain (CTMC), whose transition directions and corresponding intensity functions are specified in Table \ref {Tab1}. The CTMC is approximated by a diffusion process derived via a functional central limit theorem applied to the underlying Poisson processes driving the transitions. This results in a set of stochastic differential equations (SDE) system.  
\footnotesize{
\begin{align*}
d I_m^{-} &= \left( { \alpha_m S_m \frac{I_v}{\halfsat+I_v}} - \beta_m I_m^{-} - \gamma_m^- I_m^{-} \right) dt 
+ \sqrt{{ \alpha_m S_m \frac{I_v}{\halfsat+I_v}} } \, dW_1 
- \sqrt{\beta_m I_m^{-}} \, dW_2 
- \sqrt{\gamma_m^- I_m^{-}} \, dW_3, \\[1ex]
d R_m &= \left( \gamma_m^- I_m^{-} + \gamma_m^+ I_m^{+} - \rho_m R_m \right) dt 
+ \sqrt{\gamma_m^- I_m^{-}} \, dW_3 
+ \sqrt{\gamma_m^+ I_m^{+}} \, dW_4 
- \sqrt{\rho_m R_m} \, dW_5, \\[1ex]
d S_m &= \left( -{\alpha_m S_m \frac{I_v}{\halfsat+I_v}}  + \rho_m R_m \right) dt 
- \sqrt{{\alpha_m S_m \frac{I_v}{\halfsat+I_v}} } \, dW_1 
+ \sqrt{\rho_m R_m} \, dW_5, \\[1ex]
d I_f^{-} &= \left( \alpha_f S_f { \frac{I_v}{\halfsat+I_v}}  + \omega S_f I_m^{-} - \beta_f I_f^{-} - \gamma_f^- I_f^{-} \right) dt \\
& \hspace*{1em}+ \sqrt{\alpha_f S_f { \frac{I_v}{\halfsat+I_v}} } \, dW_6 
+ \sqrt{\omega S_f I_m^{-}} \, dW_7 
- \sqrt{\beta_f I_f^{-}} \, dW_8 
- \sqrt{\gamma_f^- I_f^{-}} \, dW_9, \\[1ex]
d R_f &= \left( \gamma_f^- I_f^{-} + \gamma_f^+ I_f^{+} - \rho_f R_f \right) dt 
+ \sqrt{\gamma_f^- I_f^{-}} \, dW_9 
+ \sqrt{\gamma_f^+ I_f^{+}} \, dW_{10} 
- \sqrt{\rho_f R_f} \, dW_{11}, \\[1ex]
d S_f &= \left( -\alpha_f S_f { \frac{I_v}{\halfsat+I_v}} - \omega S_f I_m^{-} + \rho_f R_f \right) dt 
- \sqrt{\alpha_f S_f { \frac{I_v}{\halfsat+I_v}}} \, dW_6 
- \sqrt{\omega S_f I_m^{-}} \, dW_7 
+ \sqrt{\rho_f R_f} \, dW_{11}, \\[1ex]
d I_v &= \Big( \theta S_v {  \frac{I^-_m+I^+_m+I^-_f+I^+_f}{N}} - \mu I_v \Big) dt 
+ \sqrt{\theta S_v{  \frac{I^-_m+I^+_m+I^-_f+I^+_f}{N}}} \, dW_{12} 
- \sqrt{\mu I_v} \, dW_{13}, \\[1ex]
d S_v &= \Big( B_v (S_v+I_v) - \theta S_v {  \frac{I^-_m+I^+_m+I^-_f+I^+_f}{N}}  - \mu S_v \Big) dt 
\\& \hspace*{1em}
+ \sqrt{B_v} \, dW_{14} 
- \sqrt{\theta S_v {  \frac{I^-_m+I^+_m+I^-_f+I^+_f}{N}} } \, dW_{12} 
- \sqrt{\mu S_v} \, dW_{15}, \\[1ex]
d I_m^{+} &= \left( \beta_m I_m^{-} - \gamma_m^+ I_m^{+} \right) dt 
+ \sqrt{\beta_m I_m^{-}} \, dW_2 
- \sqrt{\gamma_m^+ I_m^{+}} \, dW_4, \\[1ex]
dI_f^{+} &= \left( \beta_f I_f^{-} - \gamma_f^+ I_f^{+} \right) dt 
+ \sqrt{\beta_f I_f^{-}} \, dW_8 
- \sqrt{\gamma_f^+ I_f^{+}} \, dW_{10}.
\end{align*}}

\normalsize
Subsequently, these diffusion approximations are discretized in time using the Euler–Maruyama scheme, as in Equations \ref{Discrete_Dyn1} leading to a discrete-time recursive formulation which governs the evolution of the state vectors $Y$ and $Z$ as described in Section \ref{sec:Stochastic}. The functions and coefficients used in these formulations are provided in Appendix \ref{app:sym_coeff}.

\subsection{Base Zika Model}
\label{sec:Base}
In this section, we present an expanded version of the  ZIKV transmission model. It includes a more refined compartmental structure to cater for the biological and epidemiological characteristics of the disease. In contrast to the simplified model in Section \ref{sec:Stochastic}, this formulation introduces explicit \emph{exposed} compartments for both male and female individuals $(E_m,E_f)$ and the vector ($ E_v $), capturing the latency period between infection and infectiousness. Additionally, the infected population is subdivided into symptomatic ($I^s_m, I^s_f$) and asymptomatic classes. The asymptomatic class is further stratified by whether or not the infection is detected ($I^+_m, I^+_f$) or not ($I^-_m, I^-_f$). This separation allows for a clearer distinction between observed and hidden dynamics. The recovery process is similarly disaggregated, with distinct compartments for individuals recovering from observed compartments  $(R^{2-}_m, R^{2-}_f)$ which are partially hidden) versus those recovering from hidden compartments ($R^{1-}_m, R^{1-}_f$). These additions enhance the model’s ability to account for underreporting, sex-specific disease progression, and vector-host transmission dynamics.

\begin{figure}[ht]
	\centering
	\includegraphics[width= 0.8 \textwidth]{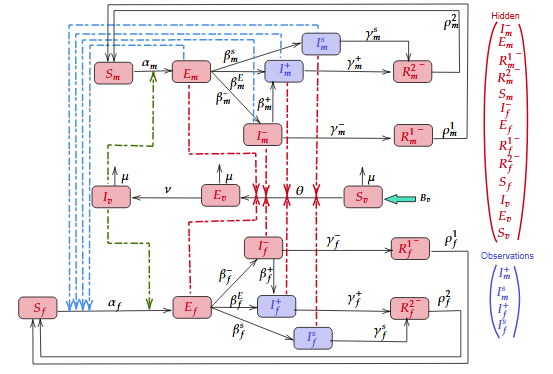}
    
	\caption{ Base Zika model:  virus transmission dynamics considering gender (male human$ ({m})$, female human $ ({f})$ and vector $({v})$.The blue compartments indicate the observable states for the males and females, while the red compartments are the unobserved states (dark figures).  The green dashed lines indicate the vector-to-human transmission while red lines indicate the human-to-vector transmission and blue lines indicate the human-to-human transmission. The transition from one compartment to the next is shown by the black solid arrow. }
	\label{model2}
\end{figure}

In the base model, males and females are also represented in parallel structures, each beginning in the susceptible class ($S_m$, $S_f$).  When an infectious mosquito comes into contact with susceptible human, the male and females, become exposed ($E_m$, $E_f$) at rates  $\alpha_\dagger S_\dagger \frac{I_v}{\halfsat+I_v}, \dagger=m,f$.  These exposed individuals at earlier stages  are not yet infectious to both the human and vectors before virema \cite{CDC2}. At later stages of exposure, shortly before the onset of the symptoms, the exposed individual becomes infectious to both human

 and vector.   The transition to the infected state after an intrinsic incubation period, 
ranges from 3 to 14 days, with an average estimate of approximately 7.5 days \cite{Lessler}.

\begin{table}[H]
	\footnotesize
	\begin{tabularx}{\linewidth}{ |r| l|| l|| l||}
		\hline
		K &Transition & Transition vectors  &   Intensities    \\ 
		\hline
		\hline
		1  & Exposed from $S_m $ &\phantom{-}0,\phantom{-}1,\phantom{-}0,\phantom{-}0,-1,\phantom{-}0,\phantom{-}0,\phantom{-}0,\phantom{-}0,\phantom{-}0,\phantom{-}0,\phantom{-}0,\phantom{-}0,\phantom{-}0,\phantom{-}0,\phantom{-}0,\phantom{-}0 &  $\alpha_m S_m \frac{I_v}{\halfsat+I_v} $ \\ 
		2 & Infection of  male to $I^{-}_m$ &\phantom{-}1 -1,\phantom{-}0,\phantom{-}0,\phantom{-}0,\phantom{-}0,\phantom{-}0,\phantom{-}0,\phantom{-}0,\phantom{-}0,\phantom{-}0,\phantom{-}0,\phantom{-}0,\phantom{-}0,\phantom{-}0,\phantom{-}0,\phantom{-}0 & $\beta_{m}^- E_m $\\
		3 & Recovery from $I^{-}_m$ & -1,\phantom{-}0,\phantom{-}1,\phantom{-}0,\phantom{-}0,\phantom{-}0,\phantom{-}0,\phantom{-}0,\phantom{-}0,\phantom{-}0,\phantom{-}0,\phantom{-}0,\phantom{-}0,\phantom{-}0,\phantom{-}0,\phantom{-}0,\phantom{-}0 &  $\gamma_{m}^-  I^{-}_m  $\\
		4 &Loss of  immunity in $ R ^{1-}_m$ &\phantom{-}0,\phantom{-}0,-1,\phantom{-}0,\phantom{-}1,\phantom{-}0,\phantom{-}0,\phantom{-}0,\phantom{-}0,\phantom{-}0,\phantom{-}0,\phantom{-}0,\phantom{-}0,\phantom{-}0,\phantom{-}0,\phantom{-}0,\phantom{-}0 & $ \rho^1 _m R^{1-}_m  $\\
		5 &Loss of immunity in $R ^{2-}_m$ & \phantom{-}0,\phantom{-}0,\phantom{-}0,-1,\phantom{-}1,\phantom{-}0,\phantom{-}0,\phantom{-}0,\phantom{-}0,\phantom{-}0,\phantom{-}0,\phantom{-}0,\phantom{-}0,\phantom{-}0,\phantom{-}0,\phantom{-}0,\phantom{-}0 &  $\rho^2 _m R^{2-}_m $\\
		\hline
		6  & Exposed from $S_f$ &\phantom{-}0,\phantom{-}0,\phantom{-}0,\phantom{-}0,\phantom{-}0,\phantom{-}0,\phantom{-}1,\phantom{-}0,\phantom{-}0,-1,\phantom{-}0,\phantom{-}0,\phantom{-}0,\phantom{-}0,\phantom{-}0,\phantom{-}0,\phantom{-}0  &  $ (\alpha_f    \frac{I_v}{\halfsat+I_v}  + \omega  {neu \frac{\Isum_m}{N_m}}) S_f $\\
		7 &Infection of female to $I^{-}_f$ & \phantom{-}0,\phantom{-}0,\phantom{-}0,\phantom{-}0,\phantom{-}0,\phantom{-}1,-1,\phantom{-}0,\phantom{-}0,\phantom{-}0,\phantom{-}0,\phantom{-}0,\phantom{-}0,\phantom{-}0,\phantom{-}0,\phantom{-}0,\phantom{-}0  & $\beta_{f}^- E_f $\\
		8 & Recovery from $I^{-}_f $   &\phantom{-}0,\phantom{-}0,\phantom{-}0,\phantom{-}0,\phantom{-}0,-1,\phantom{-}0,\phantom{-}1,\phantom{-}0,\phantom{-}0,\phantom{-}0,\phantom{-}0,\phantom{-}0,\phantom{-}0,\phantom{-}0,\phantom{-}0,\phantom{-}0& $\gamma_{f}^-  I^{-}_f$\\
		9 & Loss of immunity in  $ R ^{1-}_f$  &\phantom{-}0,\phantom{-}0,\phantom{-}0,\phantom{-}0,\phantom{-}0,\phantom{-}0,\phantom{-}0,-1,\phantom{-}0,\phantom{-}1,\phantom{-}0,\phantom{-}0,\phantom{-}0,\phantom{-}0,\phantom{-}0,\phantom{-}0,\phantom{-}0 &  $\rho^1 _f R^{1-}_f   $\\
		10 &Loss of immunity in $R ^{2-}_f$  & \phantom{-}0,\phantom{-}0,\phantom{-}0,\phantom{-}0,\phantom{-}0,\phantom{-}0,\phantom{-}0,\phantom{-}0,-1,\phantom{-}1,\phantom{-}0,\phantom{-}0,\phantom{-}0,\phantom{-}0,\phantom{-}0,\phantom{-}0,\phantom{-}0 & $\rho^2_f R^{2-}_f $\\
		\hline 
		11 &Exposure of vector   & \phantom{-}0,\phantom{-}0,\phantom{-}0,\phantom{-}0,\phantom{-}0,\phantom{-}0,\phantom{-}0,\phantom{-}0,\phantom{-}0,\phantom{-}0,\phantom{-}1,\phantom{-}0,-1,\phantom{-}0,\phantom{-}0,\phantom{-}0,\phantom{-}0 &  $\theta  S_v    \frac\Isum{N}  $\\
		12 &Infection of  vector  &\phantom{-}0,\phantom{-}0,\phantom{-}0,\phantom{-}0,\phantom{-}0,\phantom{-}0,\phantom{-}0,\phantom{-}0,\phantom{-}0,\phantom{-}0,-1,\phantom{-}1,\phantom{-}0,\phantom{-}0,\phantom{-}0,\phantom{-}0,\phantom{-}0 & $ \nu E_v $\\
		
		13 &Birth of susceptible  vector  &\phantom{-}0,\phantom{-}0,\phantom{-}0,\phantom{-}0,\phantom{-}0,\phantom{-}0,\phantom{-}0,\phantom{-}0,\phantom{-}0,\phantom{-}0,\phantom{-}0,\phantom{-}0,\phantom{-}1,\phantom{-}0,\phantom{-}0,\phantom{-}0,\phantom{-}0  & $ B_v ( S_v  + E_v + I_v)$\\
		14 &Death of susceptible vector  &\phantom{-}0,\phantom{-}0,\phantom{-}0,\phantom{-}0,\phantom{-}0,\phantom{-}0,\phantom{-}0,\phantom{-}0,\phantom{-}0,\phantom{-}0,\phantom{-}0,\phantom{-}0,-1,\phantom{-}0,\phantom{-}0,\phantom{-}0,\phantom{-}0 & $ \mu S_v $\\
		15 &Death of exposed vector  &\phantom{-}0,\phantom{-}0,\phantom{-}0,\phantom{-}0,\phantom{-}0,\phantom{-}0,\phantom{-}0,\phantom{-}0,\phantom{-}0,\phantom{-}0,\phantom{-}0,-1,\phantom{-}0,\phantom{-}0,\phantom{-}0,\phantom{-}0,\phantom{-}0 & $ \mu E_v $\\
		16 &Death of infected vector  &\phantom{-}0,\phantom{-}0,\phantom{-}0,\phantom{-}0,\phantom{-}0,\phantom{-}0,\phantom{-}0,\phantom{-}0,\phantom{-}0,\phantom{-}0,-1,\phantom{-}0,\phantom{-}0,\phantom{-}0,\phantom{-}0,\phantom{-}0,\phantom{-}0 & $ \mu I_v $\\
		\hline
		17 & Infection of  male to $I^s_m$ & \phantom{-}0,-1,\phantom{-}0,\phantom{-}0,\phantom{-}0,\phantom{-}0,\phantom{-}0,\phantom{-}0,\phantom{-}0,\phantom{-}0,\phantom{-}0,\phantom{-}0,\phantom{-}0,\phantom{-}1,\phantom{-}0,\phantom{-}0,\phantom{-}0 & $\beta_{m}^s E_m $\\
		18 &  Testing of male to $I^{+}_m $&\phantom{-}0,-1,\phantom{-}0,\phantom{-}0,\phantom{-}0,\phantom{-}0,\phantom{-}0,\phantom{-}0,\phantom{-}0,\phantom{-}0,\phantom{-}0,\phantom{-}0,\phantom{-}0,\phantom{-}0,\phantom{-}1,\phantom{-}0,\phantom{-}0 &  $\beta_{m}^+ E_m  $\\
		19 &  Testing male from $I^-_m$ to $I^{+}_m $& -1,\phantom{-}0,\phantom{-}0,\phantom{-}0,\phantom{-}0,\phantom{-}0,\phantom{-}0,\phantom{-}0,\phantom{-}0,\phantom{-}0,\phantom{-}0,\phantom{-}0,\phantom{-}0,\phantom{-}1,\phantom{-}0,\phantom{-}0,\phantom{-}0 &  $\beta_{m}^E I^-_m  $\\
		20 & Recovery from $I^{s}_m $  & \phantom{-}0,\phantom{-}0,\phantom{-}0,\phantom{-}1,\phantom{-}0,\phantom{-}0,\phantom{-}0,\phantom{-}0,\phantom{-}0,\phantom{-}0,\phantom{-}0,\phantom{-}0,\phantom{-}0,-1,\phantom{-}0,\phantom{-}0,\phantom{-}0 &   $\gamma_{m}^s  I^{s}_m   $\\
		21 & Recovery from $I^{+}_m$    &\phantom{-}0,\phantom{-}0,\phantom{-}0,\phantom{-}1,\phantom{-}0,\phantom{-}0,\phantom{-}0,\phantom{-}0,\phantom{-}0,\phantom{-}0,\phantom{-}0,\phantom{-}0,\phantom{-}0,\phantom{-}0,-1,\phantom{-}0,\phantom{-}0 &  $\gamma_{m}^+  I^{+}_m $\\
		\hline
		22 & Infection of female to $ I^s_f$ & \phantom{-}0,\phantom{-}0,\phantom{-}0,\phantom{-}0,\phantom{-}0,\phantom{-}0,-1,\phantom{-}0,\phantom{-}0,\phantom{-}0,\phantom{-}0,\phantom{-}0,\phantom{-}0,\phantom{-}0,\phantom{-}0,\phantom{-}1,\phantom{-}0 & $\beta_{f}^s E_f $\\
		23 & Testing of female to $I^{+}_f$ & \phantom{-}0,\phantom{-}0,\phantom{-}0,\phantom{-}0,\phantom{-}0,\phantom{-}0,-1,\phantom{-}0,\phantom{-}0,\phantom{-}0,\phantom{-}0,\phantom{-}0,\phantom{-}0,\phantom{-}0,\phantom{-}0,\phantom{-}0,\phantom{-}1 & $\beta_{f}^+ E_f $\\
		24 &  Testing female from $I^-_f$ to $I^{+}_m $& \phantom{-}0,\phantom{-}0,\phantom{-}0,\phantom{-}0,\phantom{-}0,-1, \phantom{-}0,\phantom{-}0,\phantom{-}0,\phantom{-}0,\phantom{-}0,\phantom{-}0,\phantom{-}0,\phantom{-}0,\phantom{-}0,\phantom{-}1,\phantom{-}0 &  $\beta_{f}^E I^-_f  $\\
		25 & Recovery from $I^{s}_f$    & \phantom{-}0,\phantom{-}0,\phantom{-}0,\phantom{-}0,\phantom{-}0,\phantom{-}0,\phantom{-}0,\phantom{-}0,\phantom{-}1,\phantom{-}0,\phantom{-}0,\phantom{-}0,\phantom{-}0,\phantom{-}0,\phantom{-}0,-1,\phantom{-}0   &   $\gamma_{f}^s  I^{s}_f $\\
		26 & Recovery from $I^{+}_f $ &\phantom{-}0,\phantom{-}0,\phantom{-}0,\phantom{-}0,\phantom{-}0,\phantom{-}0,\phantom{-}0,\phantom{-}0,\phantom{-}1,\phantom{-}0,\phantom{-}0,\phantom{-}0,\phantom{-}0,\phantom{-}0,\phantom{-}0,\phantom{-}0,-1  &   $\gamma_{f}^+  I^{+}_f  $\\
		\hline
	\end{tabularx}
	
	\medskip
	\caption  {Transition vectors and intensities of the   base Zika model, with hidden state $Y = (I^{-}_m,E_m, R_m^{1-}, R_m^{2-} S_m,I_f^- ,E_f, R_f^{1-},R_f^{2-},S_f, I_v,E_v, S_v)^\top$, and the observable state $Z = (I^{+}_m, I^{s}_m,I^{+}_f , I^{s}_f)^\top$,  total number of compartments $d= 17 $ and number of transitions $K=20$.  
	}
	\label{Tab2}
\end{table}

 As in the simplified model, there is also a possible transition from undetected  $I_m^{-}$ and $I_f^{-}$  to  detected  $I_m^{+}$ and $I_f^{+}$ via testing rates \(\beta ^+_m ,\beta^+_f\).   Cohort and surveillance studies indicate that Zika viremia is reliably detectable in serum shortly before or at the start of the symptom and is most reliably detected by  Reverse Transcription Polymerase Chain Reaction (RT-PCR) tests  \cite {Fontaine}. For base model and later, the extended model, this dynamic is represented by the transition from $E_m, E_f$ to $I^+_m, I^+_f$ via testing at the rate $\beta^E_m, \beta ^E_f$.  
Recovery from infection also occurs through distinct pathways. Detected symptomatic or asymptomatic individuals transition into partially hidden recovery states $R_m^{2-}$, $R_f^{2-}$ for men and women respectively. These compartments are considered partially hidden due to observed inflows into the compartment and unobserved outflows, which will be discussed in the next section on the extended Zika model.
The undetected individuals enter fully hidden recovery compartments ($R_m^{1-}$, $R_f^{1-})$. These states reflect differing levels of immunity awareness and tracking. Individuals in all recovery compartments may eventually lose immunity and re-enter the susceptible state, due to waning immunity over time.

In the vector population, mosquitoes are classified into susceptible ($S_v$), exposed ($E_v$), and infectious ($I_v$) compartments. Susceptible mosquitoes become exposed after biting an infectious human in the compartments $I^-_\dagger,I^+_\dagger,I^s_\dagger$ or  $E_\dagger$, and then progress to the infectious stage at the rate $\nu$ following an incubation period of approximately 7.5 days.  It is worth noting that only a fraction $\psi$ of the human population is  infectious. 

Figure \ref{model2} gives the structure of all possible transitions in the base model.  
 For simplifying the notation we denote the sum of all infectious men and women by $\Isum_m$ and $\Isum_f$, respectively, and by $\Isum=\Isum_m+\Isum_f$ the total number of infectious  in the human population. It holds $\Isum_\dagger=I^-_\dagger+I^+_\dagger+ I^s_\dagger+ \propexposed E_\dagger$ for $\dagger=m,f$. Here, the scaling factor $\propexposed\in[0,1]$ takes into account the fact that not all exposed individuals are infectious, but only those who are about to develop symptoms and transition to one of the infected compartments. Then the fraction $\Isum_m/N_m$ contributes to the rate of the transition $S_f\to E_f$ in the female population due to a contact with an infectious male, for which we have the rate $\omega S_f \Isum_m/N_m$. Note that  the total male  subpopulation size $N_m$  and female population $N_f$ are constant, since there is no birth and death in the human population.

We denote by $\beta^{+}_m$ , $ \beta^{+}_f$  and  $\beta^{E}_m,\beta^{E}_f$ the \emph{testing rates}, which quantify the rates at which infected  and exposed individuals respectively, are identified by diagnostic screening or clinical confirmation. This parameter governs the flow of individuals from undetected infection classes into observed compartments. 
 The parameters $\beta^{s}_m$ and $ \beta^{s}_f$ denote the \emph{rates of symptomatic infection}, which describe the rate at which  exposed  individuals  transition to the observed symptomatic infectious states. By contrast, $\beta^{-}_m,\beta^{-}_f$ represents \emph{rates of unobserved infection}, corresponding to infections that escape detection due to asymptomatic progression, limited health-seeking behavior, or surveillance inefficiencies. 
Recovery is likewise partitioned between observable and unobservable processes. The rates $\gamma^{s}_m , \gamma^{s}_f$ and $\gamma^{+}_m, \gamma^{+}_f$ represent the \emph{rates of observed recovery}, with $\gamma^{s}_m,\gamma^{s}_f$ capturing recovery among symptomatic infections while $\gamma^{+}_m, \gamma^{+}_f$ representing recovery following confirmed positive tests. On the other hand, $\gamma^{-}_m,\gamma^{-}_f$ corresponds to the \emph{rates of unobserved recovery}, encompassing recoveries among individuals whose infections were never reported. The rates $\rho_{m}^1 , \rho_{f}^1$ denotes the transition from the unobserved recovery  classes $R^{1-}_m,R^{1-}_f$ back to susceptibility, representing the waning immunity among those who recovered from undetected infections. Similarly, $\rho_{m}^2, \rho_{f}^2$  describe the \emph{rate of waning immunity} from the partially observed compartments   $R^{2-}_m,R^{2-}_f$, which contains individuals who recovered from infections in the observed classes $I^{+}$ and $I^{s}$ but their immunity eventually diminishes.  

  The recruitment of susceptible mosquitoes $(S_v)$ is given by the rate $B_v$, which varies over time to account for seasonal fluctuations. The parameter $B_v$ is the relative birth rate,  at which new mosquitoes emerge in relation to their natural mortality rate.
	The rate for the transition $S_v\to E_v $ depends on the total proportion of infectious humans $\Isum/N$ 
	and is given by 	$\theta S_v\Isum/N$ where $\theta$ is  the rate of exposure of the vector. The exposed vector then becomes infectious and transition to $I_v$ after approximately 7.5 days at the rate  $\nu$.

 Figure \ref{model2} is also composed of both observable and unobservable components.The vector of hidden states is given by $Y = (I^{-}_m,E_m, R_m^{1-}, R_m^{2-} S_m,I_f^- ,E_f, R_f^{1-},R_f^{2-},S_f, I_v,E_v, S_v)^\top$, while the vector of observable states is $Z = (I^{+}_m, I^{s}_m,I^{+}_f, I^{s}_f)^\top$. This results in a total of $d = 17$ distinct states and $K = 26$ possible transition pathways.

Summarizing, within the hidden states, we have the \textit{fully hidden} states ($S_m$, $S_f $ $I_v$, and $S_v$) which are not directly observable at either inflow or outflow. There are also the \textit{partially hidden} states ($I^{-}_m$ and $I^-_f$) where the inflow is not visible, but the outflow is observed to the compartments $I^{+}_m$ and $I^+_f$ respectively. $R_m$ and $R_f$  also remains partially hidden with observed inflows and unobserved outflows. 

As described earlier, Equations (\ref{stateX}) and (\ref{stateX_Poisson}) are used to model the dynamics as a continuous-time Markov chain that is discretized as in (\ref{Discrete_Dyn1}). The specific transition directions and corresponding Poisson intensities are detailed in Table \ref{Tab2}. As population sizes become large, the dynamics of CTMC can be approximated by a system of stochastic differential equations (SDEs), presented in Equation (\ref{DA1}). For numerical analysis, the system is discretized, leading to the recursive formulation in Equation (\ref{state_YZ_Zika}). The functions and coefficients used in these formulations ( $f$, $h$, $g$, $\sigma$, and $\ell$) are provided in Appendix \ref{app:Base}.
\subsection{Extended Zika Model }
\label{sec:Extended}
Building upon the foundational structure of the base Zika transmission model developed in the preceding section, we introduce a refined and extended formulation that captures immunological memory through compartments organized by recovery age. Although the base model incorporated key epidemiological interactions between human (male and female) and mosquito populations, it assumed a homogeneous recovered class without accounting for the dynamics of periods during which an individual can have perfect immunity.

To incorporate the effect of temporary full immunity into the model and to improve the estimation of dark figures, we introduce the notion of \emph{recovery age}, defined as the elapsed time since an individual has recovered from infection. Individuals are stratified into subcompartments according to their recovery age, so that each subcompartment contains only those with the same duration since recovery.

\begin{figure}[h]
	\centering
	\includegraphics[width= 0.9 \textwidth]{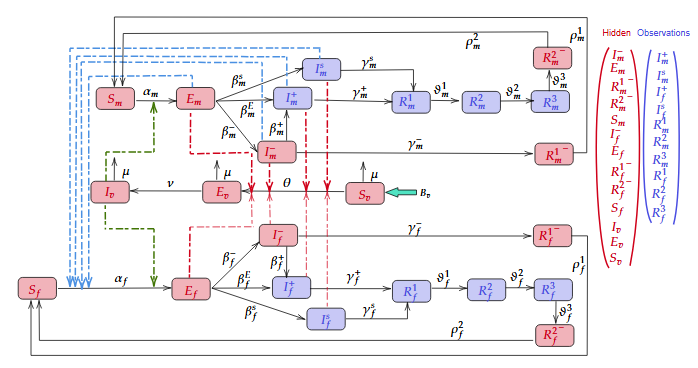}
	\caption{{{ Extended Zika virus} transmission model highlighting the inclusion of $L = 3 $ cascaded recovery compartments \(R^1_m \to R^2_m \to R^3_m\) for males and \(R^1_f \to R^2_f \to R^3_f\) for females. All these cascade compartments are included in the observed states and other compartmental transitions remain as defined in the base model. }}
	\label{model3}
\end{figure}

	All previously defined variables and transitions retain their original meaning, and only the immunity cascade and its associated transitions represent new components in this extension. To incorporate information from observable random transitions into the estimation of unobserved (hidden) states, we adopt a cascade framework from \cite{Kamkumo, Kamkumo2} and  decompose compartments with observable  inflows into a sequence of intermediate stages. 
The recovered population in $R^{2-}_m, R^{2-}_f$ is stratified according to their \emph{recovery age}, i.e., the time elapsed since recovery. For brevity, we use the placeholder notation $\dagger=m,f$ and simply write $R^{2-}_\dagger$.

Let  $L \in \mathbb{N}$ denote the number of full-immunity periods of length $\Delta t$, the step size of the  time discretization,   such that complete immunity persists for the time  $L \Delta t$ after recovery. The recovered class is then represented as a sequence of subcompartments $(R^1_\dagger,\ldots,R_\dagger^{L})$, followed by a residual class $R^{2-}_\dagger$,where $R^j_\dagger$ contains individuals with recovery age in $((j-1)\Delta t,j\Delta t]$.  The last compartment  $R^{2-}_\dagger$ includes those with a recovery age greater than $L \Delta t$ who have lost their full immunity and can transition to the susceptible compartments at a rate\footnote{This rate is different to the one in the base model!} of $\rho^2_\dagger$.
Transitions within the cascade are deterministic, with $R^j_{\dagger,n+1}=R^{j-1}_{\dagger,n}$ for $j \geq 2$, while the first compartment receives a random inflow $R^1_{\dagger,n+1}=\Rin_{\dagger,n}$. Here,   $\Rin_{\dagger,n}$ counts newly recovered individuals who enter the first cascade compartment  during the interval $(t_n, t_{n+1}]$, it is given by  $\Rin_{\dagger,n}=( \gamma^s_\dagger I^s_{\dagger,n} + \gamma^+_\dagger I^+_{\dagger,n} )\Delta t$ and 
is observable since it depends on the observable states $I^s_\dagger,I^+_\dagger$. For the last compartment the dynamics reads $R^{2-}_{\dagger, n+1}=R^{2-}_{\dagger, n}+R^{L,n}-\Rout_{\dagger,n}$  with the outflow to $S_\dagger$ given by $\Rout_{\dagger,n}=\rho^2_\dagger R^{2-}_{\dagger,n}\Delta t$.

 When $L$ is large, this construction may introduce too many new compartments that add little statistical information because only $R^1_\dagger$ represents  the new information  from the observable inflow that was not yet been  captured in the base base model. To limit the number of compartments and reduce model complexity, we aggregate the $L$ cascade stages into $\KR \leq L$ compartments $R^1_\dagger, \dots, R^{\KR}_\dagger$. Each $R^j_\dagger$ groups $P_j\in\N$ adjacent recovery-age compartments with $\sum_{j=1}^{\KR} P_j = L$, and represents recovery ages in $\big(\sum_{k<j}P_k,\sum_{k\le j}P_k\big]\Delta t$. This reduction leads to some loss of resolution, but it is computationally more efficient and still captures the essential dynamics of immunity.
 
  The transition between aggregated classes is approximated by assuming that, at each time step, a fraction $\vartheta_j=1/P_j$ of the individuals in $R^j_\dagger$ progress to $R^{j+1}_\dagger$, which is an approximation that is accurate under an  uniform distribution of recovery ages within $R^j_\dagger$.  This leads to the following recursions for the new cascade states. 
 \begin{align}\label{cascade_dynamics_R}
 	\begin{split}
 		R^1 _{\dagger,n+1} &= (1 - \vartheta_1) R^1_{\dagger,n} + \Rin_{\dagger,n}, \\
 		R^j_{\dagger,n+1} &= (1 - \vartheta_j) R^j_{\dagger,n} + \vartheta_{j-1} R^{j-1}_{\dagger,n}, \quad \text{for } j = 2, \dots, \KR, \\
 		R^{2-}_{\dagger,n+1} &= R^{2-}_{\dagger,n} + \vartheta_{\KR} R^{\KR}_{\dagger,n} - \Rout_{\dagger,n}.
 	\end{split}
 \end{align}
 We refer to  Appendix \ref{app:Extended} for a complete description of the extended Zika model dynamics.

\section{Estimation of Unobservable States}
\label{sec:filter}

Recall that in the context of Zika virus transmission modeling, not all compartments of the population are directly observable. For example, the number of asymptomatic infections or the vector infection level may be hidden from direct measurement. The goal of filtering is to estimate the hidden state variable \( Y_n \) at each discrete time point \( n \), based on all available observations \( Z_0, \ldots, Z_n \) and prior knowledge of the initial hidden state distribution. These variables evolve according to the stochastic model defined in \eqref{state_YZ_Zika}.

Formally, the filtering problem consists in computing the optimal mean-square estimate of \( Y_n \) given the filtration
\[
\mathcal{F}^Z_n := \sigma \{ Z_k : 0 \leq k \leq n \} \vee \mathcal{F}^I_0,
\]
which captures all information available up to time \( n \), including the observation history and prior information encoded in \( \mathcal{F}^I_0 \)  such as estimates from past outbreaks, historical surveillance data, or expert knowledge. The optimal estimate is the conditional mean
\begin{align}
\label{def_condmean}
\condmean_n := \mathbb{E} \left[ Y_n \,\big|\, \mathcal{F}^Z_n \right],
\end{align}
which minimizes the mean-square error \( \mathbb{E}[\|Y_n - \hat{Y}_n\|^2] \) among all \( \mathcal{F}^Z_n \)-measurable estimators \( \hat{Y}_n \in L^2(\Omega, \mathcal{F}^Z_n) \).The uncertainty associated with the estimate \( \condmean_n \) is quantified by the conditional covariance matrix
\begin{align}
\label{def_condvar}
\condvar_n := \operatorname{Var}(Y_n \mid \mathcal{F}^Z_n) 
= \mathbb{E} \left[ (Y_n - \condmean_n)(Y_n - \condmean_n)^\top \,\big|\, \mathcal{F}^Z_n \right].
\end{align}

The recursive filtering process is initialized at time \( n = 0 \) by:
\begin{align}
\label{def_filter_ini}
\condmean_0 = m_0 := \mathbb{E}[Y_0 \mid \mathcal{F}^Z_0], 
\quad \condvar_0 = q_0 := \operatorname{Var}(Y_0 \mid \mathcal{F}_0^Z),
\end{align}

A central challenge in the Zika model is the presence of nonlinear dynamics in both the hidden and observed components. The drift term \( f(n, Y_n, Z_n) \) and the diffusion terms \( \sigma(n, Y_n, Z_n), g(n, Y_n, Z_n), \ell(n, Y_n, Z_n) \) are nonlinear functions of the states. Consequently, classical Kalman filtering theory, which assumes linear-Gaussian dynamics, is not applicable in this setting.

To address this, appropriate filtering methods must be employed. In this work, we use the \textit{Extended Kalman Filter} (EKF), which relies on local linearization of the nonlinear system around the current estimate. This approach provides a computationally efficient approximation to the optimal filter and is particularly suitable when the process noise remains approximately Gaussian.

The formulation and implementation of the EKF tailored to our epidemic model are developed in Subsections \ref{sec:EKF} and \ref{ACGS}, following the general filtering framework introduced in the next subsection.

\subsection{Kalman Filtering for Conditionally Gaussian State-Space Models}  
\label{Sect_Cond_gaussian}

Consider a partially observed stochastic process $\binom{Y_n}{Z_n}$
where \( Y_n \in \mathbb{R}^{d_1} \) denotes the unobservable (hidden) state vector and \( Z_n \in \mathbb{R}^{d_2} \) denotes the observed process, for discrete times \( n = 0, \dots, N_t \), with \( N_t \in \mathbb{N} \), \( d_1, d_2 \in \mathbb{N} \). The evolution of this system is governed by the following stochastic recursions:
\begin{align}
\begin{split}
	Y_{n+1} &= \cf_0(\Zpathn) + \cf_1(\Zpathn) Y_n + \csigma(\Zpathn) \mathcal{E}^1_{n+1} + \cg(\Zpathn) \mathcal{E}^2_{n+1}, \\
	Z_{n+1} &= \ch_0(\Zpathn) + \ch_1(\Zpathn) Y_n + \cell(\Zpathn) \mathcal{E}^2_{n+1},
\end{split}
\label{kfcgm2}
\end{align}
with given initial conditions \( Y_0 \in \mathbb{R}^{d_1} \), \( Z_0 \in \mathbb{R}^{d_2} \), and where \( \Zpathn := (Z_0, \dots, Z_n) \) denotes the trajectory of the observed process up to time \( n \). The noise processes \( \{\varepsilon^1_n\}_{n=1}^{N_t} \subset \mathbb{R}^{k_1} \) and \( \{\varepsilon^2_n\}_{n=1}^{N_t} \subset \mathbb{R}^{k_2} \) are independent sequences of standard Gaussian random vectors, and are independent of the initial states \( Y_0 \), \( Z_0 \). The coefficient functions \( \cf_0, \cf_1, \csigma, \cg, \ch_0, \ch_1, \cell \) are measurable functions of the observation path \( \Zpathn \), ensuring that all terms in \eqref{kfcgm2} are well-defined and dimensionally consistent.

We impose the following technical assumptions:

\begin{assumption}~	\label{ass_filter}
Let \( \cb = \widetilde{b}(\Zpathn) \) denote any of the measurable coefficient functions \( \cf_0, \cf_1, \csigma, \cg, \ch_0, \ch_1, \cell \). Then for each \( n = 0, \dots, N_t - 1 \), the following hold:
\begin{enumerate}[label=(A\arabic*), leftmargin=2.5em]
	\item \label{kfcgm_A3} \textbf{Square integrability:}  
	\[
	\mathbb{E} \left[ \| \cb(\Zpathn) \|^2 \right] < \infty.
	\]
	
	\item \label{kfcgm_A4} \textbf{Uniform boundedness (almost surely):}  
	\[
	\| \cf_1(\Zpathn) \|, \; \| \ch_1(\Zpathn) \| \leq C < \infty \quad \mathbb{P}\text{-a.s.}
	\]
	
	\item \label{kfcgm_A5} \textbf{Integrable initial state:}  
	\[
	\mathbb{E}[\| Y_0 \|^2 + \| Z_0 \|^2] < \infty.
	\]
	
	\item \label{kfcgm_A6} \textbf{Gaussian prior:}  
	The conditional distribution of \( Y_0 \) given \( \mathcal{F}_0^Z = \sigma(Z_0) \vee \mathcal{F}_{0}^I \) is Gaussian.
\end{enumerate}
\end{assumption}

\begin{remark}
In contrast to the nonlinear stochastic epidemic models such as \eqref{state_YZ_Zika}, where the drift and diffusion terms are nonlinear functions of both the hidden and observed variables (e.g., due to nonlinear infection dynamics or seasonality in vector populations), the signal process in \eqref{kfcgm2} evolves according to a \textit{conditionally linear and Gaussian} structure.  Specifically, the drift of \( Y_{n+1} \) is affine in the signal \( Y_n \), and 
    the diffusion coefficients \( \csigma(\Zpathn) \), \( \cg(\Zpathn) \), and \( \cell(\Zpathn) \) depend only on the observation history \( \Zpathn \), not on the current or past values of the hidden state \( Y_n \).
Such structure allows closed-form filtering recursions analogous to the classical Kalman filter, albeit in a non-autonomous setting where the coefficients vary with the observation history.
\end{remark}

Given that only the observation sequence \( \{Z_n\} \) is available, the filtering problem consists of computing, for each time \( n = 0, \dots, N_t \), the conditional expectation
$\condmean_n := \mathbb{E}[Y_n \mid \mathcal{F}_n^Z]$,and the associated conditional covariance matrix $\condvar_n := \operatorname{Var}(Y_n \mid \mathcal{F}_n^Z)$,which together describe the \textit{optimal mean-square estimate} and its uncertainty, based on the observation path \( \Zpathn \) and prior information \( \mathcal{F}_{0}^I \).

Under Assumption \ref{ass_filter}, it is well known (cf. \cite[Chapter 13]{LiptserShiryaevVolII2001}) that
  \( \condmean_n \) is the unique mean-square optimal estimate of \( Y_n \) given \( \mathcal{F}_n^Z \), and 
    the total expected squared estimation error is given by $
    \operatorname{tr}(\mathbb{E}[\condvar_n]) = \sum_{i=1}^{d_1} \mathbb{E}[(Y_n^i - \condmean_n^i)^2]$.

While in general nonlinear filtering problems, recursive computation of \( (\condmean_n, \condvar_n) \) requires approximations (e.g., particle filters, extended Kalman filters), the conditional Gaussian structure of \eqref{kfcgm2} allows an \textit{exact recursive solution} analogous to the classical Kalman filter. These recursions are derived below.

\begin{theorem}[Liptser \& Shiryaev (2001) {\cite[Theorem 13.3]{LiptserShiryaevVolII2001}}] \ \\
\label{thm_condGauss}%
Let Assumption~\ref{ass_filter} hold. Consider the stochastic system defined by the recursive state-space model \eqref{kfcgm2}, where $\{Y_n\}$ is the hidden state sequence and $\{Z_n\}$ is the corresponding sequence of observations. Then, the joint process $\{(Y_n, Z_n)\}_{n=0}^{N_t}$ is conditionally Gaussian with respect to the filtration $\{\mathcal{F}_n^Z\}_{n \geq 0}$. That is, for every $n \in \{0,1,\ldots,N_t\}$, the conditional distribution of \( Y_n \) given \( \mathcal{F}_n^Z \) is multivariate Gaussian.
\end{theorem}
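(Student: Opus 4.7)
The plan is to proceed by finite induction on $n \in \{0, 1, \ldots, N_t\}$, exploiting the conditionally affine-Gaussian structure of the recursion \eqref{kfcgm2}. The key observation is that all coefficient functions $\widetilde{f}_0, \widetilde{f}_1, \widetilde{\sigma}, \widetilde{g}, \widetilde{h}_0, \widetilde{h}_1, \widetilde{\ell}$ depend only on the observation path $\Zpathn$, and are therefore $\mathcal{F}_n^Z$-measurable; under this conditioning they act as deterministic matrices.

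For the base case $n = 0$, Assumption \ref{kfcgm_A6} directly provides that $Y_0 \mid \mathcal{F}_0^Z$ is Gaussian, while $Z_0$ is $\mathcal{F}_0^Z$-measurable and hence degenerate Gaussian. For the inductive step, I assume $Y_n \mid \mathcal{F}_n^Z$ is Gaussian and first establish joint conditional Gaussianity of $(Y_{n+1}, Z_{n+1})$ given the smaller $\sigma$-algebra $\mathcal{F}_n^Z$. To this end, I invoke three facts: (i) the coefficients in \eqref{kfcgm2} are $\mathcal{F}_n^Z$-measurable constants under this conditioning; (ii) the innovations $\mathcal{E}^1_{n+1}, \mathcal{E}^2_{n+1}$ are standard Gaussian and, by construction, independent of $Y_0, Z_0$ and of earlier noise vectors, hence independent of $\mathcal{F}_n^Z$; and (iii) under $\mathcal{F}_n^Z$, the triple $(Y_n, \mathcal{E}^1_{n+1}, \mathcal{E}^2_{n+1})$ is jointly Gaussian since $Y_n$ is conditionally Gaussian and independent of the noise vectors. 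Then $(Y_{n+1}, Z_{n+1})$ is an affine transformation of a Gaussian vector with $\mathcal{F}_n^Z$-measurable coefficients, and affine maps preserve joint Gaussianity.

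To pass from the conditioning $\mathcal{F}_n^Z$ to $\mathcal{F}_{n+1}^Z = \mathcal{F}_n^Z \vee \sigma(Z_{n+1})$, I rely on the classical result that if $(U, V)$ is jointly Gaussian given a $\sigma$-algebra $\mathcal{G}$, then $U \mid \mathcal{G} \vee \sigma(V)$ is Gaussian, with conditional mean and covariance given by the Schur-complement formulas. Applied to $U = Y_{n+1}$, $V = Z_{n+1}$, $\mathcal{G} = \mathcal{F}_n^Z$, this yields that $Y_{n+1} \mid \mathcal{F}_{n+1}^Z$ is Gaussian, which completes the induction. As a self-contained alternative, one can verify the claim by computing the conditional characteristic function
\begin{equation*}
\mathbb{E}\bigl[\exp(i\lambda^\top Y_{n+1}) \,\big|\, \mathcal{F}_{n+1}^Z\bigr]
\end{equation*}
and showing it has the form $\exp\bigl(i\lambda^\top \condmean_{n+1} - \tfrac{1}{2}\lambda^\top \condvar_{n+1}\lambda\bigr)$ with $\mathcal{F}_{n+1}^Z$-measurable $\condmean_{n+1}, \condvar_{n+1}$.

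The main obstacle lies in the conditioning step $\mathcal{F}_n^Z \to \mathcal{F}_{n+1}^Z$ when the conditional covariance matrix $\widetilde{\ell}(\Zpathn)\widetilde{\ell}(\Zpathn)^\top$ may be singular; this requires handling the Gaussian-conditioning formula via a Moore–Penrose pseudo-inverse and a careful check that the resulting mean and covariance are measurable. The integrability conditions \ref{kfcgm_A3}–\ref{kfcgm_A5} together with the uniform bound \ref{kfcgm_A4} guarantee that all conditional moments are finite, so that the characteristic-function identification is rigorous and the Kalman-type recursions for $\condmean_n, \condvar_n$ derived from this theorem are well-defined throughout $n = 0, \ldots, N_t$.
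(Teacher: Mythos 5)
Your proposal is correct and follows exactly the route the paper indicates: the paper's own ``proof'' is a citation to Liptser \& Shiryaev together with the remark that the result follows by induction on the recursive linear-Gaussian structure of \eqref{kfcgm2}, and your argument is precisely that induction spelled out (conditionally Gaussian prior, affine transformation of a conditionally Gaussian vector with $\mathcal{F}_n^Z$-measurable coefficients, then the normal-correlation/Schur-complement step with the Moore--Penrose pseudoinverse to pass from $\mathcal{F}_n^Z$ to $\mathcal{F}_{n+1}^Z$). No gaps; you have simply made explicit what the paper delegates to the reference.
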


\begin{proof}
A proof is provided in \cite[Theorem 13.3]{LiptserShiryaevVolII2001} and also in \cite{chen1989kalman}. The result follows by mathematical induction, relying on the recursive linear-Gaussian structure of the model \eqref{kfcgm2}. \qed
\end{proof}

\noindent The conditional Gaussianity established in Theorem~\ref{thm_condGauss} permits an explicit recursive computation of the conditional mean and conditional covariance of the hidden states $\{Y_n\}$, given observations $\{Z_n\}$. These recursive filtering equations, which generalize the Kalman filter to time-varying and nonlinear models, are stated next.

\begin{theorem}[Liptser \& Shiryaev (2001) {\cite[Theorem 13.4]{LiptserShiryaevVolII2001}}]\ \\
\label{Theo_EKF}%
Let Assumption~\ref{ass_filter} be satisfied, and suppose the sequences $\{Y_n\}$ and $\{Z_n\}$ evolve according to the conditionally Gaussian model \eqref{kfcgm2}. Then the conditional distribution of \( Y_n \) given \( \mathcal{F}_n^Z \) is the multivariate Gaussian distribution $\mathcal{N}(\condmean_n, \condvar_n)$, with parameters computed recursively as follows:
\begin{align}
\condmean_{n+1} &= \cf_0 + \cf_1 \condmean_n \nonumber \\
&\quad + \left( \cg \cell^\top + \cf_1 \condvar_n \ch_1^\top \right) \left( \cell \cell^\top + \ch_1 \condvar_n \ch_1^\top \right)^+ \left( Z_{n+1} - \ch_0 - \ch_1 \condmean_n \right), 
\label{eq:EKF_mean_update} \\[1em]
\condvar_{n+1} &= - \left( \cg \cell^\top + \cf_1 \condvar_n \ch_1^\top \right) \left( \cell \cell^\top + \ch_1 \condvar_n \ch_1^\top \right)^+ \left( \cg \cell^\top + \cf_1 \condvar_n \ch_1^\top \right)^\top \nonumber \\
&\quad + \cf_1 \condvar_n \cf_1^\top + \csigma \csigma^\top + \cg \cg^\top,
\label{eq:EKF_cov_update}
\end{align}

\noindent with initial conditions $\condmean_0 = m_0$ and $\condvar_0 = q_0$, where all functions are evaluated at the observed path $\Zpathn$.
\end{theorem}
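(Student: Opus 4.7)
The plan is to argue by induction on $n$, using Theorem~\ref{thm_condGauss} as the structural input: given that $(Y_n,Z_n)$ is conditionally Gaussian with respect to $\{\mathcal{F}_n^Z\}$, it suffices at each step to identify the conditional mean $\condmean_{n+1}$ and covariance $\condvar_{n+1}$, since a Gaussian law is fully determined by these two moments. The base case is provided by the initial data $\condmean_0=m_0$, $\condvar_0=q_0$ from Assumption~\ref{ass_filter}\ref{kfcgm_A6}.

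For the inductive step, I would first compute, conditionally on $\mathcal{F}_n^Z$, the joint distribution of $(Y_{n+1},Z_{n+1})$ using the recursion~\eqref{kfcgm2}. Because the coefficients $\cf_0,\cf_1,\csigma,\cg,\ch_0,\ch_1,\cell$ are $\mathcal{F}_n^Z$-measurable (they depend only on $\Zpathn$) and the innovations $\mathcal{E}^1_{n+1},\mathcal{E}^2_{n+1}$ are standard Gaussian and independent of $\mathcal{F}_n^Z$ and of the induction hypothesis $Y_n\mid\mathcal{F}_n^Z \sim \mathcal{N}(\condmean_n,\condvar_n)$, the pair $(Y_{n+1},Z_{n+1})$ is conditionally Gaussian with moments that I can read off directly:
\begin{align}
\E[Y_{n+1}\mid\mathcal{F}_n^Z] &= \cf_0+\cf_1\condmean_n, \qquad
\E[Z_{n+1}\mid\mathcal{F}_n^Z] = \ch_0+\ch_1\condmean_n,\\
\var(Y_{n+1}\mid\mathcal{F}_n^Z) &= \cf_1\condvar_n\cf_1^\top+\csigma\csigma^\top+\cg\cg^\top,\\
\var(Z_{n+1}\mid\mathcal{F}_n^Z) &= \ch_1\condvar_n\ch_1^\top+\cell\cell^\top,\\
\cov(Y_{n+1},Z_{n+1}\mid\mathcal{F}_n^Z) &= \cf_1\condvar_n\ch_1^\top+\cg\cell^\top.
\end{align}
Here the cross-covariance follows from the fact that $\mathcal{E}^1_{n+1}$ and $\mathcal{E}^2_{n+1}$ are independent and $\E[\mathcal{E}^2_{n+1}(\mathcal{E}^2_{n+1})^\top]=\mathbb{I}_{k_2}$.

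Next I would apply the standard Gaussian conditioning lemma (in the generalized form with Moore--Penrose pseudoinverse, to accommodate possibly singular innovation covariance) to the jointly Gaussian vector $(Y_{n+1},Z_{n+1})\mid\mathcal{F}_n^Z$: for a Gaussian pair $(X,V)$ with means $\mu_X,\mu_V$, covariances $\Sigma_{XX},\Sigma_{VV}$, and cross-covariance $\Sigma_{XV}$, one has
\begin{align}
\E[X\mid V] &= \mu_X + \Sigma_{XV}\Sigma_{VV}^{+}(V-\mu_V),\\
\var(X\mid V) &= \Sigma_{XX}-\Sigma_{XV}\Sigma_{VV}^{+}\Sigma_{XV}^{\top}.
\end{align}
Substituting $X=Y_{n+1}$, $V=Z_{n+1}$ and using the tower property $\mathcal{F}_{n+1}^Z=\sigma(Z_{n+1})\vee\mathcal{F}_n^Z$ together with the fact that conditional Gaussianity is preserved under further conditioning on $Z_{n+1}$, the recursions~\eqref{eq:EKF_mean_update}--\eqref{eq:EKF_cov_update} follow after algebraic simplification. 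The integrability conditions \ref{kfcgm_A3}--\ref{kfcgm_A5} ensure that all moments appearing above are finite and that the conditional expectations are well defined in $L^2$.

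The main technical obstacle is twofold. First, one must justify the use of the pseudoinverse when the conditional covariance $\cell\cell^\top+\ch_1\condvar_n\ch_1^\top$ is singular; this requires showing that the cross-covariance lies in the range of the innovation covariance $\P$-almost surely, so that the update is independent of the specific choice of generalized inverse. Second, the induction requires that the measurability of the coefficients in $\Zpathn$ be carried through each step so that conditioning on $\mathcal{F}_n^Z$ treats them as constants; the square-integrability in \ref{kfcgm_A3} and the uniform bounds in \ref{kfcgm_A4} are what make this rigorous, in particular guaranteeing that the propagated covariance $\condvar_n$ remains integrable and that the Gaussian conditioning identity applies pathwise. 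The detailed calculation beyond this point is a direct substitution and is carried out in~\cite[Ch.~13]{LiptserShiryaevVolII2001}.
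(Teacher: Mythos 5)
Your sketch is correct and follows exactly the route that the paper itself relies on: the paper offers no independent argument for Theorem~\ref{Theo_EKF} but simply defers to \cite[Theorem 13.4]{LiptserShiryaevVolII2001}, whose proof is precisely your induction combined with the theorem on normal correlation (Gaussian conditioning with the Moore--Penrose pseudoinverse) applied to the conditionally Gaussian pair $(Y_{n+1},Z_{n+1})$ given $\mathcal{F}_n^Z$. The two technical points you flag are real but standard: the range condition $\operatorname{Range}(\Sigma_{VX})\subseteq\operatorname{Range}(\Sigma_{VV})$ holds automatically for any second-order random vector, and the $\mathcal{F}_n^Z$-measurability of the coefficients is exactly what Assumption~\ref{ass_filter} is designed to guarantee, so no gap remains.
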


\begin{proof}
See \cite[Theorem 13.4]{LiptserShiryaevVolII2001} or \cite{chen1989kalman} for a complete derivation. \qed
\end{proof}

\begin{remark}[Use of the Moore–Penrose Pseudoinverse]
The notation $\left[ A \right]^+$ denotes the Moore–Penrose pseudoinverse of a matrix $A$. This generalized inverse ensures the recursive equations are well-defined even when $A$ is singular or not full rank. It satisfies the following identities:
\[
A [A]^+ A = A, \quad [A]^+ A [A]^+ = [A]^+, \quad (A [A]^+)^\top = A [A]^+, \quad ([A]^+ A)^\top = [A]^+ A.
\]
The pseudoinverse provides the minimum-norm solution in least-squares problems and is particularly advantageous in state estimation for models with low-rank noise structure or ill-conditioned observation matrices. For example, in Zika transmission models where the observation matrix $\ell$ may be low-rank due to sparse or aggregated reporting, the pseudoinverse guarantees numerical stability of the filter.
\end{remark}

\begin{remark}[Online Covariance Updates in Nonlinear Systems]
In contrast to the classical Kalman filter for linear time-invariant systems, where the covariance update equation (a Riccati equation) can be solved offline, the filtering equations \eqref{eq:EKF_mean_update}--\eqref{eq:EKF_cov_update} require real-time updates. This is because the model coefficients depend explicitly on the observation path $\Zpathn$, and thus the conditional covariance $\condvar_n$ is non-deterministic and observation-driven. This online updating is essential in real-time epidemic tracking of diseases such as Zika, where model parameters and surveillance data evolve dynamically over time.
\end{remark}

\subsection{Extended Kalman Filter}
\label{sec:EKF}

In the context of Zika virus transmission dynamics, the underlying epidemic processes are inherently nonlinear due to the complex interactions between human and mosquito populations, seasonally varying transmission rates, and the stochastic nature of disease spread. These nonlinearities manifest both in the deterministic drift and in the stochastic components of the system. Consequently, classical linear filtering techniques such as the standard Kalman filter are inadequate for accurately estimating the unobservable epidemic states.

To address this challenge, we employ the \emph{Extended Kalman Filter (EKF)}, a recursive filtering technique tailored for nonlinear stochastic systems. The EKF generalizes the Kalman filter framework to accommodate nonlinear state dynamics by performing a first-order linearization of the system at each discrete time step. This approach allows for tractable, approximate inference of hidden states even when the underlying model deviates from linearity. Foundational references for this methodology are included in \cite{Burkholder}, and \cite{Bain}.

In the Zika epidemic setting, the drift function $f$ captures the nonlinear transmission dynamics influenced by vector-host interactions, while the diffusion terms $\sigma$ and $\ell$ may depend on $Y_n$ to reflect state-dependent variability such as seasonally fluctuating mosquito populations or reporting delays. This model structure induces a filtering problem that is nonlinear both in the system dynamics and in the observation equation.

To implement the EKF, we proceed as follows: at each time step $n$, the drift function $f$ is linearized via a first-order Taylor expansion about a reference point $\overline{Y}_n$, which is typically chosen to be the current state estimate $\condmean_n = \mathbb{E}[Y_n|\mathcal{F}_n^Z]$. The linearized form of $f$ is given by
\[
f(Y_n) \approx  f(n, \overline{y}, z) + \nabla_y f(n, \overline{y}, z)(y - \overline{y}),
\]
where $\nabla_y f$ denotes the Jacobian matrix of $f$ evaluated at $\overline{Y}_n$. Similarly, the diffusion matrices $\sigma(Y_n)$ and $\ell(Y_n)$ are approximated by their evaluations at $\overline{Y}_n$, i.e.,
\[
\sigma(Y_n) \approx \sigma(\overline{Y}_n), \quad \ell(Y_n) \approx \ell(\overline{Y}_n),
\]
as proposed in \cite{Picard}. These approximations yield a locally linear Gaussian model, to which the Kalman filtering recursion for conditional Gaussian sequences (Theorem~\ref{Theo_EKF}) can be applied.

\subsection{Approximation by Conditional Gaussian Sequences}
\label{ACGS}

In this section, we construct an approximate formulation of the nonlinear state-observation system described by the recursions \eqref{state_YZ_Zika}, with the aim of recasting it into a form suitable for applying Kalman filtering results for conditionally Gaussian models. Specifically, we seek to approximate the nonlinear dynamics of the state process \( (Y_n) \) and the observation process \( (Z_n) \) by a system of linear recursions of the form \eqref{kfcgm2}.

Our method is based on a local linearization of the drift function \( f \) with respect to the hidden signal \( Y_n \), performed via a first-order Taylor expansion around a carefully chosen reference point \( \overline{Y}_n \) at each discrete time step. In the Zika transmission models considered, the drift of the observation process is already linear with respect to \( Y_n \), and thus does not require any further approximation. Accordingly, we restrict attention to models where the observation drift satisfies the affine structure:
\begin{align}
\label{h_linear}
h(n,y,z) = h_0(n,z) + h_1(n,z)y.
\end{align}

Denoting the reference point at time \( n \) by \( \overline{Y}_n = \overline{y} \), the drift term \( f(n, y, z) \) is approximated by its first-order Taylor expansion:
\begin{align}
\label{linearization_f}
f(n, y, z) \approx f(n, \overline{y}, z) + \nabla_y f(n, \overline{y}, z)(y - \overline{y}),
\end{align}
where \( \nabla_y f \) denotes the Jacobian matrix of \( f \) with respect to \( y \). In addition, signal-dependent diffusion coefficients \( \sigma, g, \ell \) are approximated by evaluating them at the reference point \( \overline{Y}_n \), thus freezing their dependence on the unknown state.

This procedure leads to the following approximated state-observation system:

\begin{lemma}
\label{lem:linearized_system}
Let the system \eqref{state_YZ_Zika} satisfy the linear observation drift condition \eqref{h_linear}. Then the linearized approximation of the nonlinear dynamics, based on the expansion \eqref{linearization_f} and freezing of the diffusion coefficients at the reference point \( \overline{Y}_n \), yields the following recursions:
\begin{align}
\label{ekf_5}
\begin{split}
\widetilde{Y}_{n+1} &= f_0(n, \overline{Y}_n, \widetilde{Z}_n) + f_1(n, \overline{Y}_n, \widetilde{Z}_n) \widetilde{Y}_n + \sigma(n, \overline{Y}_n, \widetilde{Z}_n) \mathcal{E}^1_{n+1} + g(n, \overline{Y}_n, \widetilde{Z}_n) \mathcal{E}^2_{n+1}, \\
\widetilde{Z}_{n+1} &= h_0(n, \widetilde{Z}_n) + h_1(n, \widetilde{Z}_n)\widetilde{Y}_n + \ell(n, \overline{Y}_n, \widetilde{Z}_n)\mathcal{E}^2_{n+1}, \\
\widetilde{Y}_0 &= y, \quad \widetilde{Z}_0 = z.
\end{split}
\end{align}
Here, the functions \( f_0 \) and \( f_1 \) are defined as:
\begin{align}
f_0(n, \overline{y}, z) = f(n, \overline{y}, z) - \nabla_y f(n, \overline{y}, z)\overline{y}, \quad f_1(n, \overline{y}, z) = \nabla_y f(n, \overline{y}, z).
\end{align}
\end{lemma}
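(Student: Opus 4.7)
The proof is essentially an algebraic identification: substitute the first-order Taylor expansion of $f$ into the hidden-state recursion, freeze the diffusion coefficients, and rearrange to match the conditionally linear template \eqref{kfcgm2}. I would proceed in three short steps.

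First, I would start from the hidden-state equation in \eqref{state_YZ_Zika} and replace the drift $f(n,Y_n,Z_n)$ by its Taylor approximation around $\overline{Y}_n=\overline{y}$:
\begin{align*}
f(n,Y_n,Z_n) \;\approx\; f(n,\overline{y},Z_n) + \nabla_y f(n,\overline{y},Z_n)\bigl(Y_n-\overline{y}\bigr).
\end{align*}
Writing $\widetilde{Y}_n$ for the state process driven by this affine-in-$y$ drift and $\widetilde{Z}_n$ for the corresponding observation path, I would collect the $y$-dependent and $y$-independent parts:
\begin{align*}
f(n,\overline{y},\widetilde{Z}_n) + \nabla_y f(n,\overline{y},\widetilde{Z}_n)\bigl(\widetilde{Y}_n-\overline{y}\bigr)
= \underbrace{\bigl[f(n,\overline{y},\widetilde{Z}_n) - \nabla_y f(n,\overline{y},\widetilde{Z}_n)\overline{y}\bigr]}_{=:\,f_0(n,\overline{y},\widetilde{Z}_n)}
+ \underbrace{\nabla_y f(n,\overline{y},\widetilde{Z}_n)}_{=:\,f_1(n,\overline{y},\widetilde{Z}_n)}\widetilde{Y}_n.
\end{align*}
This is precisely the decomposition claimed in the statement and fixes the definitions of $f_0$ and $f_1$ uniquely.

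Second, I would handle the diffusion coefficients $\sigma$, $g$, $\ell$ by the standard EKF freezing procedure of Picard-type approximations, evaluating each at the reference point $\overline{Y}_n$ (the dependence on $\widetilde{Z}_n$ is kept, since the observation path is available and causes no nonlinearity in the hidden state). Substituting into \eqref{state_YZ_Zika} yields the first line of \eqref{ekf_5}. For the observation equation, no expansion is needed: by hypothesis \eqref{h_linear} the drift $h$ is already affine in $y$, so that plugging $\widetilde{Y}_n$ in place of $Y_n$ and freezing only $\ell$ at $\overline{Y}_n$ produces the second line of \eqref{ekf_5}. The initial conditions $\widetilde{Y}_0=y$, $\widetilde{Z}_0=z$ are inherited from \eqref{state_YZ_Zika} unchanged.

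Third, I would briefly verify consistency: the matrices $f_1(n,\overline{y},z)\in\mathbb{R}^{d_1\times d_1}$, $\sigma(n,\overline{y},z)\in\mathbb{R}^{d_1\times k_1}$, $g(n,\overline{y},z)\in\mathbb{R}^{d_1\times k_2}$, $h_1(n,z)\in\mathbb{R}^{d_2\times d_1}$, and $\ell(n,\overline{y},z)\in\mathbb{R}^{d_2\times k_2}$ have dimensions compatible with the conditionally Gaussian template \eqref{kfcgm2}, so that Theorem~\ref{thm_condGauss} and Theorem~\ref{Theo_EKF} can subsequently be applied to the pair $(\widetilde{Y}_n,\widetilde{Z}_n)$.

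Since the statement is an explicit linearization identity rather than a convergence result, there is no genuine obstacle; the only subtlety worth flagging is that $\overline{Y}_n$ must be $\mathcal{F}_n^Z$-measurable (as will be the case when it is chosen as the current filter estimate $\condmean_n$ in Section~\ref{sec:EKF}) so that $f_0(n,\overline{Y}_n,\widetilde{Z}_n)$, $f_1(n,\overline{Y}_n,\widetilde{Z}_n)$, and the frozen diffusion matrices depend measurably on $\widetilde{\Zpath}_n$ alone. This measurability condition is exactly what is required to bring \eqref{ekf_5} into the form \eqref{kfcgm2} and thereby justify the subsequent EKF recursions.
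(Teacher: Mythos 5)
Your proposal is correct and follows essentially the same route as the paper's proof: substitute the first-order Taylor expansion of the drift, freeze the diffusion coefficients at $\overline{Y}_n$, and regroup the affine terms to identify $f_0$ and $f_1$. The extra remarks on dimensional compatibility and the $\mathcal{F}_n^Z$-measurability of $\overline{Y}_n$ are sensible additions but do not change the argument.
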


\begin{proof}
Substituting the linearized drift \eqref{linearization_f} and evaluating the diffusion coefficients at \( \overline{Y}_n \), the dynamics of \( \widetilde{Y}_n \) become:
\begin{align}
\label{ekf_3}
\widetilde{Y}_{n+1} &= f(n, \overline{Y}_n, \widetilde{Z}_n) + \nabla_y f(n, \overline{Y}_n, \widetilde{Z}_n)(\widetilde{Y}_n - \overline{Y}_n) + \sigma(n, \overline{Y}_n, \widetilde{Z}_n)\varepsilon^1_{n+1} + g(n, \overline{Y}_n, \widetilde{Z}_n)\varepsilon^2_{n+1}.\quad \quad \quad
\end{align}
Similarly, the observation recursion becomes:
\begin{align}
\label{ekf_4}
\widetilde{Z}_{n+1} &= h_0(n, \widetilde{Z}_n) + h_1(n, \widetilde{Z}_n)\widetilde{Y}_n + \ell(n, \overline{Y}_n, \widetilde{Z}_n)\varepsilon^2_{n+1}.
\end{align}
Rewriting \eqref{ekf_3} using the identities that define \( f_0 \) and \( f_1 \) produces the recursion \eqref{ekf_5}. \qed
\end{proof}

The system \eqref{ekf_5} now takes the form of a linear state-space model with Gaussian innovations, conditional on the current approximating point \( \overline{Y}_n \). As such, it is amenable to analysis via the Kalman filter for conditional Gaussian sequences, as described in Theorem~\ref{Theo_EKF}. 

In practice, the EKF method iteratively selects the reference point \( \overline{Y}_n \) at each time step as the current estimated mean of the hidden state $\overline{Y}_n := \condmeanEKF_n.$.Moreover, the actual observed sequence \( (Z_n) \) is interpreted as being generated by the observation equation in \eqref{ekf_5}, allowing the recursive computation of the filter approximations \( (\condmeanEKF_n), (\condvarEKF_n) \). The full algorithmic implementation of this approximation procedure is given in Algorithm \ref{Algo_EKF}

\begin{algorithm}[!ht] 
	\DontPrintSemicolon
	
	\KwIn{$Z_{0},\ldots,Z_{N_{t}}$; model parameters,   prior information $\Fprior$ } 
	\KwOut{Approximations $\condmeanEKF_n, \condvarEKF_n$ of $\condmean_n:=\mathbb{E}[Y_n|\mathcal{F}_{n}^{Z}]$ and $\condvar_n:=\var(Y_n|\mathcal{F}_{n}^{Z})$ for $n=0,\ldots,N_t$}

    \vspace{0.2cm}
	\textbf{Initialization :}~~$n:=0$, \quad   $\condmeanEKF_0:=\condmean_0=\mathbb{E}[Y_0|\mathcal{F}_{0}^{Z}]$,  $\condvarEKF_0:=\condvar_0=\var(Y_0|\mathcal{F}_{0}^{Z})$ 
	\begin{enumerate}
		\item[(i)] State prediction 
		\begin{align}
			\begin{split}
				\condmeanEKF_{n+1}=   f_0+f_1  \condmeanEKF_{n}   ~+ ~\big( g\ell^\top+f_1 \condvarEKF_{n} h_1^\top \big)\big[\ell\ell^\top + h_1 \condvarEKF_{n} h_1^\top\big]^{+} \nonumber  \big( {\widetilde{Z}_{n+1}}- \big( h_0+h_1 \condmeanEKF_{n} \big) \big) 
			\end{split}
		\end{align}			
		\item[(ii)] Error measurement
		\begin{align}	
			\condvarEKF_{n+1}=  -\big( g\ell^\top+
			f_1 \condvarEKF_{n} h_1^\top \big)\nonumber \big[\ell\ell^\top + h_1 \condvarEKF_{n} h_1^\top\big]^{+}  \big( g\ell^\top+ f_1 \condvarEKF_{n} h_1^\top  \big)^\top 
			+f_1 \condvarEKF_{n}f_1^\top+ \sigma\sigma^T + gg^{\top} 
		\end{align}
		All coefficient functions are evaluated at the point $(n,\condmeanEKF_n, Z_n)$.
		\item[(iii)] Repeat (i) and (ii) for the next time step until all samples
		are processed.
	\end{enumerate}
	
	\caption{EKF Algorithm \label{Algo_EKF}}
\end{algorithm}

In the EKF approximation of the nonlinear filtering problem, the system coefficients \( \cf_0, \cf_1, \cg, \csigma, \cell \) are defined as functions of both time and the observed data. At each time step \( n \), they are evaluated using the current observation \( Z_n \) and the EKF-estimated signal mean \( \condmeanEKF_n \), according to the rule
\[
\cb(\Zpathn) = \cb((Z_0,\ldots,Z_n)) = b(n, \condmeanEKF_n, Z_n), \quad \text{for } \cb \in \{ \cf_0, \cf_1, \cg, \csigma, \cell \}.
\]
For the coefficients \( \cb = \ch_0, \ch_1 \) appearing in the observation equation, the dependence is restricted to the current observation:
\[
\cb(\Zpathn) = \cb((Z_0,\ldots,Z_n)) = b(n, Z_n).
\]

The EKF recursion is initialized at \( n=0 \) with the conditional distribution of \( Y_0 \), given the initial observation \( Z_0 \) and the prior information \( \Fprior \). This distribution is assumed to be Gaussian, \( \mathcal{N}(\condmean_0, \condvar_0) \), where \( \condmean_0 \) and \( \condvar_0 \) represent the prior mean and covariance matrix, respectively.It is important to note that the EKF estimate \( \condmeanEKF_n \) is defined recursively and its computation requires access to the full observation path \( \Zpathn \). 

\begin{remark}\label{rem_error_analysis_summary}
Under suitable smoothness and regularity conditions on the model coefficients \( f \) and \( h \), the EKF yields a valid first-order approximation of the true filter associated with the nonlinear system. For rigorous error analysis and convergence guarantees, we refer to the results in \cite{Picard1} and their extension in the continuous-time setting in \cite{Njiasse2}.
\end{remark}

\section{Numerical Results}
\label{sec:NumericalResults}

In this section, we present numerical experiments to illustrate the performance of the discrete-time stochastic Zika transmission model introduced earlier, particularly the version incorporating cascade compartments for temporary immunity and hidden states. The simulations are designed to demonstrate the impact of partial observability and immune waning on the system dynamics, as well as to assess the identifiability of key hidden compartments under noisy observations.

We recall that the extended Zika model consists of thirteen hidden compartments collected in the state vector  
\[
Y = (I_m^{-}, E_m, R_{m}^{1-}, R_{m}^{2-}, S_m, I_f^{-}, E_f, R_{f}^{1-}, R_{f}^{2-}, S_f, I_v, E_v, S_v)^{\top},
\]
and ten observable components forming the observation vector
\[
Z = (I_m^+, I_m^s, I_f^+, I_f^s, R_m^1, R_m^2, R_m^3, R_f^1, R_f^2, R_f^3)^{\top}.
\]

\subsection{Model parameters}
\label{sec:modelparameters}

The reproductive dynamics of Aedes aegypti mosquitoes, the primary vectors for Zika virus transmission, are strongly shaped by environmental conditions, particularly rainfall and humidity. Rainfall creates standing water that serves as breeding sites, thereby increasing mosquito birth rates. To account for this ecological effect, we introduce a time-dependent mosquito birth rate \( B_v(t) \) into the population dynamics.

\begin{figure}[ht]
	\centering
	\includegraphics[width=0.9\textwidth]{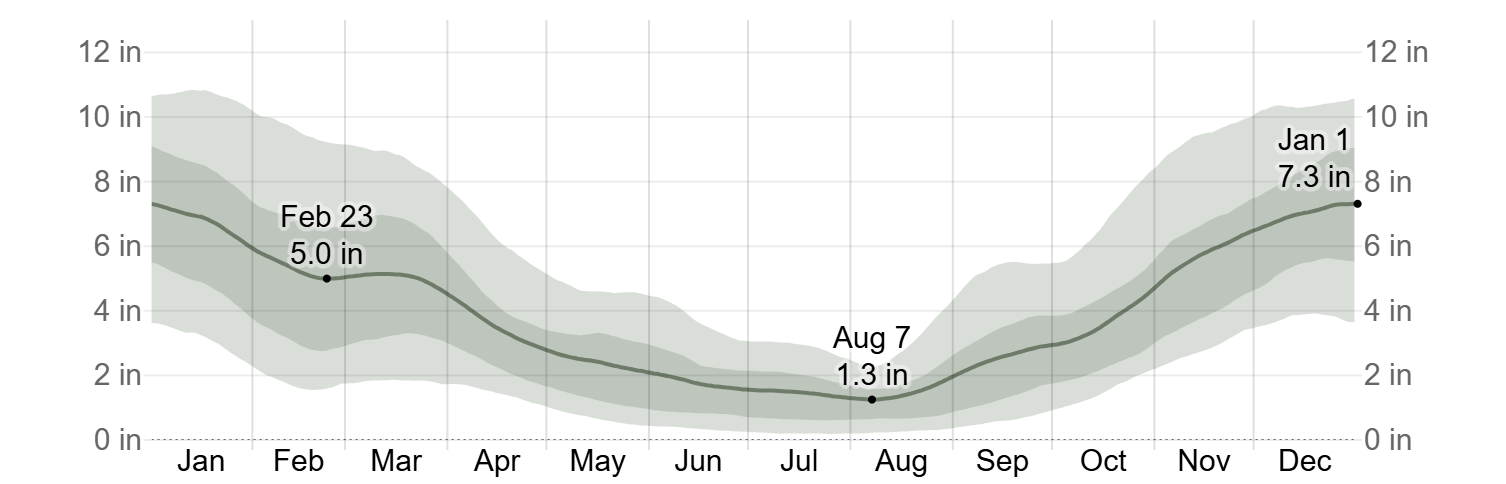}
	\caption{Monthly rainfall pattern in Rio de Janeiro, Brazil, showing clear seasonality with peaks during the summer months (December–March) and lows in the winter (June–August)${}^{\ref{fn_weather}}$. The solid line represents the mean rainfall computed over a 31-day sliding window, while the shaded regions indicate empirical percentile bands: the darker band corresponds to the interquartile range (25th–75th percentiles), and the lighter band spans the 10th–90th percentiles.
	}
	\label{fig:rainfall_rio}
\end{figure}

Figure~\ref{fig:rainfall_rio} illustrates the seasonal rainfall pattern in Rio de Janeiro, chosen here as a representative case study due to its tropical climate and its central role in the 2015–2016 Zika epidemic. The city exhibits strong seasonal fluctuations in temperature, humidity, and precipitation, which directly affect mosquito life cycles. Rainfall, in particular, plays a critical role by generating standing water that accelerates mosquito reproduction. Incorporating this rainfall-driven seasonality into the model is therefore essential to capture vector population dynamics and improve epidemic estimates under partial observations.

The rainfall profile peaks in January–February, averaging 7–8 inches per month, and reaches a minimum in July–August, with averages below 2 inches. Accordingly, we assume the mosquito birth rate \( B_v(t) \) is strongly correlated with seasonal rainfall. Instead of using a purely parametric cosine-type function, we adopt a data-driven approach based on historical monthly rainfall records from Rio de Janeiro.\footnote{\label{fn_weather}\scriptsize\url{https://weatherspark.com/y/30563/Average-Weather-in-Rio-de-Janeiro-Brazil-Year-Round}} This provides a more realistic calibration of mosquito abundance.

To incorporate this information, we proceed as follows:
\begin{itemize}
	\item Monthly average rainfall (in inches) is converted to millimeters (1 inch = 25.4 mm).
	\item The rainfall values are transformed into the interval \( (a, \bar{B}_v) \subset (0,1) \), with \( a = 0.01 \) as a positive lower bound and \( \bar{B}_v = 0.1 \) as the biologically (assumed) plausible upper bound for the mosquito birth rate. This ensures that the birth rate term  \( B_v(t)(S_v+E_v+I_v) \) remains stable and interpretable.
	\item The normalized monthly values are extended periodically and interpolated using a cubic spline to obtain a smooth daily function \( B_v(t) \).
\end{itemize}

The normalization is given by 
\[
B_v({t_i}) = a + (\bar{B}_v - a)\cdot \frac{R_i - \min(R)}{\max(R) - \min(R)},
\]
where \( R = \{R_1, R_2, \dots, R_{12}\} \) denotes the monthly average rainfall (mm), \( a=0.01 \), and \( \bar{B}_v=0.1 \). This rescaling ensures \( B_v(t) \) remains within  reasonable limits while reflecting seasonal variability.

This calibrated function \( B_v(t) \) is used in the numerical experiments that follow to simulate mosquito population dynamics and their effect on Zika transmission under partial information. Parameter values and their epidemiological interpretations are summarized in Table~\ref{table : Extend_Zika_Param}.

\begin{figure}[ht]
	\centering
	\subfloat[  Transformed mosquito birth rate \( B_v(t) \) based on monthly rainfall in Rio de Janeiro over one year.\label{fig:birth_rate_1year}]{
		\includegraphics[width=0.75\textwidth]{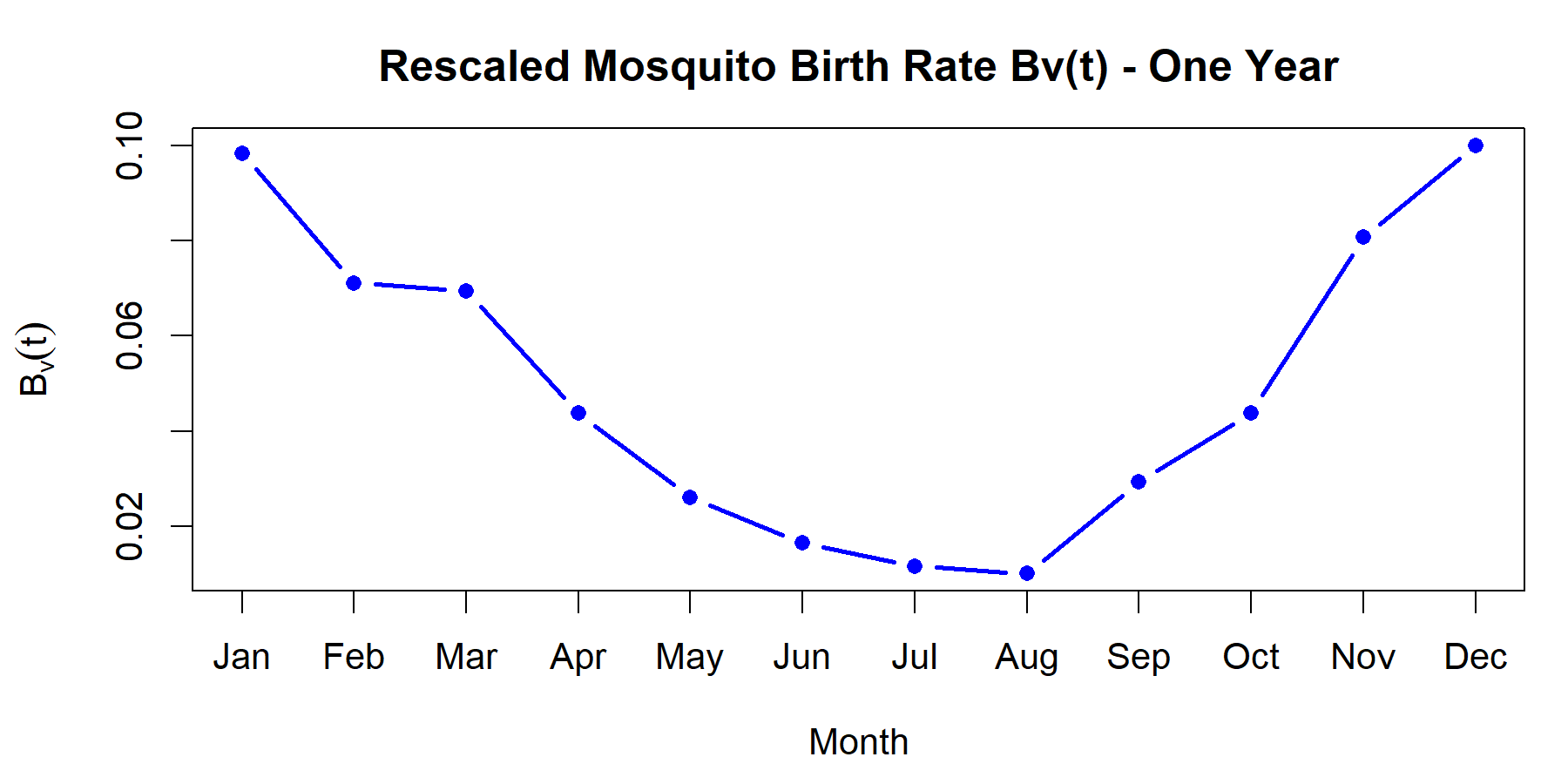}
	}
	
	\vspace{-2ex}
	
	\subfloat[Smoothed mosquito birth rate \( B_v(t) \) extended over four years using periodic spline interpolation.\label{fig:birth_rate_4year}]{
		\includegraphics[width=0.75\textwidth]{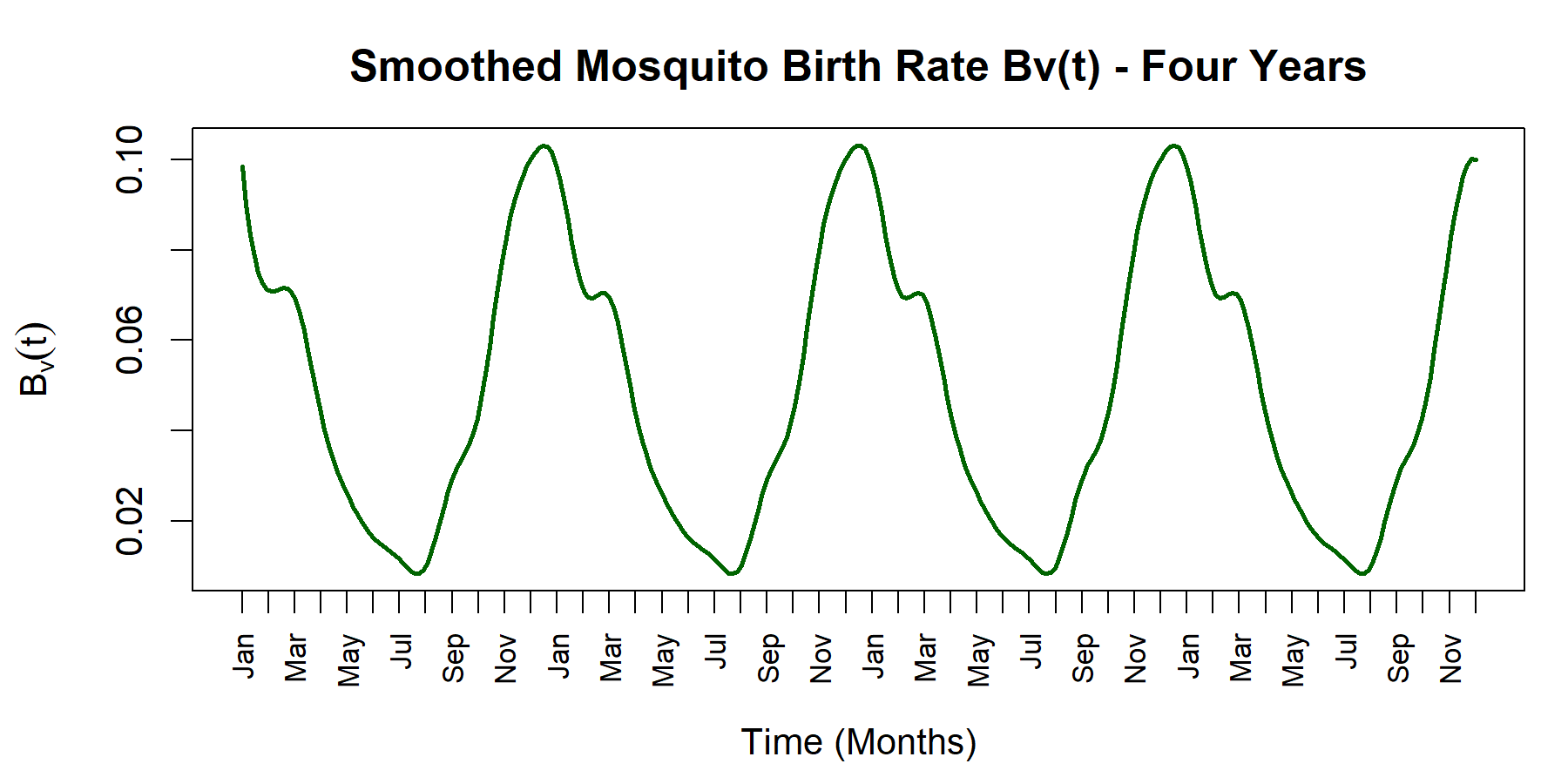}
	}
	
	\caption{Calibration of the mosquito birth rate \( B_v(t) \) from  transformed rainfall data in Rio de Janeiro. The function reflects seasonal variability in mosquito reproduction driven by environmental conditions.}
	\label{fig:calibration_birth_rate}
\end{figure}


{\small 
\begin{table}[ht]
	\centering
	\begin{tabular}{lll}
		\toprule
		\textbf{Parameter} & \textbf{Description} & \textbf{Value}  \\
		\midrule
		\( \alpha_m \) & Male infection rate & 1.5   \\
		\( \alpha_f \) & Female infection rate & 2.5  \\		
		\( \beta_m^-, \beta_m^+, \beta_m^s \) & Transition from exposed to \( I^- \), \( I^+ \), \( I^s \) (male) & 0.005, 0.005, 0.004  \\
		\( \beta_f^-, \beta_f^+, \beta_f^s \) & Transition from exposed to \( I^- \), \( I^+ \), \( I^s \) (female) & 0.005, 0.005, 0.004  \\
		\( \gamma_m^-, \gamma_f^- \) & Recovery from \( I^- \) & 0.1  \\
		\( \gamma_m^+, \gamma_f^+ \) & Recovery from \( I^+ \) & 0.1  \\
		\( \gamma_{m}^s, \gamma_{f}^s \) & Recovery from symptomatic \( I^s \) & 0.1  \\
		\( \rho_m^1, \rho_m^2, \rho_f^3, \rho_f^4 \) & Immunity loss (recovered) & 1/760,1/30,1/760,1/30 \\
		$\halfsat$ & half-saturation parameter & 200,000\\
		\( \theta \) & Exposure rate of susceptible vectors & 0.02 \\
		\( \nu \) & Infection rate of exposed vectors & 0.10  \\
		\( \mu \) & Vector death rate & 0.05  \\
		\( B_v(t) \) & Vector birth rate (seasonal) & calibrated  \\
		\( \omega \) & Male-to-female transmission probability & 0.05 \\
        \( N_m , N_f, N_{v,0} \) & Population of males, females, and vectors & 10{,}000, 10{,}000, 100{,}000 \\
				\( \vartheta_m^{1,2,3}, \vartheta_f^{1,2,3} \) & Transition between cascade states & 3/730  \\
		\midrule
		\(  \dfcdmean^I, \dfcdmean^E,\dfcdmean^R \) &  DFC Mean of $I, E, R$  & 7,10,7  \\
		\(  \dfcdvar^I, \dfcdvar^E, \dfcdvar^R  \) &  DFC Variance of $I, E,  R$ & 1,1,1 \\
		\( \eccmean^{S},\eccmean^{E},\eccmean^{I}  \) &  Mean  of $S_{v,0}, E_{v,0} I_{v,0} $  & 89,000,\  6,000,\  5,000 \\
		\( \eccvar^{S},\eccvar^{E},\eccvar^{I}  \) &  Variance   of $S_{v,0}, E_{v,0} I_{v,0} $  & $9,500^2,\  800^2,\  700^2$ \\
		\bottomrule
	\end{tabular}
	\caption{Model Parameters for Zika Transmission Dynamics}
	\label{table : Extend_Zika_Param}
\end{table}
}

\subsection{Initialization of the Extended Zika Model}

We present a strategy for initializing the extended Zika epidemic model under partial observation, tailored to the epidemiological context of the 2015–2016 Brazilian outbreak. The model distinguishes between a hidden state vector $Y = (Y^1, Y^2, \dots, Y^{13})^\top$, comprising unobserved compartments such as undetected infections, exposures, undetected recoveries, and susceptibles across male, female, and mosquito populations, and an observable vector $Z = (Z^1, Z^2, \dots, Z^{10})^{\top}$, which includes reported symptomatic and asymptomatic infections as well as confirmed recoveries.

A major challenge in modeling Zika dynamics is the substantial underreporting of cases, largely due to mild or asymptomatic infections and limited diagnostic capacity. To account for this, we introduce dark figure coefficients (DFCs),  which constructs the initial filter estimates $\condmean_0,\condvar_0$. These coefficients are informed by epidemiological evidence and allow us to incorporate prior knowledge about the hidden epidemic burden into the initialization.
 The true initial values of the hidden state $Y_0$ are drawn from the Gaussian distribution \( \mathcal{N}(\condmean_0, \condvar_0). \) 
 
 For each detected infected male ($I_m^+$), we assume approximately four undetected infections ($I_m^-$) and five exposures ($E_m$). Similarly, for each observed infected female ($I_f^+$), we assume three undetected infections ($I_f^-$) and four exposures ($E_f$). For recoveries, each observed case in stage one or two is associated with two undetected recoveries in the same stage. These assumptions provide a structured and epidemiologically plausible initialization of the hidden compartments.


Applying the dark figure coefficients to the initial observations provides consistent estimates for the hidden compartments, as summarized in Table~\ref{tab:initial_state_estimates}. These estimates define the initial values of the hidden state vector $Y$ at $t=0$ and serve as the prior for simulating the Zika epidemic dynamics. The resulting trajectories are then used to assess the performance of the proposed filtering method.

{  \begin{table}[h!]
	\centering
	
	\begin{tabular}{|c|c|c||c|c|}
		\hline
		\multicolumn{5}{|c|}{\textbf{Hidden States}}\\\hline
        Index & State Variable &{Initial value} & $M_0^{(i)}$ (Initial Mean) & $(Q_{0}^{ii})^2$ (Initial Variance) \\
		\hline
        1  & $I_m^{-}$ & 80  & $210$      & $30^2$ \\
			2  & $E_m$  &  100    & $300$      & $30^2$ \\
			3  & $R_m^{1-}$& 36   & $126$      & $18^2$ \\
			4  & $R_m^{2-}$ & 32  & $0$       & $0$ \\
			5  & $S_m$ &   9{,}752    & $9{,}364$ & $30^2 + 18^2$ \\
			\midrule
			6  & $I_f^{-}$ & 75  & $259$      & $37^2$ \\
			7  & $E_f$  &   100   & $370$      & $30^2$ \\
			8  & $R_f^{1-}$ & 36  & $126$      & $18^2$ \\
			9  & $R_f^{2-}$ & 32  & $0$       & $0$ \\
			10 & $S_f$ &  9{,}757     & $9{,}245$ & $37^2 + 17^2$ \\
			\midrule
			11 & $I_v$ &  2{,}000     & $5{,}000$   & $700^2$ \\
			12 & $E_v$ &  3{,}000     & $6{,}000$   & $800^2$ \\
			13 & $S_v$ &   95{,}000    & $89{,}000$  & $9500^2$ \\	
		\hline
	\end{tabular}
	\\[2ex]
	\begin{tabular}{|c|c|c||c|c|c|}
		\hline
		\multicolumn{6}{|c|}{\textbf{Observable States}}\\\hline
		$i$& {Variable} & {Initial value}& $i$& {Variable} & {Initial value} \\
		& $Z^{i}$ & $Z_{0}^{i}$ &  &  $Z^{i}$ & $Z_{0}^{i}$\\
		\hline
		$1$ & \( I^{+}_m \)  & 10  & 6 &  \( R^{2}_m \) &  0 \\
		$2$ & \( I^{s}_m  \)      & 20  & 7 &  \( R^{3}_m \) &  0 \\
		$3$ &\(I^{+}_f \)        & 12  & 8 &  \(R^{1}_f \) &   18 \\	
		$4$ & \( I^{s}_f \)      &  25  & 9 &  \( R^{2}_f \) &  0 \\
        $5$ & \( R^{1}_m \)      &   18  & 10 &  \( R^{3}_f \) &  0 \\	
			
		\hline
	\end{tabular}	
	\caption{Initial conditional expectations and variances for hidden state variables, consistent with Proposition~\ref{prop:init_law}, used to construct $M_0$ and the diagonal blocks of $Q_0$.
		\label{tab:initial_state_estimates}
		Top: Hidden states $Y_1,\ldots,Y_{13}$  with their initial estimates used for filtering. Bottom: Observable states $Z_1,\ldots,Z_{4}$	}
	\label{tab:init_vals}
\end{table}}

This Table (\ref{tab:initial_state_estimates}) reflects the initialization of the hidden state vector $Y_0$ used to simulate epidemic progression in a population. The dark-figure coefficient for compartments such as $I_m^-$, $E_m$, $I_f^-$, and $E_f$ account for unreported cases. Recovered individuals are similarly adjusted to reflect potential underreporting. The susceptible compartments are calculated by removing all known and estimated compartments from the total. The vector compartments ($I_v$, $E_v$, $S_v$) are set using biologically plausible proportions to maintain realistic initial transmission pressure.

\subsection{Initialization of the Filter Estimate and Covariance Matrix}
\label{sec:InitializationFilter}
Computing the filter estimates requires the initialization of the filter processes  $ \condmeanEKF, \condvarEKF $ at time $n=0$.  
In the following, we present for the extended stochastic Zika model with partial observations how  $ \condmeanEKF_0$  and $\condvarEKF_0 $
can be constructed from the prior information contained in $\Fprior$ and the first observation $Z_0$.  
Analogously to the Covid-19 case described in \cite{Kamkumo2}  Section 2.3, the initialization is not yet affected by linearization errors that occur for $n > 0$ in the EKF approximation of the ``true'' filter processes $\condmean_n,\condvar_n$. Therefore, we omit the tilde and simply write $\condmean_0,\condvar_0$ in the sequel.

\paragraph{Dark Figure Coefficients (DFC)}
We employ DFC to integrate expert knowledge on the ratio of undetected to detected cases into the initialization.  
Using the placeholder notation $\dagger=m,f$ for male and female compartments, we define  
\begin{align}\label{def:DFC_Zika}
	\dfc_{n,\dagger}^{I} &= \frac{I_{\dagger,n}^{-}}{I_{\dagger,n}^+ + I_{\dagger,n}^s}, \quad   
	\dfc_{n,\dagger}^{E} = \frac{E_{\dagger,n}}{I_{\dagger,n}^+ + I_{\dagger,n}^s},\quad 
	\dfc_{\dagger,n}^{R} = \frac{R_{\dagger,n}^{1-}}{R_{\dagger,n}^{1}  + R_{\dagger,n}^2 + R_{\dagger,n}^3   }
\end{align}
so that $\dfc_{\dagger,n}^{I}$ denotes the number of undetected infected  per detected or symptomatic infected in the male ($\dagger=m$) and female ($\dagger=f$) population. Similarly,  $\dfc_{\dagger,n}^{E}$ represents the number of  exposed  per  detected or symptomatic infected. The DFC $\dfc_{\dagger,n}^{R}$ relates 
the recovered individuals with waning immunity to the observable recovered in the cascade states.  

Throughout the following, the symbol $\#$ is used as a generic placeholder that may refer to any of the compartments $I$, $E$, or $R$, respectively.

	\begin{assumption}\label{ass:dfc_zika}
		\quad At initial time $n=0$ we assume:
		\begin{enumerate}
			\item Given the prior information $\Fprior$, the initial DFCs $\dfc_{0,\dagger}^{\#}$, are conditionally independent and Gaussian-distributed:
			\[
			\dfc_{0,\dagger}^{\#} \sim \mathcal{N}(\dfcdmean^{\#}, (\dfcdvar^{\#})^2). 
			\]
			\item $R_{\dagger,0}^{2-}$  start with zero initial values, i.e. $R_{\dagger,0}^{2-}=0.$
			\item The susceptible compartments $S_\dagger$  are determined by normalization within each subpopulation:
			\[
			S_{\dagger,0} = N_{\dagger} - (I_{\dagger,0}^{-} + E_{\dagger,0} + R_{\dagger,0}^{1-} + R_{\dagger,0}^{2-} + I_{\dagger,0}^{+} + I_{\dagger,0}^{s} ).
			\]
			\item Given the prior information $\Fprior$, the initial vector compartments \((I_{v,0}, E_{v,0}, S_{v,0})\) are assumed to be conditionally independent, independent of $Z_0$ 	and Gaussian:
			\[
			I_{v,0}\sim \mathcal N\!\big(\eccmean^{I},(\eccvar^{I})^{2}\big),\qquad
			E_{v,0}\sim \mathcal N\!\big(\eccmean^{E},(\eccvar^{E})^{2}\big),\qquad
			S_{v,0}\sim \mathcal N\!\big(\eccmean^{S},(\eccvar^{S})^{2}\big).
			\]

		\end{enumerate}
	\end{assumption}

\noindent 	
The second assumption, $R_{\dagger,0}^{2-}=0$, is consistent with the epidemiological interpretation of these states. It only considers individuals with waning immunity after they have passed through the cascade states, in which complete immunity is assumed.

Based on this assumption we derive the following initial filter estimates $M_0,Q_0$.
	
	\begin{proposition}\label{prop:init_law}
		For the extended Zika model and under  Assumption~\ref{ass:dfc_zika}, the conditional distribution of the   the initial state $Y_0=(Y_0^1,\ldots,Y_0^{13})$ given $\mathcal{F}_0^Z$, i.e., the prior information $\Fprior$ and the initial observation \(Z_0\) is Gaussian. The   mean and the covariance matrix of this distribution are given by 
		\[
		M_0=\mathbb{E}[Y_0\mid \mathcal{F}_0^Z]=M_0 = \begin{pmatrix}
			M_0^{m} \\[2mm]
			M_0^{f} \\[2mm]
			M_0^{v}
		\end{pmatrix},\quad
		Q_0=\cov(Y_0\mid \mathcal{F}_0^Z) =\footnotesize
		\begin{pmatrix}
			\boxed{Q_0^m} & \mathbf{0}_{5\times 5} & \mathbf{0}_{5\times 3}\\
			\mathbf{0}_{5\times 5} & \boxed{Q_0^f} & \mathbf{0}_{5\times 3}\\
			\mathbf{0}_{3\times 5} & \mathbf{0}_{3\times 5} & \boxed{Q_0^v}
		\end{pmatrix}.
		\]
		Here, 
		{\small 
			\begin{align}
				M_0^{\dagger}&=
				\Big(
				\dfcmean^{I} A_\dagger,\;
				\dfcmean^{E} A_\dagger,\;
				\dfcmean^{R} B_\dagger,\;
				0,\;
				N_\dagger-(\dfcmean^{I} A_\dagger+\dfcmean^{E} A_\dagger +\dfcmean^{R} B_\dagger +A_\dagger)
				\Big)^\top, M_0^{v}=(\eccmean^{I},\,\eccmean^{E},\,\eccmean^{S})^\top,\\
				Q_0^{\dagger}
				&=
				\begin{pmatrix}
					a_I & 0   & 0   & 0 & -a_I\\
					0   & a_E & 0   & 0 & -a_E\\
					0   & 0   & b_R & 0 & -b_R\\
					0   & 0   & 0   & 0 & 0\\
					-a_I & -a_E & -b_R & 0 & a_I+a_E+b_R
				\end{pmatrix}, ~~
				Q_0^{v} =
				\begin{pmatrix}
				(\eccvar^{I})^{2} & 0 & 0\\[2pt]
				0 & (\eccvar^{E})^{2} & 0\\[2pt]
				0 & 0 & (\eccvar^{S})^{2}
				\end{pmatrix},
			\end{align}
		}
		$A_{\dagger}:=I_{\dagger,0}^{+}+I_{\dagger,0}^{s},\quad B_{\dagger}:=R_{\dagger,0}^{1}+R_{\dagger,0}^{2}+R_{\dagger,0}^{3},$ \(\quad C_{\dagger}:=N_{\dagger}-\big(E_{\dagger,0}+I_{\dagger,0}^{+}+I_{\dagger,0}^{s}\big)\)\\
		\(a_I:=A_{\dagger}^2\,(\dfcdvar^{I})^2,\quad
		a_E:=A_{\dagger}^2\,(\dfcdvar^{E})^2,\quad
		b_R:=B_{\dagger}^2\,(\dfcdvar^{R})^2.\)
	\end{proposition}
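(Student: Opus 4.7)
The plan is to represent the initial hidden state $Y_0$ as an affine function of random variables that are jointly Gaussian conditional on $\mathcal{F}_0^Z=\sigma(Z_0)\vee\Fprior$, so that the conditional law of $Y_0$ is automatically Gaussian. Conditioning on $\mathcal{F}_0^Z$ renders the observable aggregates
\begin{align*}
A_\dagger=I_{\dagger,0}^{+}+I_{\dagger,0}^{s},\qquad B_\dagger=R_{\dagger,0}^{1}+R_{\dagger,0}^{2}+R_{\dagger,0}^{3}
\end{align*}
deterministic constants. From the definitions \eqref{def:DFC_Zika} of the dark figure coefficients, the hidden compartments within each gender block $\dagger\in\{m,f\}$ satisfy
\begin{align*}
I_{\dagger,0}^{-}=A_\dagger\,\dfc_{0,\dagger}^{I},\quad E_{\dagger,0}=A_\dagger\,\dfc_{0,\dagger}^{E},\quad R_{\dagger,0}^{1-}=B_\dagger\,\dfc_{0,\dagger}^{R},\quad R_{\dagger,0}^{2-}=0,
\end{align*}
while $S_{\dagger,0}$ is obtained from the normalization relation of Assumption~\ref{ass:dfc_zika}(3).

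First I would invoke Assumption~\ref{ass:dfc_zika}(1), which ensures that the DFCs $\dfc_{0,\dagger}^{\#}$ are Gaussian and pairwise independent given $\Fprior$, and combine this with the fact that the DFCs are, by construction, independent of $Z_0$. Hence their conditional law given $\mathcal{F}_0^Z$ coincides with the prior law $\mathcal{N}(\dfcdmean^{\#},(\dfcdvar^{\#})^{2})$. Each gender subvector of $Y_0$ is therefore, conditional on $\mathcal{F}_0^Z$, an affine transformation of three independent Gaussians with coefficients determined by $A_\dagger,B_\dagger,N_\dagger$. Together with the independent Gaussian vector block $(I_{v,0},E_{v,0},S_{v,0})$ of Assumption~\ref{ass:dfc_zika}(4), the stacked vector $Y_0$ is an affine transformation of a joint Gaussian vector, hence itself conditionally Gaussian.

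Next I would compute $M_0$ by taking conditional expectations componentwise, which immediately yields the stated $M_0^\dagger$ and $M_0^v=(\eccmean^{I},\eccmean^{E},\eccmean^{S})^{\top}$. For the covariance matrix $Q_0$ I would proceed block by block. The block-diagonal structure follows from the mutual independence of the male, female and vector random inputs guaranteed by Assumption~\ref{ass:dfc_zika}. Within a gender block, the diagonal entries for $I_{\dagger,0}^{-}$, $E_{\dagger,0}$, $R_{\dagger,0}^{1-}$ evaluate to $a_I=A_\dagger^{2}(\dfcdvar^{I})^{2}$, $a_E=A_\dagger^{2}(\dfcdvar^{E})^{2}$, $b_R=B_\dagger^{2}(\dfcdvar^{R})^{2}$ by conditional independence of the three DFCs, and the row/column for $R_{\dagger,0}^{2-}=0$ is identically zero. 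The remaining entries follow from the normalization identity $S_{\dagger,0}=C_\dagger-I_{\dagger,0}^{-}-E_{\dagger,0}-R_{\dagger,0}^{1-}$, which gives $\cov(I_{\dagger,0}^{-},S_{\dagger,0})=-a_I$, $\cov(E_{\dagger,0},S_{\dagger,0})=-a_E$, $\cov(R_{\dagger,0}^{1-},S_{\dagger,0})=-b_R$, and $\var(S_{\dagger,0})=a_I+a_E+b_R$. The vector block $Q_0^{v}=\diag((\eccvar^{I})^{2},(\eccvar^{E})^{2},(\eccvar^{S})^{2})$ is diagonal by the assumed conditional independence of $I_{v,0}$, $E_{v,0}$, $S_{v,0}$.

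The main obstacle is not analytical but notational: carefully tracking the bookkeeping for the $5\times 5$ gender blocks and verifying that the sign pattern produced by the normalization constraint for $S_{\dagger,0}$ matches exactly the structure of $Q_0^\dagger$, in particular the negative off-diagonal entries on the last row and column together with the variance $a_I+a_E+b_R$ at position $(5,5)$. Once this block is verified, the full covariance $Q_0$ follows by assembling the three independent blocks.
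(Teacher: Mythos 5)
Your proposal is correct and follows essentially the same route as the paper's own proof in Appendix~\ref{Proof_details_IFE}: you represent each hidden component as an affine function of the conditionally independent Gaussian DFCs and vector compartments given \(\mathcal{F}_0^Z\), deduce conditional Gaussianity and the block-diagonal structure from independence, and obtain the negative covariances and the \((5,5)\) entry \(a_I+a_E+b_R\) from the population-balance constraint on \(S_{\dagger,0}\). No substantive differences or gaps.
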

	
	\begin{proof}
		The result follows by writing each random component as an affine transformation 
		of independent Gaussian variables and applying the standard properties of expectation, variance 
		and covariance. A detailed derivation of the conditional mean and covariance is provided 
		in Appendix~\ref{Proof_details_IFE}.
	\end{proof}

	\subsection*{Initial Values for the Filter Estimate}
	
	The construction of the initial mean vector $M_0$ and the associated conditional covariance matrix $Q_0$ follows Proposition~\ref{prop:init_law}.

	The resulting conditional means and variances for each hidden state variable are summarized in Table~\ref{tab:initial_state_estimates}. 
	The entries are computed directly from the explicit formulas for $M_0^m,M_0^f,M_0^v$ and $Q_0^m,Q_0^f,Q_0^v$ given in Proposition~\ref{prop:init_law}. 
	By construction, the covariance matrix $Q_0$ is block diagonal, encoding both the independence structure across male, female, and vector compartments and the correlations implied within each block through the population balance constraints.

	This initialization provides a structured and interpretable starting point for deploying the extended Kalman filter in the context of stochastic Zika epidemic modeling under partial observation.

	\begin{figure}[h]
		\centering
		
		\begin{minipage}{0.48\textwidth}
			\centering
			\includegraphics[width=\linewidth]{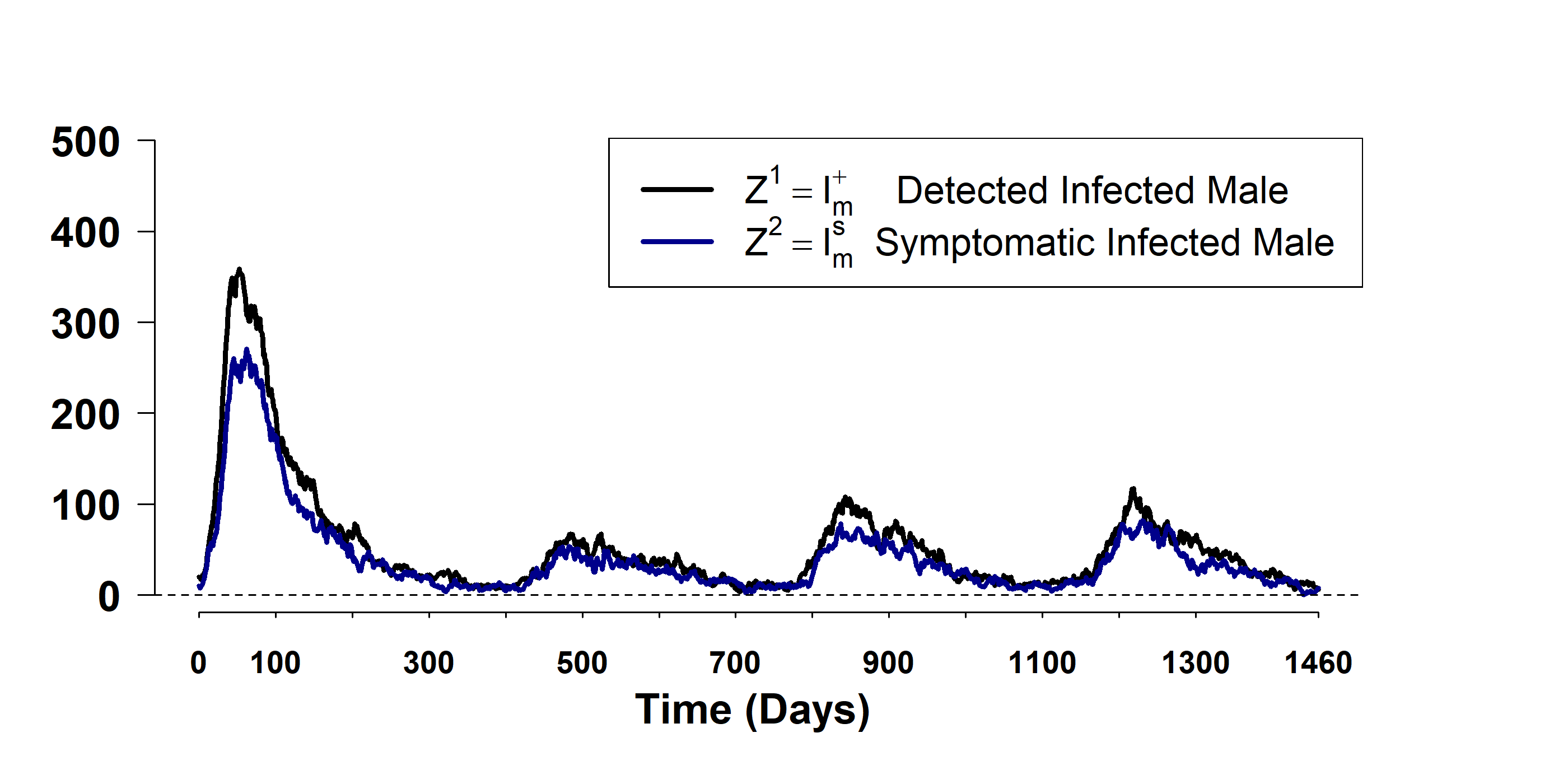}
			\caption*{Detected infected male and detected symptomatic male.}
		\end{minipage}
		\hfill
		\begin{minipage}{0.48\textwidth}
			\centering
			\includegraphics[width=\linewidth]{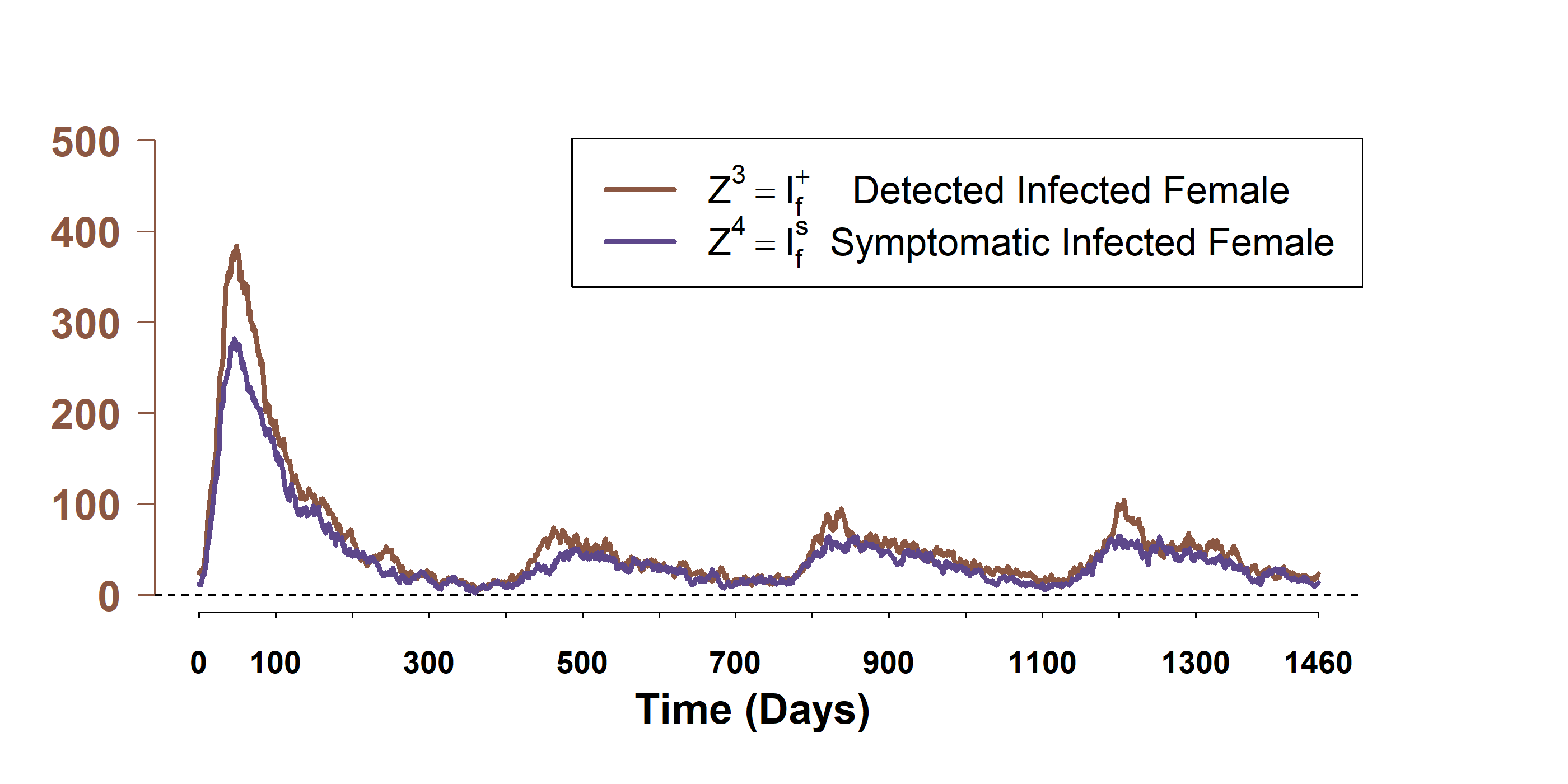}
			\caption*{Detected infected female and detected symptomatic female.}
		\end{minipage}
		
		\vspace{0.3cm}
		
		\begin{minipage}{0.48\textwidth}
			\centering
			\includegraphics[width=\linewidth]{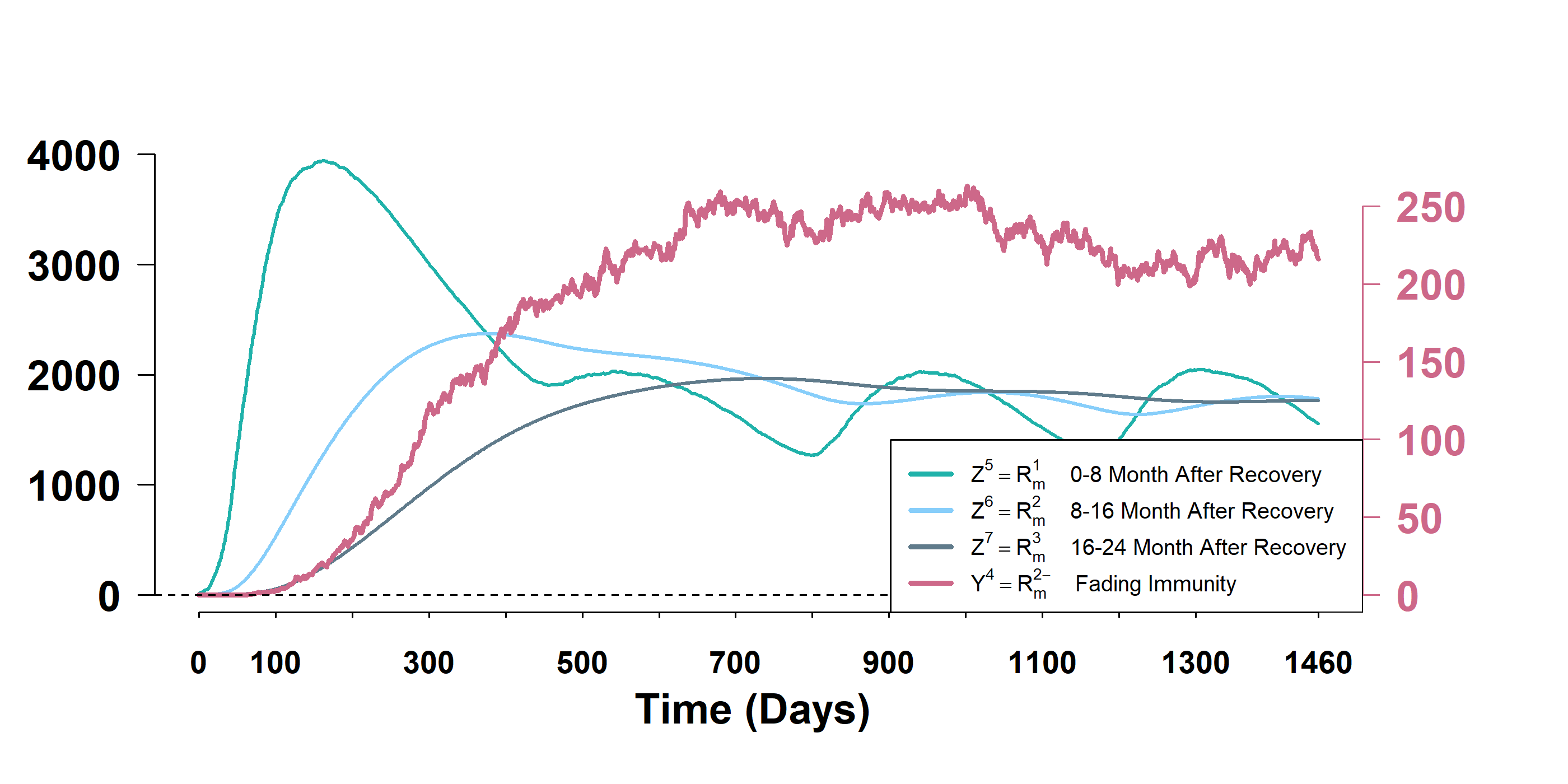}
			\caption*{ Recovery Cascade States for Male .}
		\end{minipage}
		\hfill
		\begin{minipage}{0.48\textwidth}
			\centering
			\includegraphics[width=\linewidth]{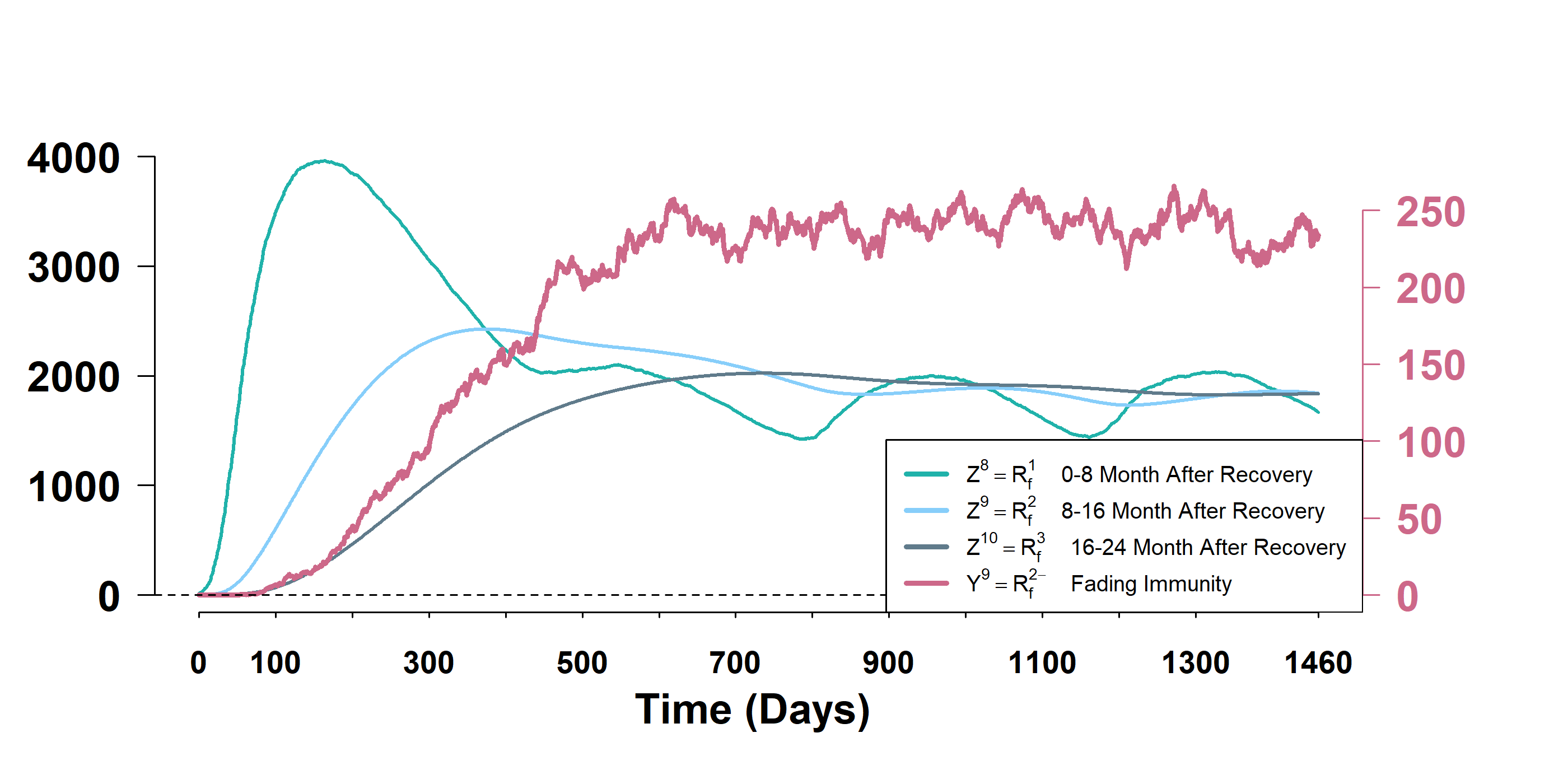}
			\caption*{Recovery Cascade States for Female.}
		\end{minipage}
		
		\caption{Observations of the 10 observable compartments of the extended Zika model. Each figure shows the observable compartment.}
		\label{fig:Observation_estimates_zika}
	\end{figure}

	\begin{figure}[h]
		\centering
		\subfloat[Undetected Infected Male ($I_m^-$)]{
			\includegraphics[width=0.45\textwidth]{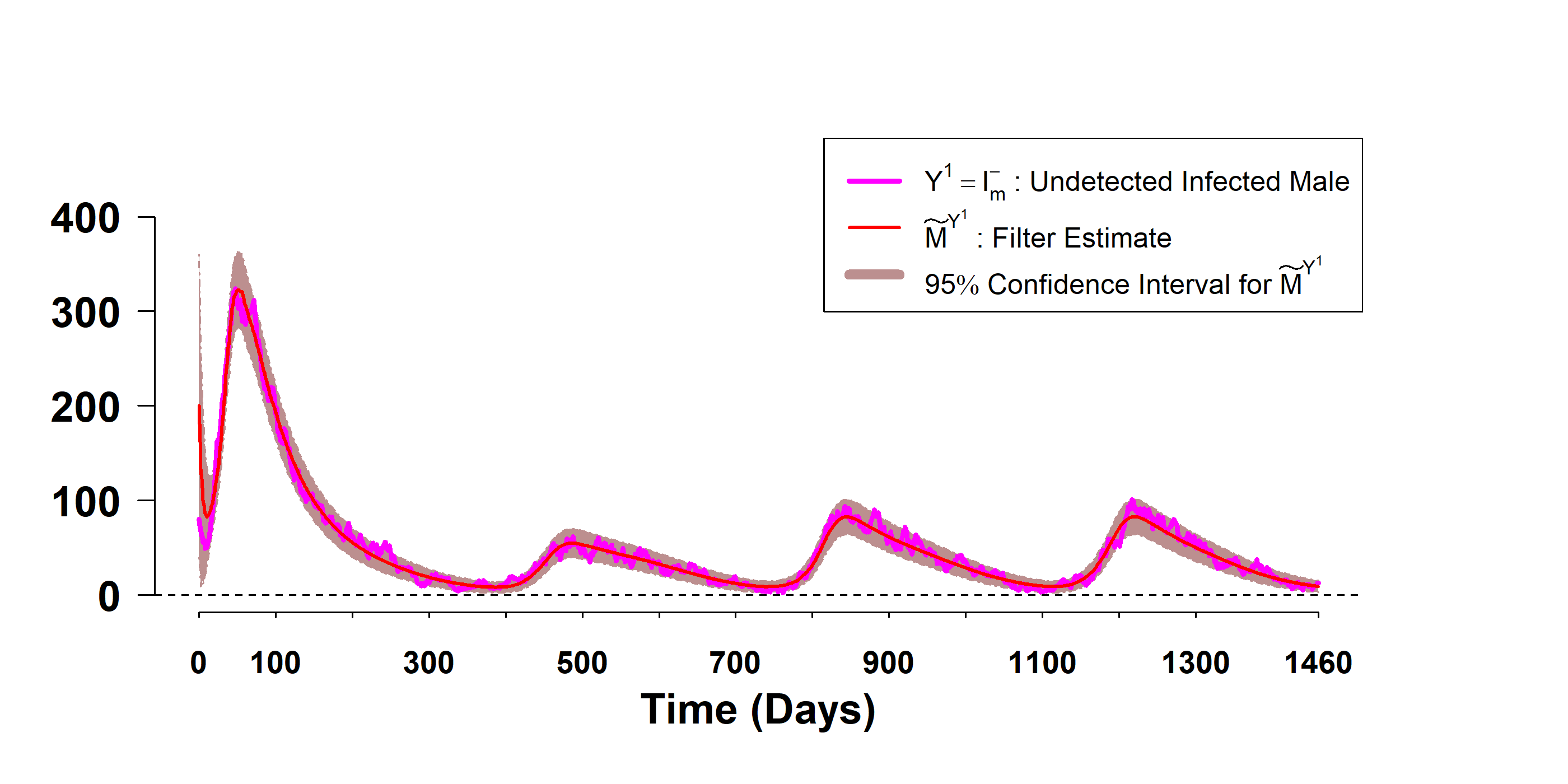}
		}
		\hfill
		\subfloat[Undetected Infected Female ($I_f^-$)]{
			\includegraphics[width=0.45\textwidth]{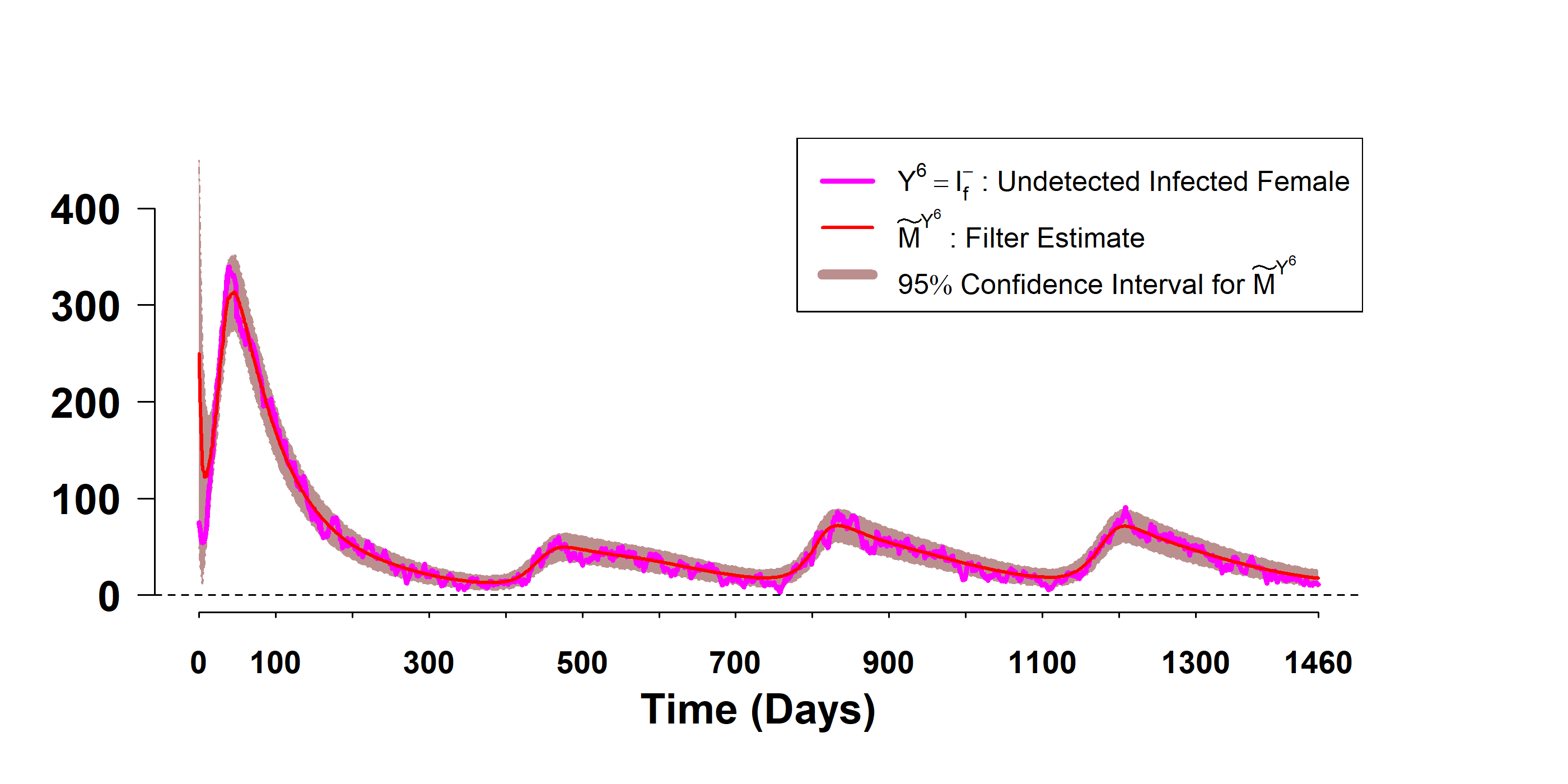}
		}
		
		\subfloat[Exposed Male ($E_m$)]{
			\includegraphics[width=0.45\textwidth]{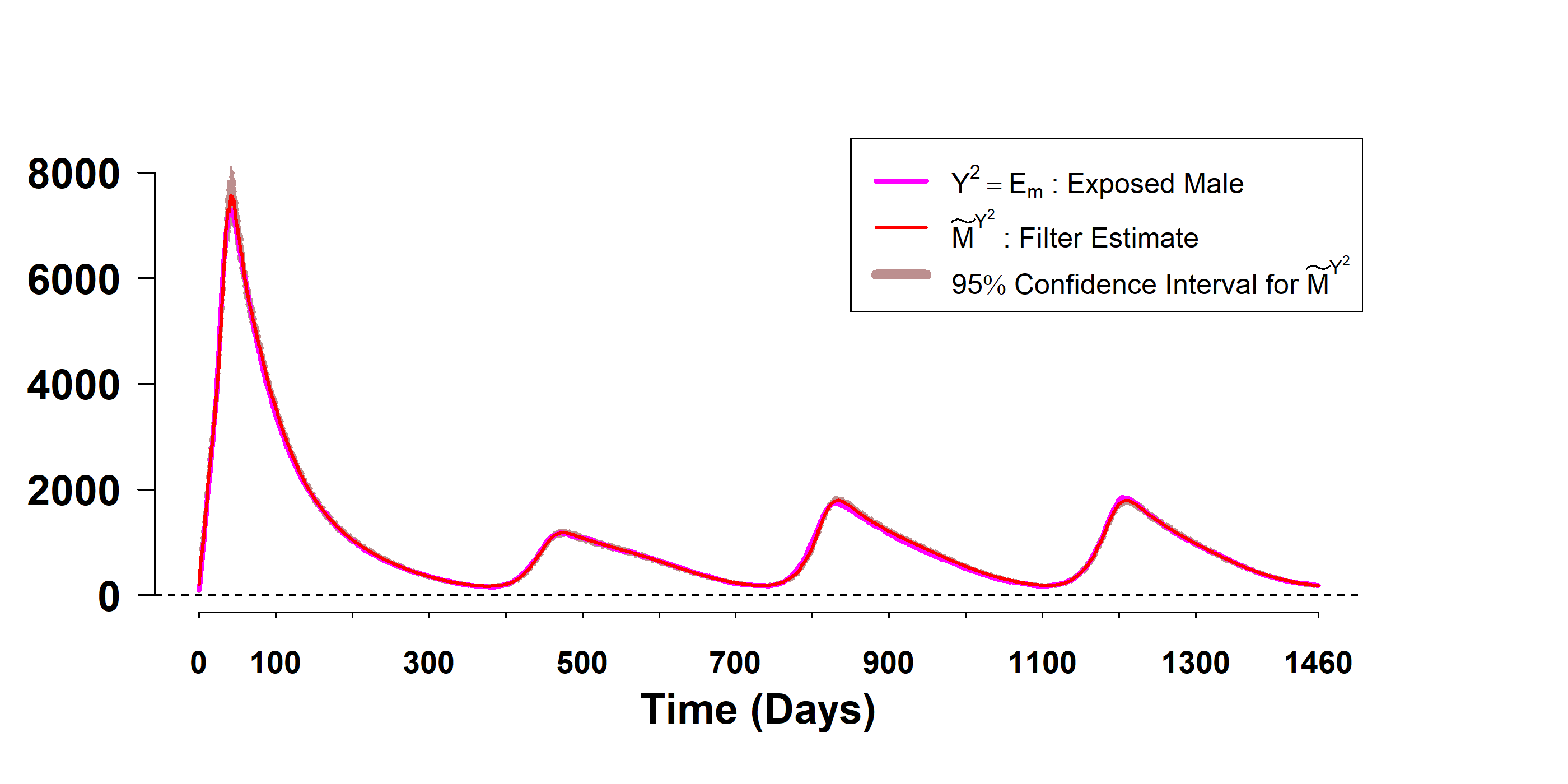}
		}
		\hfill
		\subfloat[Exposed Female ($E_f$)]{
			\includegraphics[width=0.45\textwidth]{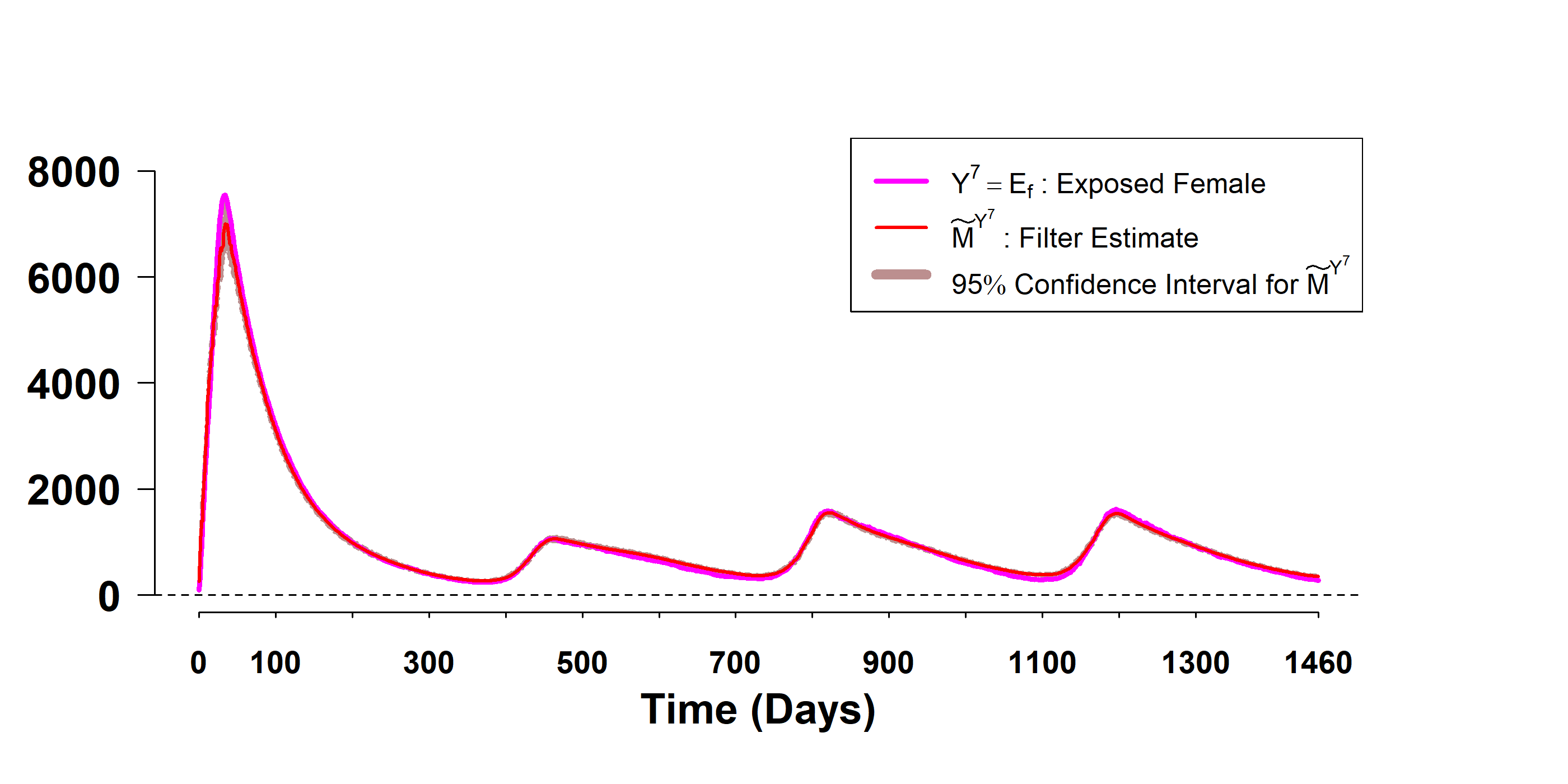}
		}
		
		\subfloat[Recovered Male (1) ($R_m^{1-}$)]{
			\includegraphics[width=0.45\textwidth]{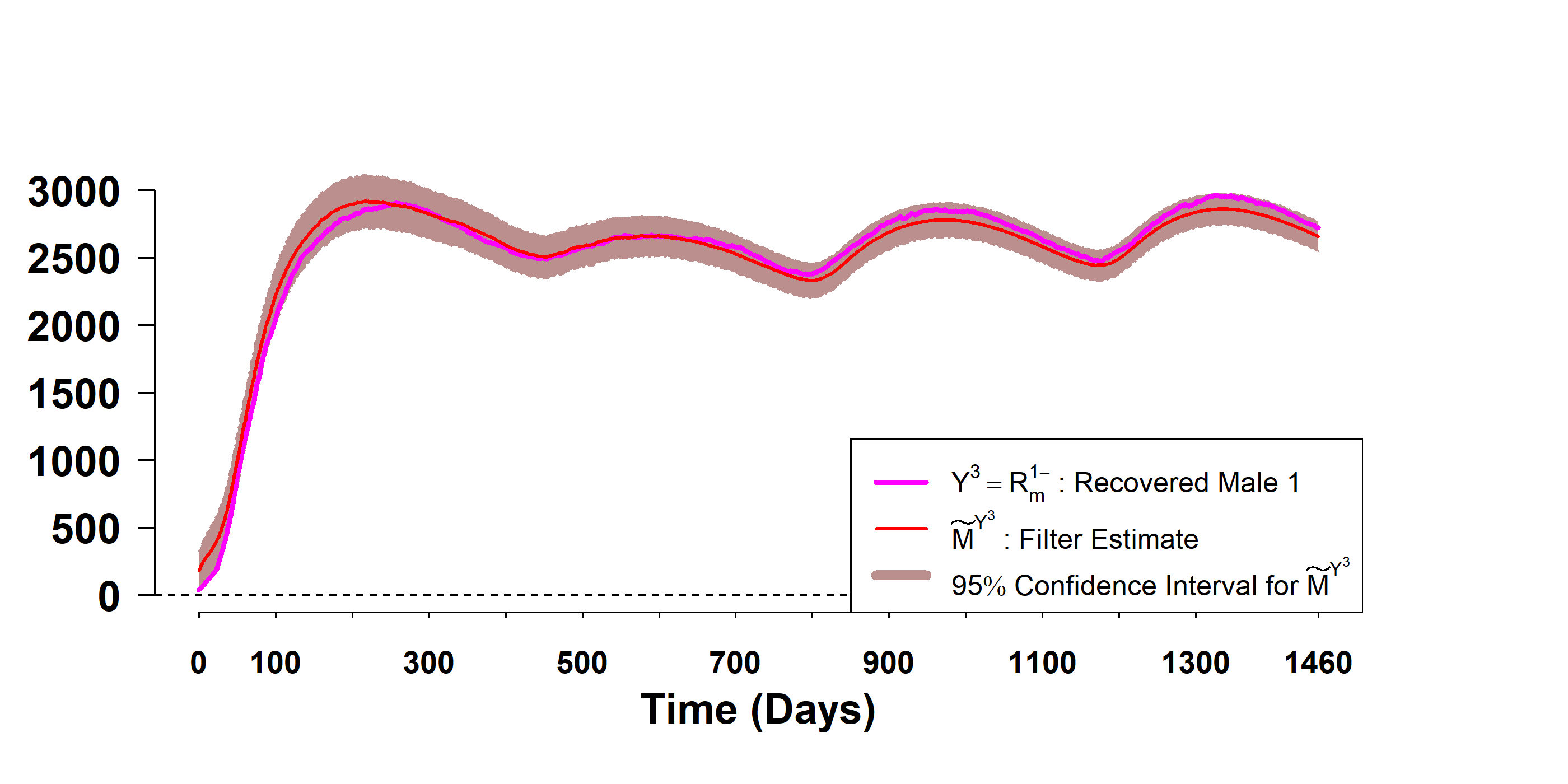}
		}
		\hfill
		\subfloat[Recovered Female (1) ($R_f^{1-}$)]{
			\includegraphics[width=0.45\textwidth]{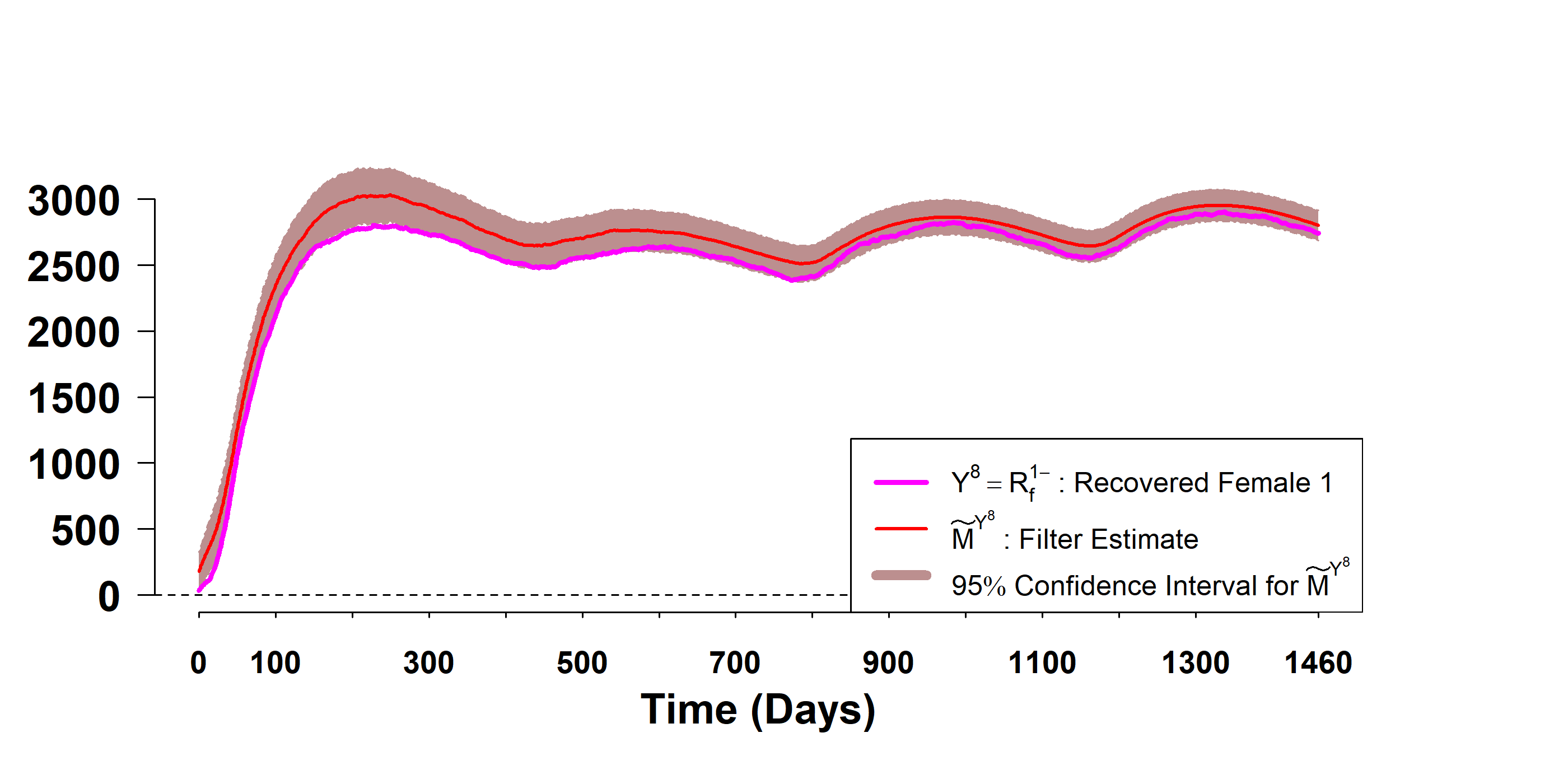}
		}
		
		\subfloat[Recovered Male (2) ($R_m^{2-}$)]{
			\includegraphics[width=0.45\textwidth]{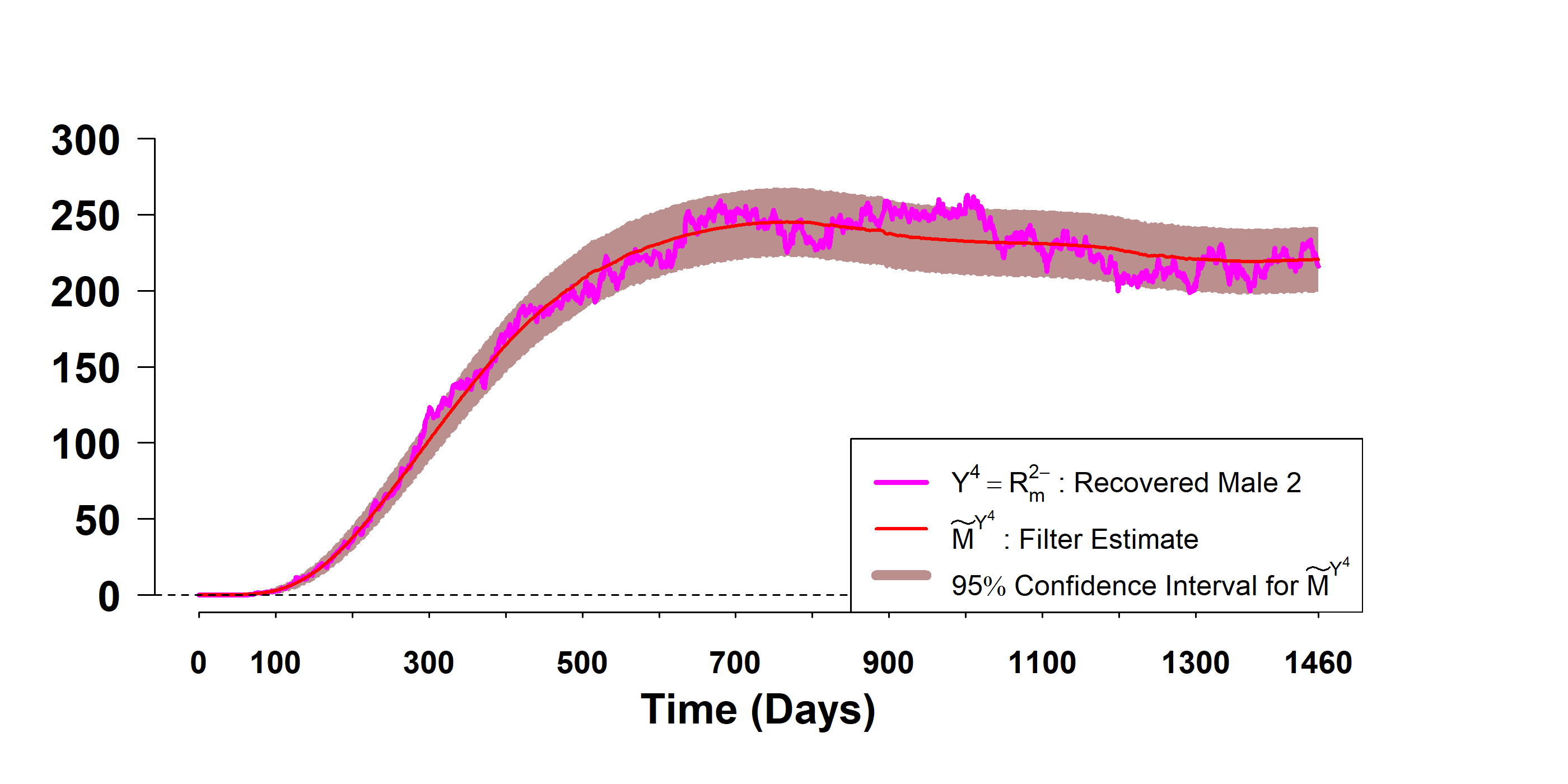}
		}
		\hfill
		\subfloat[Recovered Female (2) ($R_f^{2-}$)]{
			\includegraphics[width=0.45\textwidth]{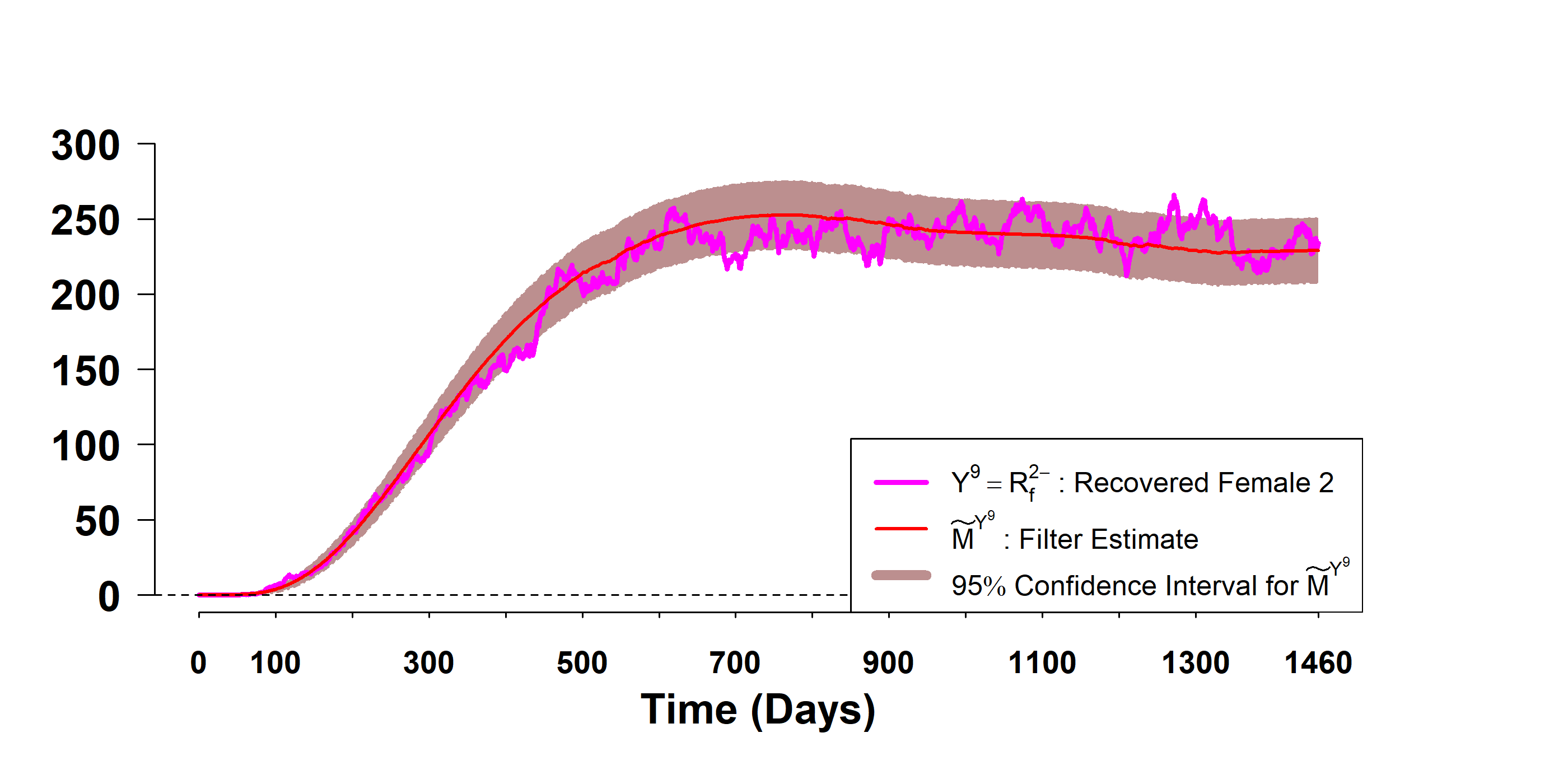}
		}
		
		\subfloat[Susceptible Male ($S_m$)]{
			\includegraphics[width=0.45\textwidth]{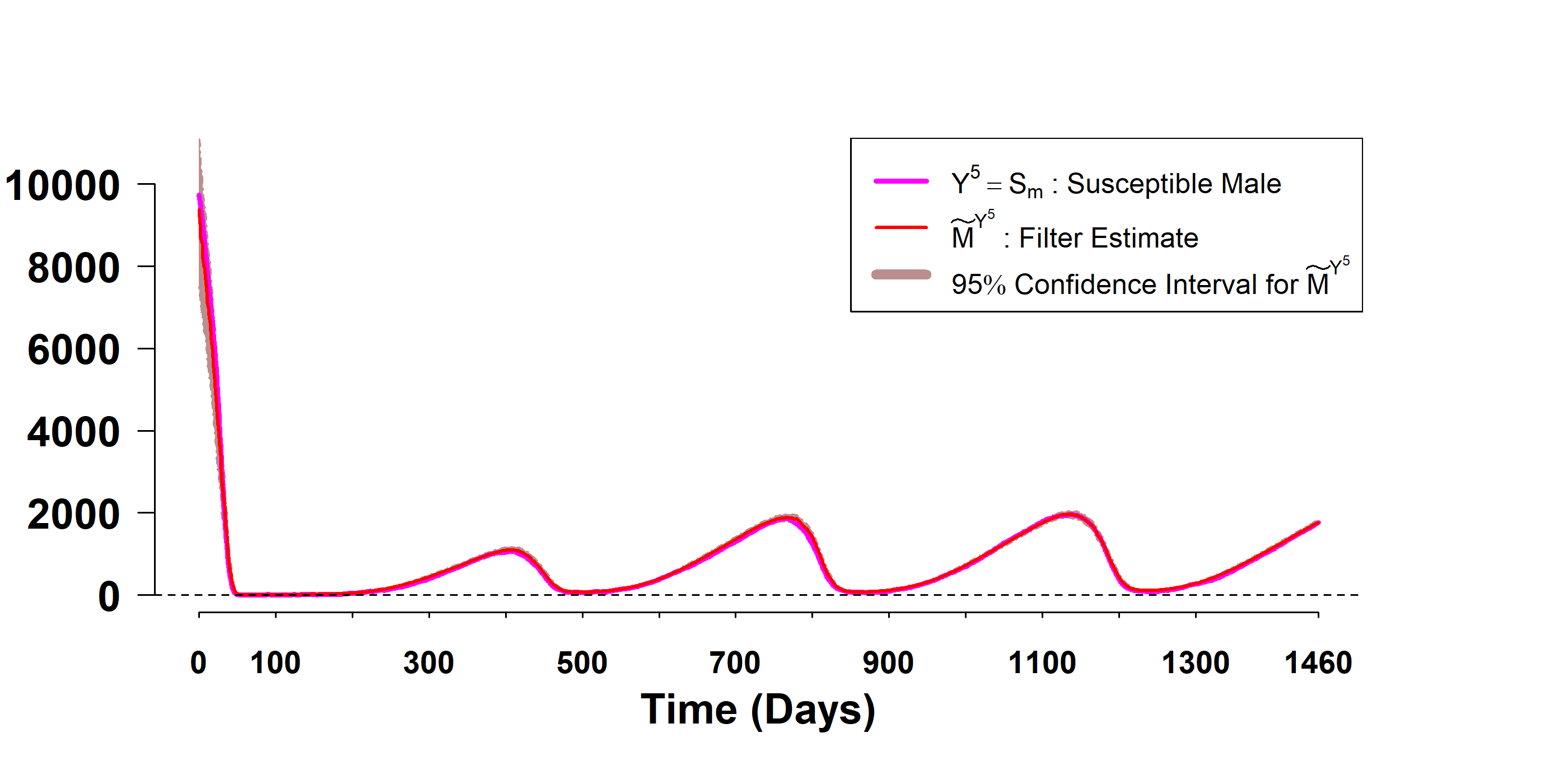}
		}
		\hfill
		\subfloat[Susceptible Female ($S_f$)]{
			\includegraphics[width=0.45\textwidth]{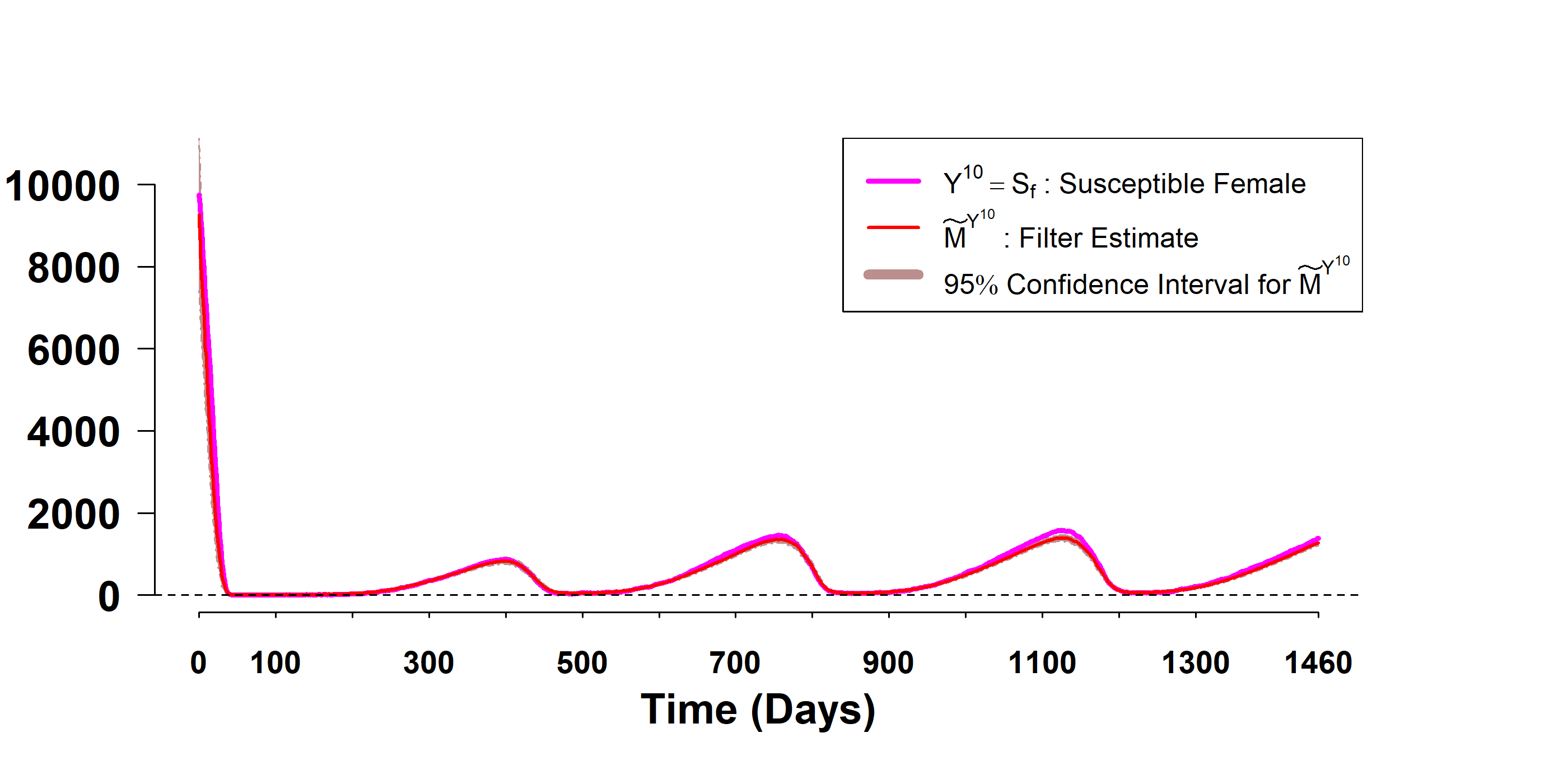}
		}
		
		\caption{Filter estimates and confidence bands for hidden human state variables over time. The left column shows male compartments, and the right column shows female compartments. The magenta curve shows the true simulated state, red the filter estimate, and the gray band represents the 95\% confidence interval.}
		\label{fig:filter_estimates_human}
	\end{figure}
	
	\begin{figure}[h]
		\centering
		\subfloat[Exposed Vectors ($E_v$)]{
			\includegraphics[width=0.45\textwidth]{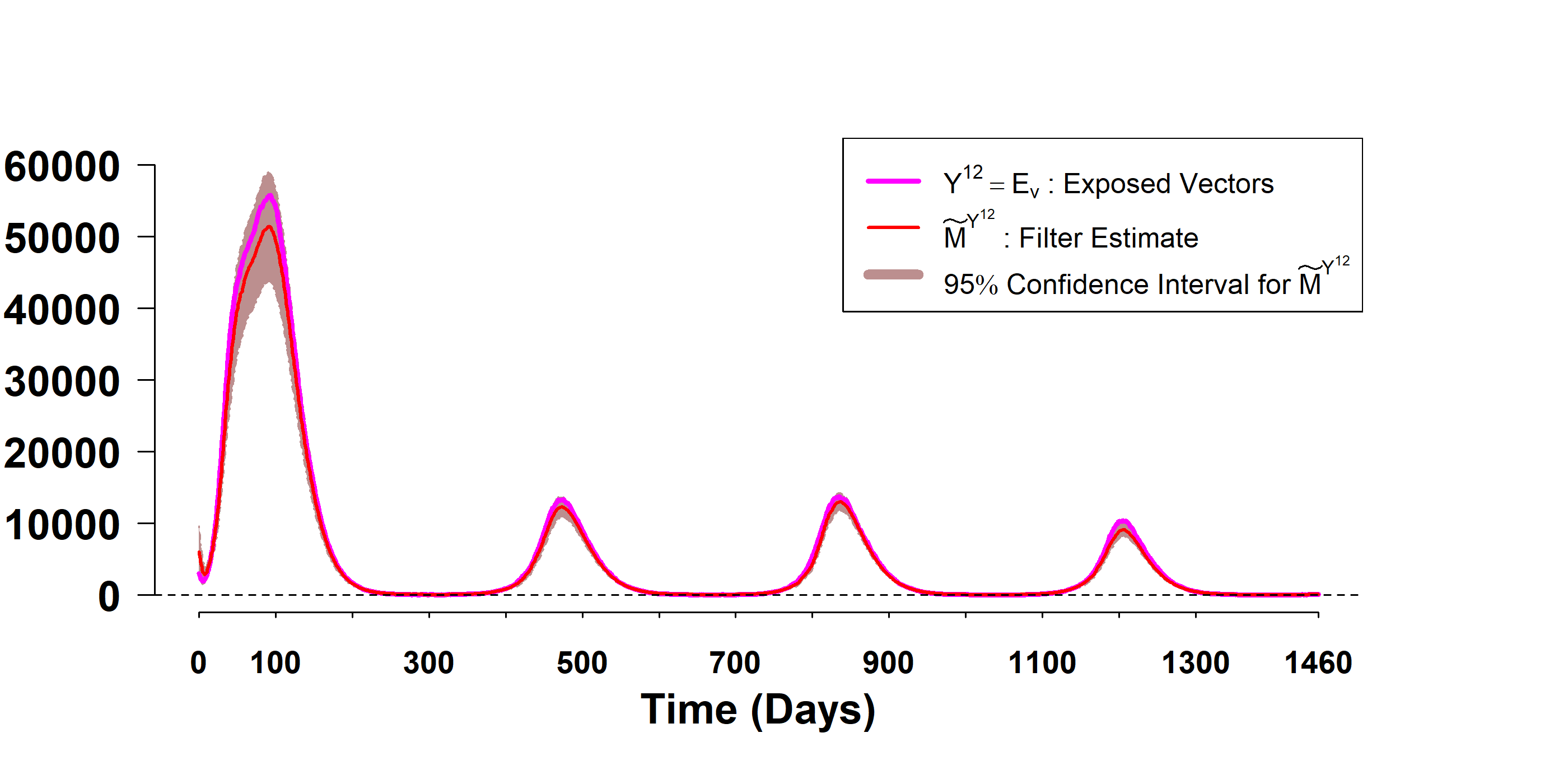}
		}
		\hfill
		\subfloat[Infected Vectors ($I_v$)]{
			\includegraphics[width=0.45\textwidth]{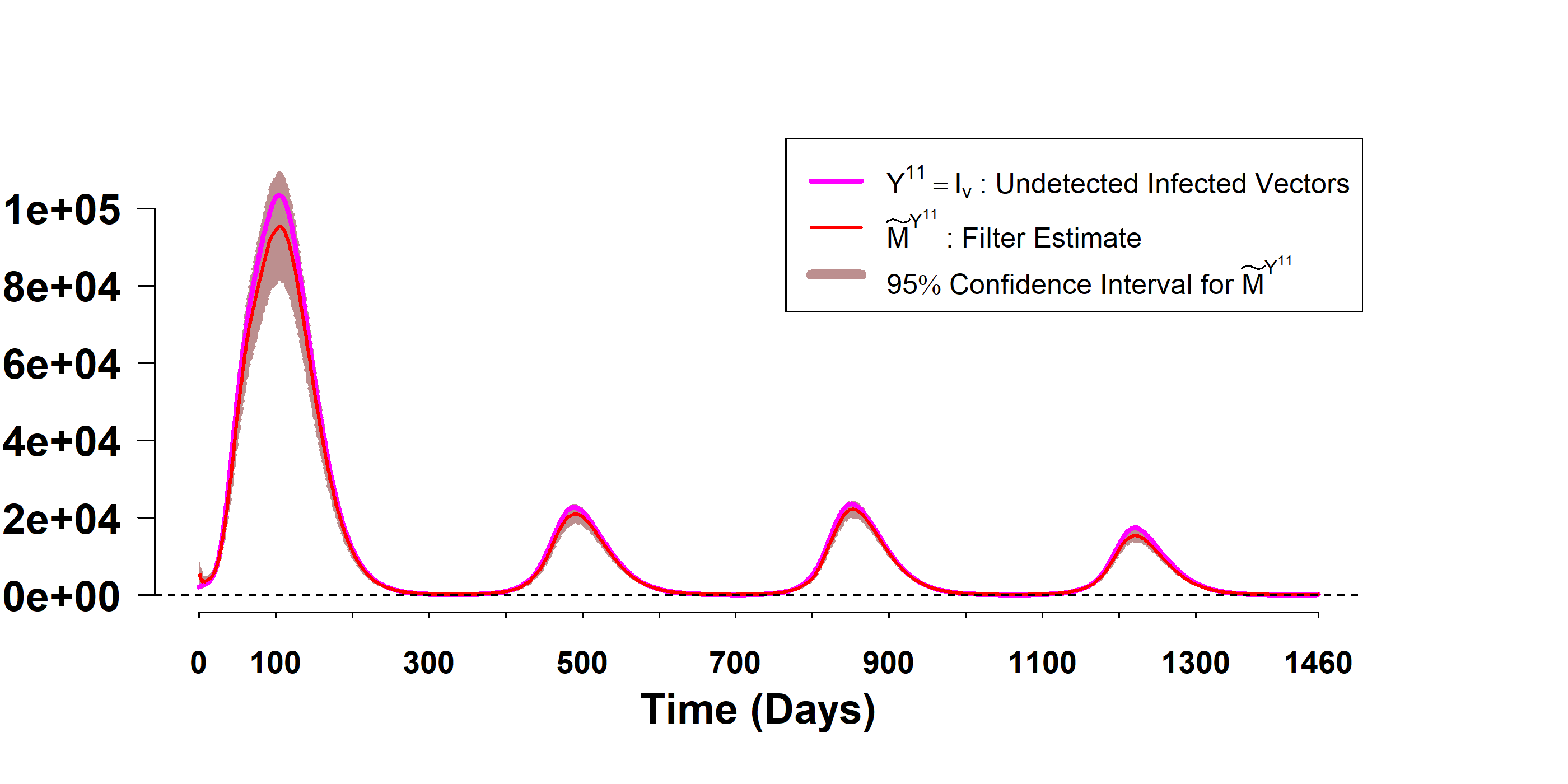}
		}
		
		\vspace{0.5cm}
		
		\subfloat[Susceptible Vectors ($S_v$)]{
			\includegraphics[width=0.6\textwidth]{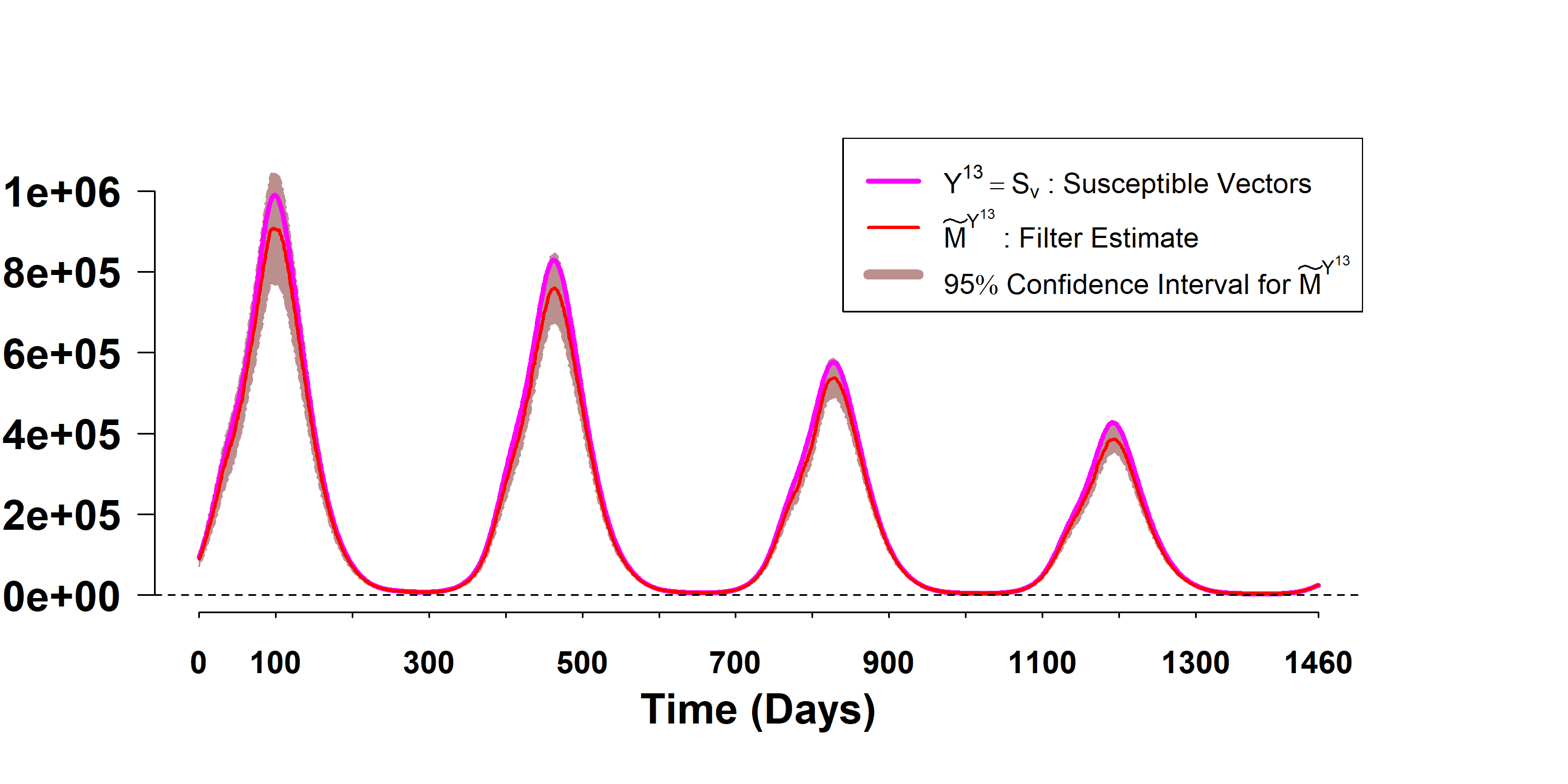}
		}
		
		\caption{Filter estimates and confidence bands for hidden mosquito vector state variables.}
		\label{fig:filter_estimates_vector}
	\end{figure}

	\begin{figure}[htb]
		\centering
		
		\begin{minipage}{0.48\textwidth}
			\centering
			\includegraphics[width=\linewidth]{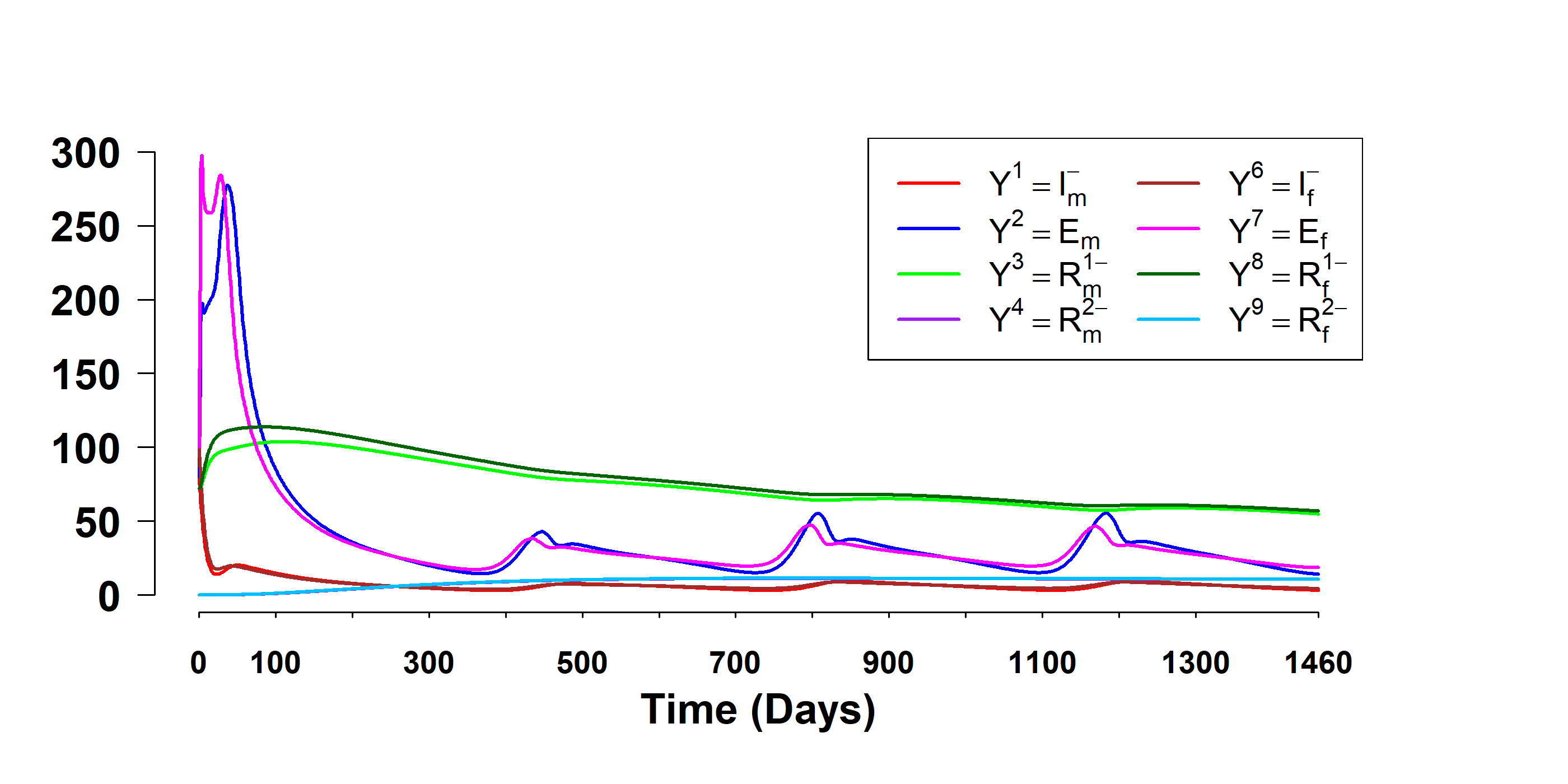}
			\caption*{\small (a) Standard deviations $\sigma^{Y^i} = \sqrt{Q_{ii}}$ for the human compartments: $Y^1$, $Y^2$, $Y^3$, $Y^4$, $Y^6$, $Y^7$, $Y^8$, $Y^9$. These states correspond to latent and recovered classes for both males and females. The variance profiles reveal differences in uncertainty propagation across the infectious and exposed stages.}
		\end{minipage}
		\hfill
		\begin{minipage}{0.48\textwidth}
			\centering
			\includegraphics[width=\linewidth]{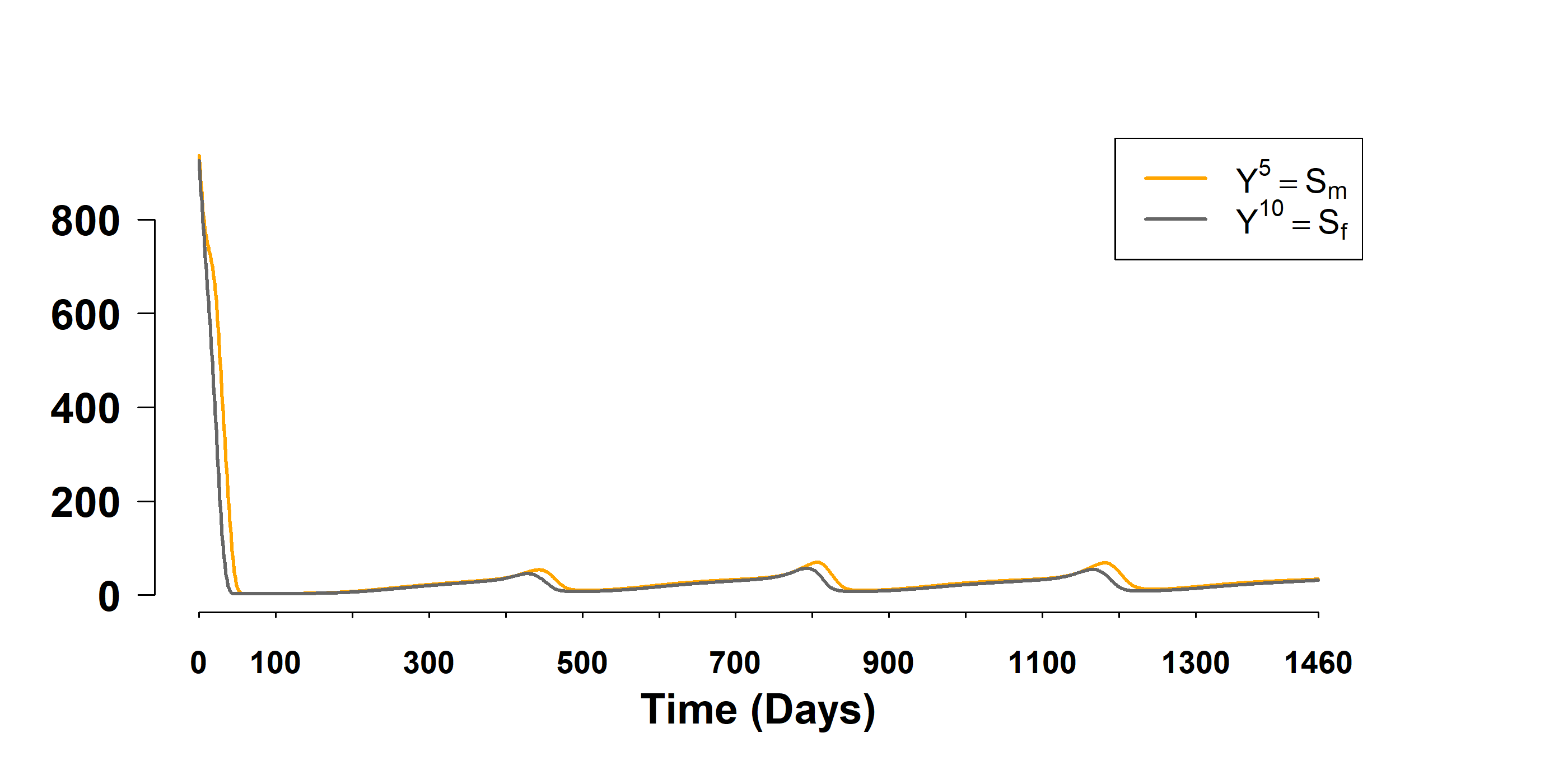}
			\caption*{\small (b) Standard deviations $\sigma^{Y^i} = \sqrt{Q_{ii}}$ for the susceptible compartments $Y^5$ and $Y^{10}$ (susceptible males and females). Their uncertainties remain relatively small and stable, consistent with population conservation and their indirect role in the observation process.}
		\end{minipage}
		
		\vspace{0.6cm}
		
		\begin{minipage}{0.6\textwidth}
			\centering
			\includegraphics[width=\linewidth]{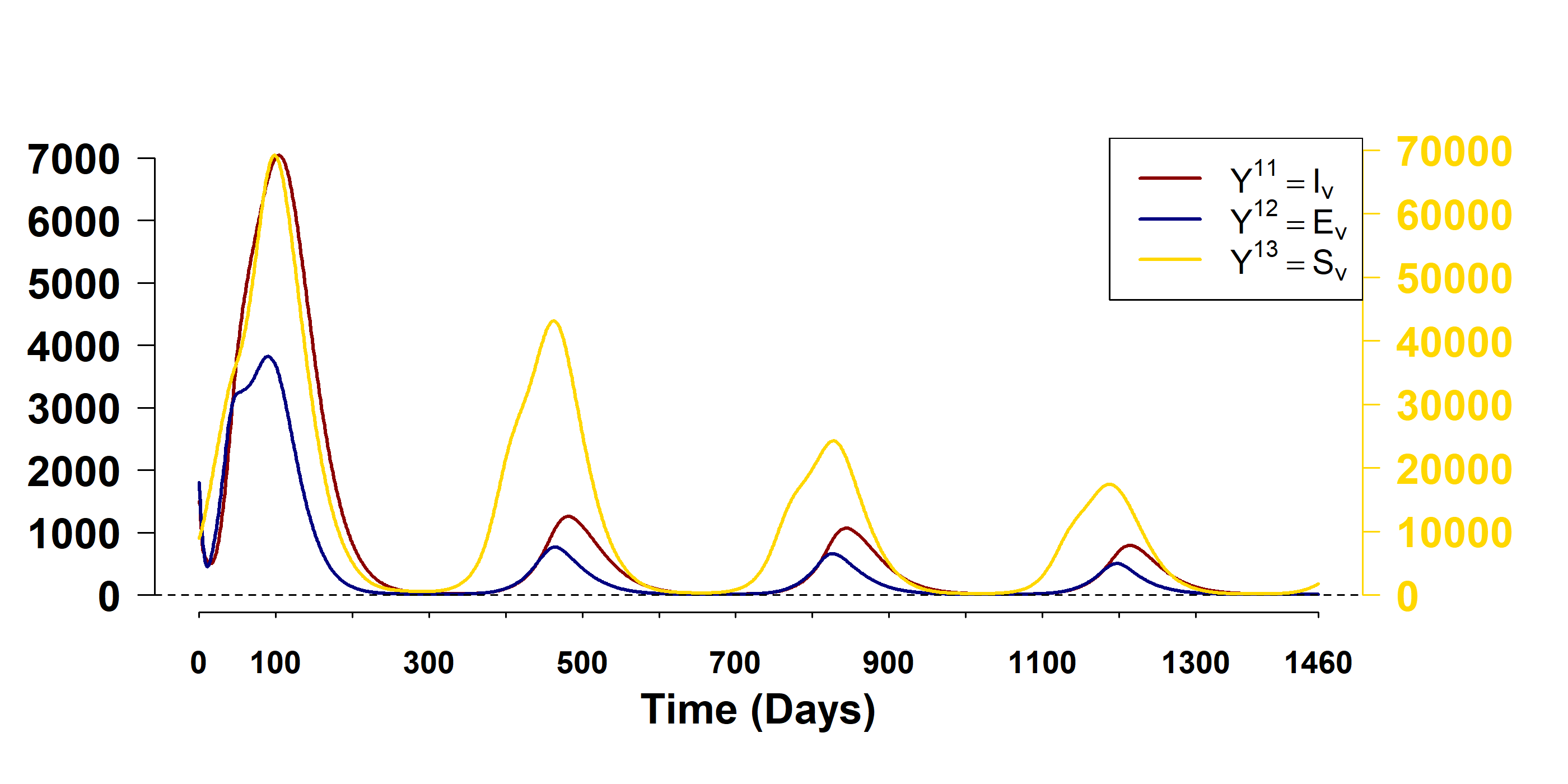}
			\caption*{\small (c) Standard deviations $\sigma^{Y^i} = \sqrt{Q_{ii}}$ for vector-related compartments: $Y^{11}$, $Y^{12}$, $Y^{13}$. These show higher uncertainty levels, reflecting the indirect and unobserved dynamics of mosquito infection stages in the model.}
		\end{minipage}
		
		\caption{Standard deviations of the estimated states $\sigma^{Y^i} = \sqrt{Q_{ii}}$ for different compartments in the Zika model under partial observation. Each panel highlights the uncertainty evolution for distinct groups: (a) human infectious and recovery classes, (b) human susceptible classes, and (c) vector populations.}
		\label{fig:std_devs}
	\end{figure}

	\subsection{Impact of  Initial Estimates }\label{Impact_IEtim}
	We will now analyze the impact of initial estimates on the performance of the proposed filter method through a series of numerical experiments. To do this, we will vary the various initial estimates, namely the conditional mean $\condmeanEKF_0=\condmean_0$ and the conditional variance $ \condvarEKF_0=\condvar_0$, and observe the filter accuracy in the short and long term. Again, we focus on the results for the compartment $I^{-}_{m}$ of undetected  infected male  individuals. For the other hidden compartments, we observed similar results. 
	
	\begin{table}[ht]
		\centering
		\begin{tabular}{|c|c|c|c|}
			\hline
			&  & \multicolumn{2}{|c|}{Estimation} \\
			&True value & Conditional mean & Conditional variance\\
			& $I^{-}_{m,0}$ & $\condmean_{0}^{I^{-}_{m}}$ &   $\condvar_0^{I^{-}_{m}}$  \\ \hline
			Scenario 1            & $80$            & $210$  &$ 30^2$                    \\ \hline
			Scenario 2            & $80$            & $210$     & $0$                 \\ \hline
			Scenario 3            & $80$            & $80$    & $0$                 \\ \hline
		\end{tabular}
		\caption{Different scenarios  for the  initials estimates $\condmean_{0}^{I^{-}_{m}}$ and    $\condvar_0^{I^{-}_{m}}$}
		\label{tab:Val_Impact}
	\end{table}
	
	\begin{figure}[ht]
		\centering
		\subfloat[ Scenario 1 (red): Large initial uncertainty is reduced by learning from observations.
		Scenario 2 (blue):  Incorrectly specified initial estimate with perfect accuracy  needs long time to be corrected.
		]
		{\includegraphics[width=0.48\textwidth]{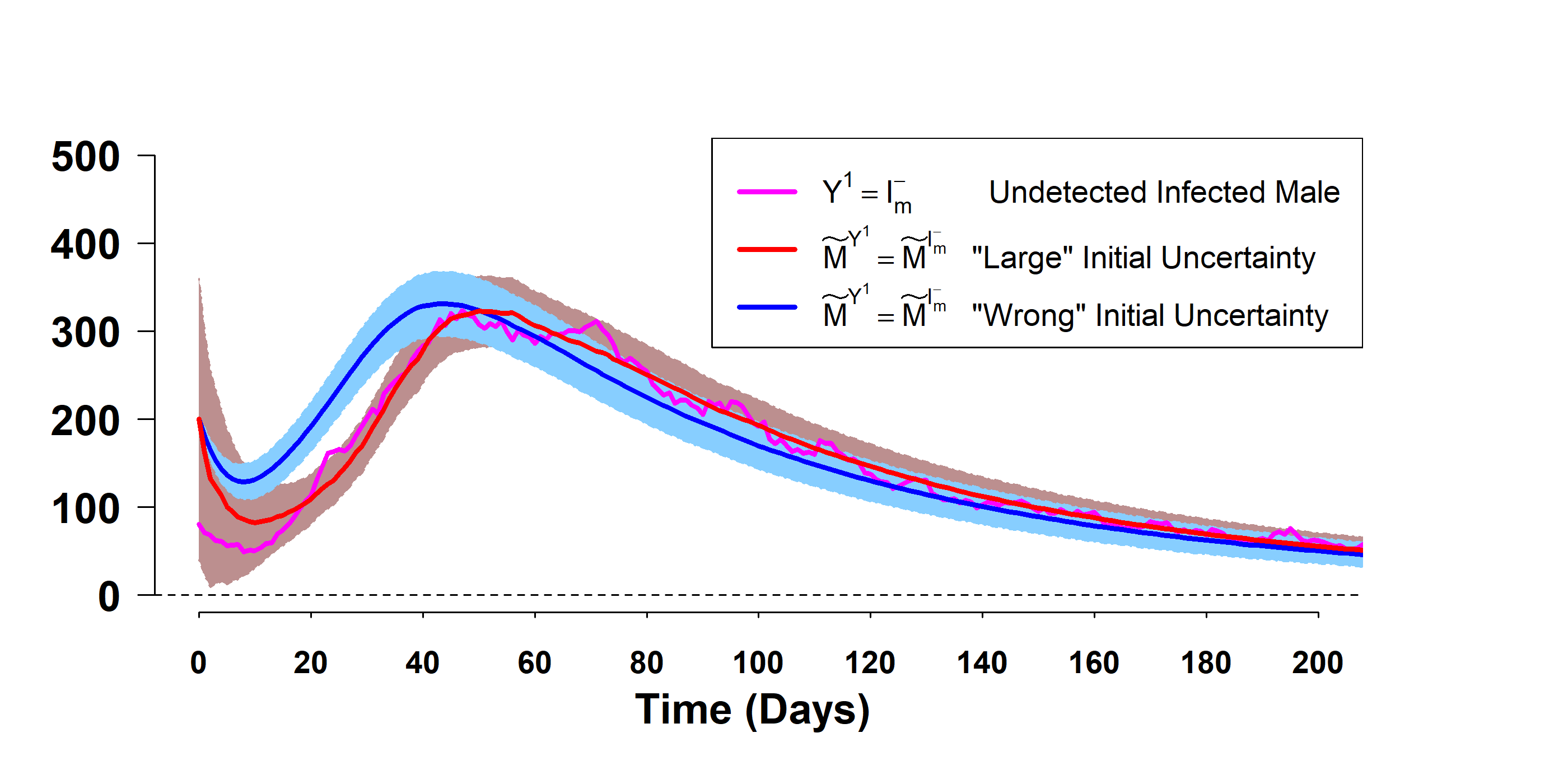}}\label{fig:f179}
		\hfill
		\subfloat[ Scenario 1 (red), scenario 3 (blue):  Zero initial uncertainty is fading out by observation noise. ]{\includegraphics[width=0.48\textwidth]{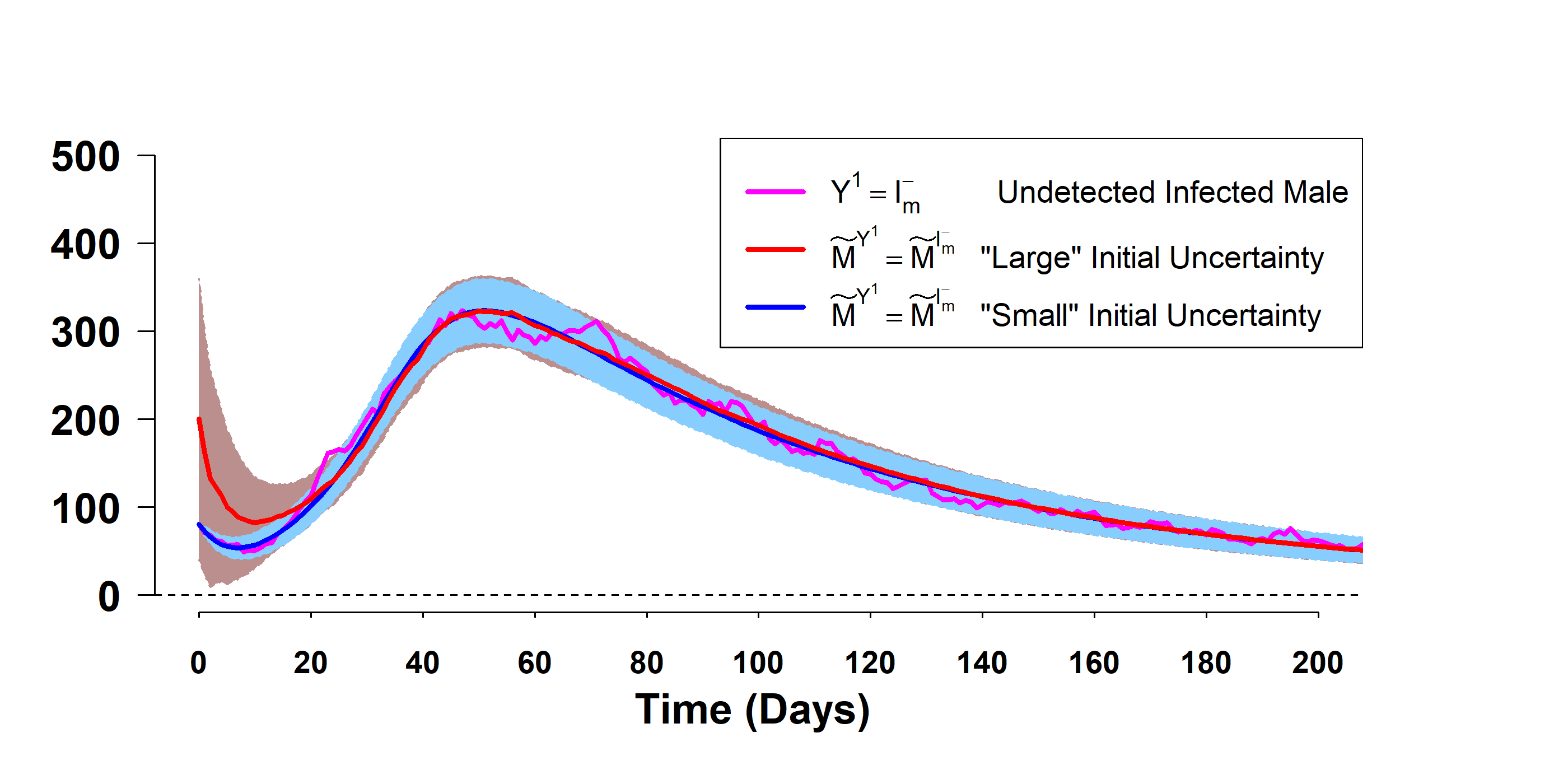}}\label{fig:f180}\\

		\caption{   
			Comparison of the effect of initial uncertainty on filtering performance. The figure shows for the three scenarios the true hidden state $I^{-}_{m}$, the filter estimate $\condmeanEKF_1=\condmeanEKF^{I^{-}_{m}}$, and the associated \(95\%\) confidence band.  }
		\label{fig:filter_initial}
	\end{figure}
	
	\subsection{Impact of Cascade States}
	
	\begin{figure}[htbp]
		\centering
		\subfloat[Full trajectory over 4 years]{
			\includegraphics[width=0.45\textwidth]{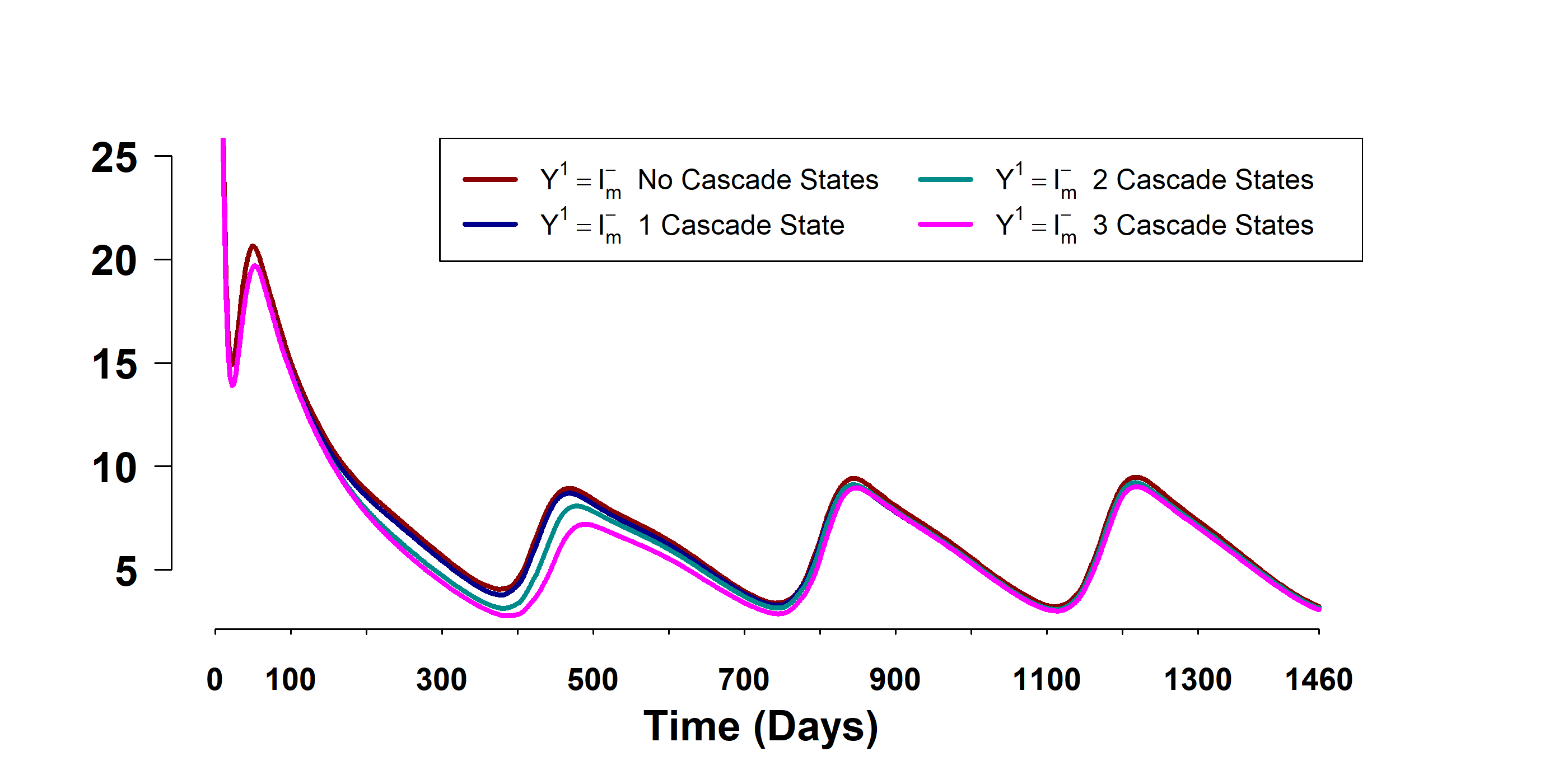}
		}
		\hfill
		\subfloat[Zoom: first year (Day 0–400)]{
			\includegraphics[width=0.45\textwidth]{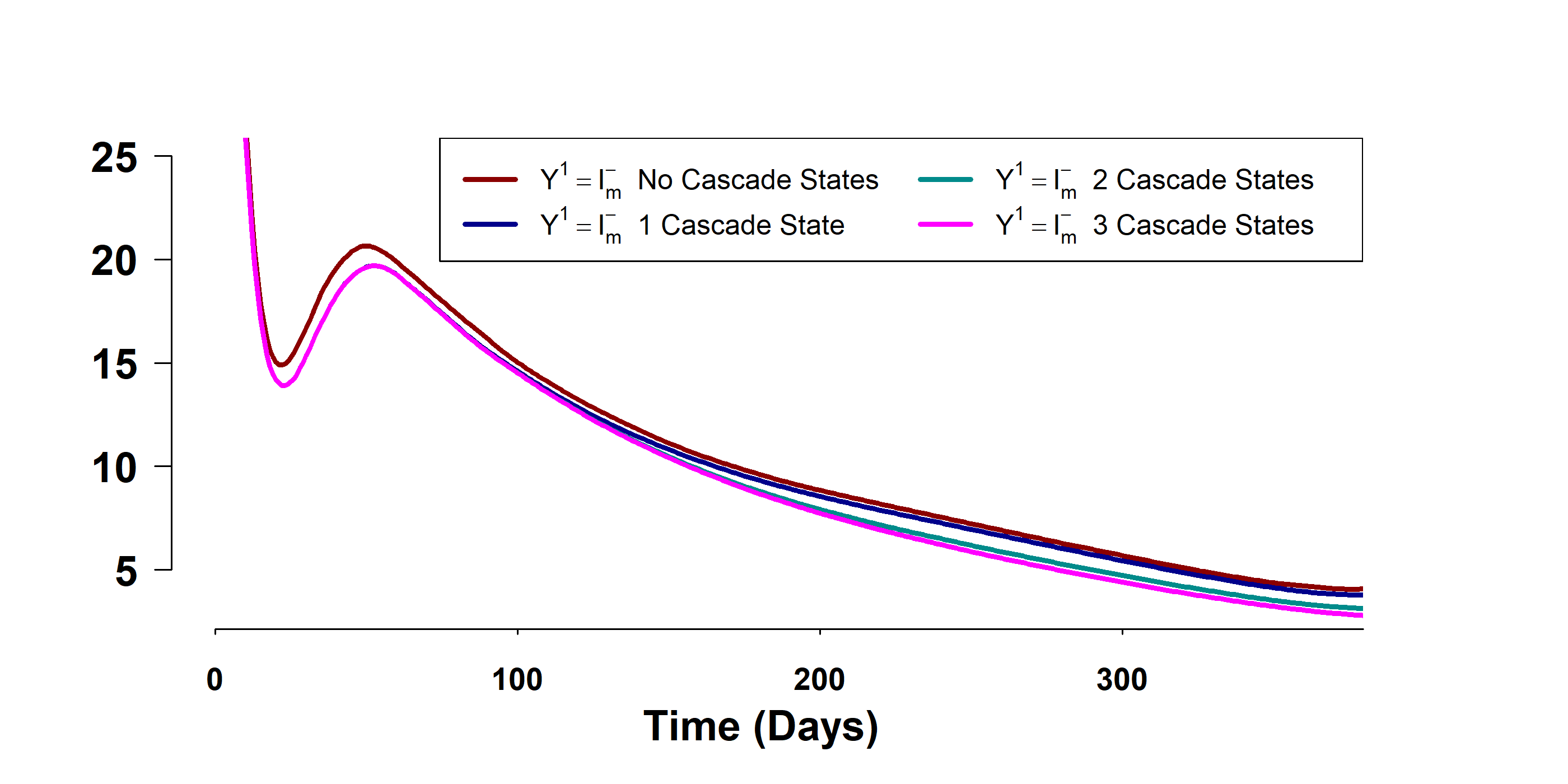}
		}\\
		\subfloat[Zoom: epidemic peak around Day 400–700]{
			\includegraphics[width=0.45\textwidth]{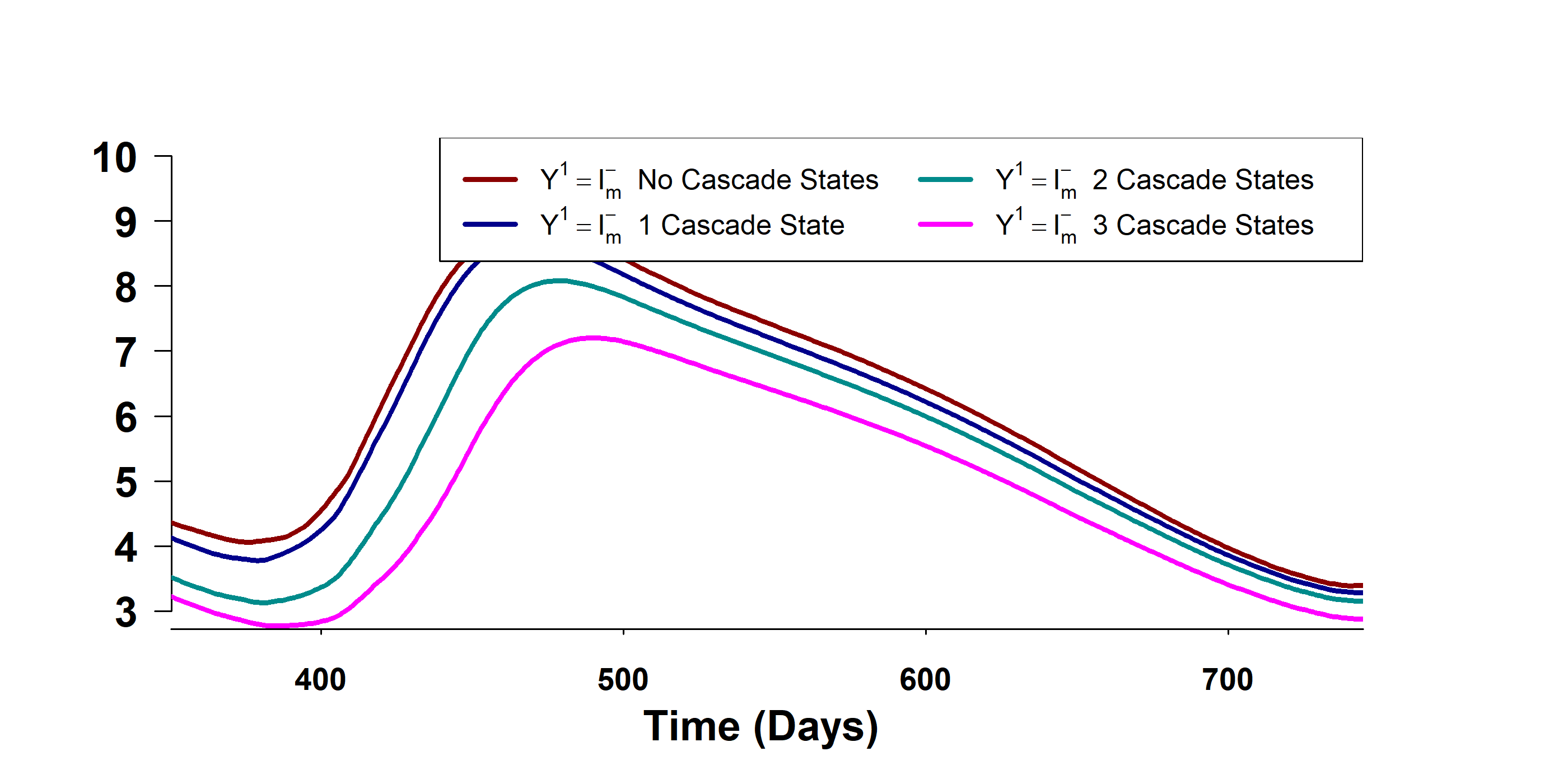}
		}
		\hfill
		\subfloat[Zoom: epidemic peak around Day 800–1100]{
			\includegraphics[width=0.45\textwidth]{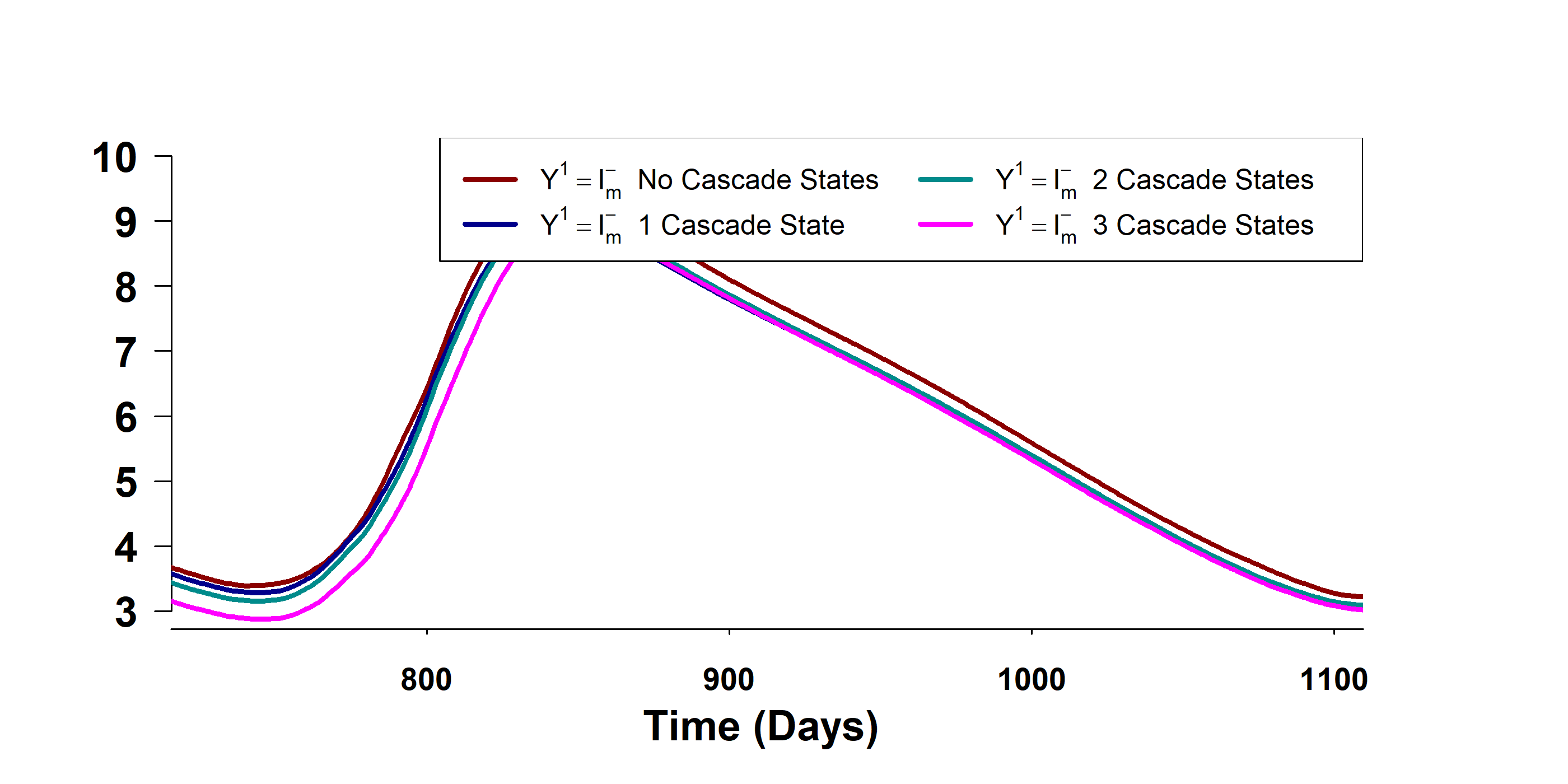}
		}\\
		\subfloat[Zoom: epidemic peak around Day 1100–1460]{
			\includegraphics[width=0.45\textwidth]{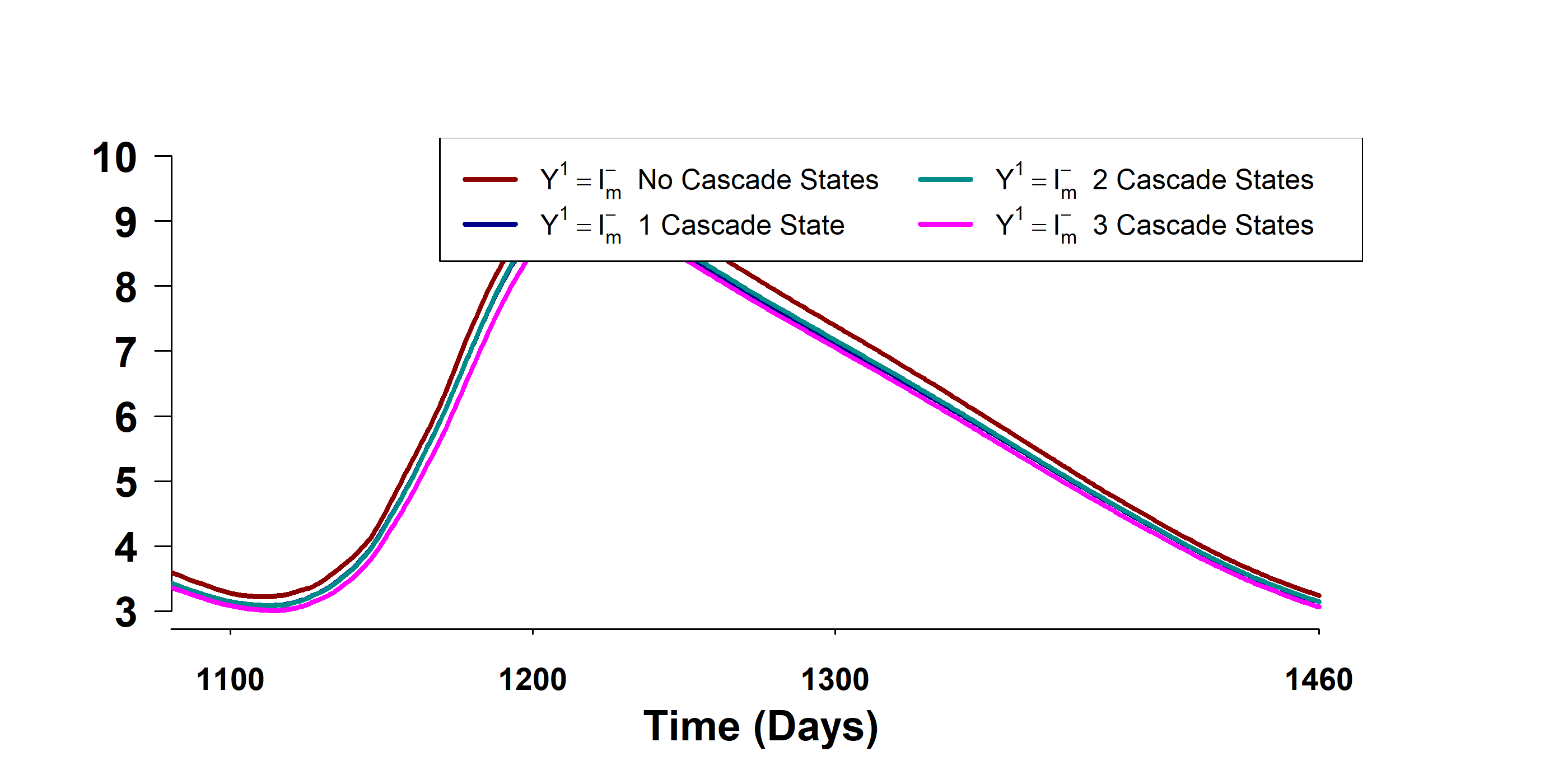}
		}
		\caption{Evolution of the standard deviation across different models depending on the number of cascade states. 
			Panel (a) shows the full four-year trajectory, while Panels (b)–(e) zoom into different epidemic waves. 
		}

	\end{figure}

		In this section we investigate the impact of introducing cascade compartments on the performance of the filter, with particular emphasis on the accuracy of hidden--state estimation as measured by the corresponding standard deviations. We compare the base model with its extended counterparts that incorporate one, two, or three cascade compartments.  
		
		In both frameworks, individuals recovering from a confirmed infection are assumed to remain fully immune for a fixed period of $L=730$ days. In the base model these individuals are assigned to the hidden compartments $R_{\dagger}^{2-}$, where they are aggregated with those whose recovery time already exceeds $730$ days and who may subsequently undergo an unobservable transition back into the susceptible class. Consequently, the compartments $R_{\dagger}^{2-}$ must be treated as hidden.  		
		By contrast, the extended Zika models introduce observable cascade compartments, thereby allowing a more refined description of post-recovery and post-vaccination dynamics.  		
		
		To ensure comparability, identical initial values are imposed across all models for the hidden states $S_{\dagger}, E_{\dagger}, I_{\dagger}^{-}$, and $R_{\dagger}^{1-}$. In the extended model, the first observable recovered cascade compartment is initialized as $R_{\dagger}^{1}=18$, with the remaining cascade compartments set to zero. In the baseline model, the corresponding hidden compartment $R_{\dagger}^{2-}$ is initialized at the same value, i.e.\ $18$.  
		
		When cascade compartments are introduced, the immunity period of $L=730$ days is subdivided into $d^R=1,2,$ or $3$ subintervals, corresponding respectively to one compartment of length $730$, two compartments of $365$ days each, or three compartments of $730/3$ days.  		
		The loss rates of immunity are calibrated so that the expected duration of complete immunity remains identical across the baseline and extended models. Specifically, for immunity loss after recovery we set $\rho_{\dagger}^{2}=1/760$ in the baseline model and, analogously, $\rho_{\dagger}^{2}=1/30$ in the extended formulation, yielding in both cases an expected immune duration of $760=L+30$ days. Similarly, for immunity loss following recovery from $I_{\dagger}^{-}$ we fix $\rho_{\dagger}^{1}=1/760$ in both frameworks, ensuring again an expected duration of $760=L+30$ days.

	\paragraph{Discussion on the convergence of standard deviation} 
	An interesting feature of the results is that the standard deviations of the model without cascade states and the one with three cascade states converge in the long run, despite the expectation that additional cascade states should systematically reduce uncertainty. This behavior can be explained by the structure of the model: the only source of randomness enters through the inflow to the first cascade state, while subsequent cascade dynamics are purely deterministic. As a result, cascade states primarily act as filters that refine short- and medium-term uncertainty, which is particularly evident during outbreak and wave dynamics, when infections rise and decline sharply. In these phases, the system is more sensitive to randomness, and the models with cascade states display smaller variances by capturing more observable information. However, as the epidemic progresses toward the endemic equilibrium, the stochastic input naturally diminishes, leading to negligible long-run variability. Consequently, the additional information carried by cascade states loses its effect, and the variance across all models converges to the same baseline level. In other words, cascade states effectively reduce uncertainty during transient epidemic phases, but their impact fades once the driving noise vanishes and the system approaches equilibrium.

\newpage\clearpage
\appendix 
\section*{Appendix} \label{Notations}
\addcontentsline{toc}{section}{Appendix}
\footnotesize 
\section{Notation}
\label{sec:appNotation}

\vspace*{-3ex}
\renewcommand{\arraystretch}{0.8}
	\begin{longtable}{ll}%
$\alpha_m, \alpha_f$ & Force of infection rate from mosquitoes to susceptible males/females \\
$\beta^+_m,\beta^+_f$ & Testing rates from exposed and asymptomatic male/female  \\
$\beta^s_m,\beta^s_f$ & Infection rates for symptomatic male/female  \\
$\beta^-_m,\beta^-_f$ & Infection rates for asymptomatic male/female  \\
$\beta_m,\beta_f$ & Infection rates for asymptomatic male/female in simplified model \\
$\gamma_m^-,\gamma_f^-$ & Recovery rate from asymptomatic  males/females \\
$\gamma_m^+, \gamma_f^+$ & Recovery rate from detected males /females\\
$\gamma_m^s, \gamma_f^s$ & Recovery rate from symptomatic males /females\\
$\rho^1_m,\rho_m^3, \rho^1_f,\rho_f^3$ & Rate of immunity loss in asymptomatic males/females \\
$\rho^2_m,\rho_m^4, \rho^2_f,\rho_f^4$ & Rate of immunity loss in symptomatic males/females \\
$\rho_m,\rho_f$ & Rate of immunity loss in  males/females in simplified model \\
$\theta, \nu,\mu, B_v $ & Exposure rate, infection rate, natural mortality rate, and birth rate vectors \\
$S_m,S_f $ &   Susceptible male, female  \\	
$E_m,E_f $ &   Exposed male, female  \\	
$R_{m},R_{f} $ & Undetected/detected recovered   in simplified model   \\
$R_{m}^{2-},R_{f}^{2-}$ &  Detected recovered  with fading immunity  in extended model   \\
$R_{m}^{1-},R_{f}^{1-}$ & Undetected recovered  with fading immunity  in extended model   \\
$I^{-}_m,I^{+}_m,I^{s}_m,I^{+}_f,I^{-}_f,I^{s}_f, $ & Undetected  / detected, symptomatic infected \\
$S_v,E_v,I_v$ & Susceptible,Exposed,Infected  vectors \\
$\Delta t$ &  Time step\\
$\Count_{k}$ &    Counting process (for the number of transition $k$ in $[0,t]$)\\
$Y_n ,Z_n $ & Hidden state  and Observation vector at discrete time $n$ \\
$\Pi$ &   Standard Poisson process with unit intensity\\		
$\mathcal{E} ^1,\mathcal{E}^2$ &    Independent   $\mathcal{N}(0,\one)$ random vectors\\
$\mathcal{F}_n^Z$ & Filtration generated by observations up to time $n$ \\
$\mathbb{E}[Y_n | \mathcal{F}_n^Z]$ & Conditional mean (filter estimate) of the hidden state $Y_n$  up to time $n$ \\
$\operatorname{Var}(Y_n | \mathcal{F}_n^Z)$ & Conditional covariance matrix of $Y_n$ given $\mathcal{F}_n^Z$ \\
$m_0, q_0$ & Initial prior mean and initial prior covariance of the hidden state $Y_0$ \\
$\cf_0, \cf_1$ & Constant term and linear operator in the drift of the state transition model \\
$\csigma$ & Diffusion coefficient matrix for process noise in the state dynamics \\
$\cg$ & Noise sensitivity matrix in the state equation \\
$\ch_0 , \ch_1$ & Constant term and linear operator mapping hidden states  in the observation model \\
$\cell$ & Noise sensitivity matrix in the observation equation \\
$[A]^+$ & Moore–Penrose pseudoinverse of matrix $A$\\
$\Zpathn$ & Observation trajectory $(Z_0, Z_1, \ldots, Z_n)$ up to time $n$ \\
$\widehat{Y}_n$ & Estimate of $Y_n$ adapted to $\mathcal{F}_n^Z$ and square-integrable \\
$\Omega$ & Underlying probability space \\
$\Fprior$ & $\sigma$-algebra with prior information on the initial distribution of $Y_0$ \\
$f, f_0,f_1$ & Nonlinear drift function for deterministic dynamics of the hidden state \\
$\overline{Y}_n$ & Reference point of linearized  drift function $f$  \\
$[Df](\overline{Y}_n)$ & Jacobian matrix of $f$ evaluated at the reference point $\overline{Y}_n$ \\
$R_n^{in} $ & newly recovered individuals entering the first cascade\\
$R_n^{out} $ & The outflow from the final hidden compartment \\
$d,d_1,d_2$ &  Number of all, hidden and observable states  \\		   
 $\KR$ &  Number of cascade compartments \\
$\one_{n}$ & $n\times n$ Identity matrix of order $n$ \\
$K$ &  Total number of different transitions  \\
$ \LR$ &  Number of time steps with complete immunity\\
$\ell$ &    Diffusion coefficient, observable  state\\
$\condmean,\condmeanEKF$ & Conditional mean/ EKF approximation\\
$N, N_t$ &  Total population size and total number of time steps \\
$\condvar,\condvarEKF$ & Conditional variance/ EKF approximation\\
$\PR_k,$ &   Number of original cascade compartments grouped to one compartment \\
$R_{j}^{+}$ & Cascade compartment with respect to observable recovered \\		
$W, W^{1},W^{2}$ &   multi-dimensional standard Brownian motions\\
$X,\overline{X}$ &  State vector for absolute and relative subpopulation size \\
$Y,\widetilde{Y}$ & Hidden state in original/linearized system \\
$\overline{Y}$ & Reference point for Taylor expansion \\
$Z,\widetilde{Z}$ &  Observable state in original/linearized system \\
$\halfsat$ &  half-saturation parameter \\
$\Isum$ &   The total number of human infectious \\
$\vartheta$ &   Transition parameter for cascade states  \\

\end{longtable}

\section{Coefficients of Discrete-time Recursions of the State Process}
\label{app:coeff}

\subsection{Simplified  Zika Model}
\label{app:sym_coeff}

In this section, we provide an explicit formulation of the model coefficients \( f, h, \sigma, g, \ell \) that define the stochastic dynamics of the system introduced in Subsection \ref {sec:symplistic}. These coefficients appear in the state-space recursion equations \eqref{state_YZ_Zika}, which govern the evolution of the hidden and observable states of the model given by $Y = (I^{-}_m, R_m, S_m, I_f^- R_f, S_f, I_v, S_v)^\top$, and $Z = (I^{+}_m, I^{+}_f)^\top$, respectively. This results in a total of $d = 10$ distinct states and $K = 11$ possible transition pathways.  In the coefficient \(f\), the placeholder notation  $x^{\mathcal{I}}$ is used in place of $\Isum $.

\footnotesize
\begin{flushleft}
$
f(n,y,z) = y + \Delta t \cdot
\left[
\begin{array}{l}
\alpha_m y^3 { \frac{y^7}{\halfsat+y^7}} -  \beta_m y^1  -\gamma_m^- y^1  \\
\gamma_m^- y^1 + \gamma_m^+ z^1  -  \rho_m y^2\\
-\alpha_m y^3 { \frac{y^7}{\halfsat+y^7}}+ \rho_m y^2  \\
\alpha_f y^7 { \frac{y^7}{\halfsat+y^7}} + \omega y^7 y^1 - \beta_f y^4 - \gamma ^-_f y^4   \\
\gamma_f^- y^4 + \gamma_f^+ z^2 -  \rho_f y^6  \\
-\alpha_f y^7 { \frac{y^7}{\halfsat+y^r}}-\omega y^7 y^1 + \rho_f y^6  \\
 \theta y^9   {  \frac{x^\mathcal{I}}{N}} - \mu y^8 \\
{ B_v (y^6+y^7+y^8)} - \theta y^9   { \frac{x^\mathcal{I}}{N}} - \mu y^9

\end{array}
\right]
$
\vspace{0.2cm}

$h(n,y,z) =  h_0(n,z) + h_1(n,z)y $
\end{flushleft}

\footnotesize
\begin{flushleft}
$
h_0(n,z) = z+  
\left[
\begin{array}{l}

-\gamma_m^+  z^1 \\
-\gamma_m^s z^2 \\
 \end{array}
\right] \Delta t $  \,\,\,\,\,\,\,\,\,\,\,  $ h_1(n,z)= \left [ \begin{array}{ccccccccccccccccc}
\beta_m& 0&0&0&0&0&0&0\\
0&0&0&\beta_f&0&0&0&0
 \end{array} \right ] \Delta t$
\end{flushleft}
\begin{flushleft}
\[
\sigma (n,y,z)=
\left[
\scalebox{0.85}{$
\begin{array}{ccccccccccc}
\sqrt{ \alpha_m  y^3 { \frac{y^7}{\halfsat+y^7}}}  & - \sqrt{\gamma ^-_m  y^1  }  & 0 & 0 & 0 & 0 & 0  & 0 & 0 & 0 & 0  \\
0 &  \sqrt{\gamma_m^- y^1}  & - \sqrt{\rho _m y^2} & 0 & 0 & 0 & 0 & 0 & 0 & 0 & 0   \\
-\sqrt{ \alpha_m  y^3 { \frac{y^7}{\halfsat+y^7}}} & 0   & \sqrt{\rho_m y^2}  & 0 & 0 & 0 & 0 & 0 & 0 & 0  & 0  \\
0 & 0 & 0  & \sqrt{ \alpha_f  y^7 { \frac{y^7}{\halfsat+y^7}}}   & \sqrt{\omega y^7 I^-_m }  & -\sqrt{ \gamma _f y^4} & 0 & 0 & 0 & 0 & 0  \\
 0 & 0    & 0 & 0 & 0   & \sqrt{\gamma_f^- y^4} & - \sqrt{\rho _f y^6}& 0 & 0 & 0 & 0  \\
0 & 0 & 0  &- \sqrt{ \alpha_f  y^7 { \frac{y^7}{\halfsat+y^7}}}  & -\sqrt{\omega y^7 I^-_m }    & 0 & \sqrt{\rho _f y^6} & 0 & 0 & 0 & 0 \\
0 & 0 & 0 & 0 & 0 & 0 &  0  & \sqrt{ \theta y^9({ \frac{x^\mathcal{I}}{N}})}   & - \sqrt{ \mu y^8} & 0 & 0 \\
0 & 0 & 0 & 0 & 0 & 0 & 0  & -\sqrt{ \theta y^9 ({  \frac{x^\mathcal{I}}{N}})}  & 0  &\sqrt{ B_v(y^6+y^7+y^8)} & -\sqrt{\mu_ v} y^9 \\
0 &  0  & 0 & 0 & 0 & 0  & 0 & 0 & 0 & 0  & 0\\
0 & 0 & 0 & 0 & 0 & 0 & 0 & 0 & 0 & 0 & 0\\
\end{array}
$}
\right]\sqrt{\Delta t}\]
\end{flushleft}
\footnotesize
\[ g(n,y,z) =   \left[
\scalebox{0.85}{$
\begin{array}{ccccccccc}
-\sqrt{\beta_{m} y^1}  &0
& 0 & 0   \\
0 &  \sqrt{\gamma_m^+ z^1} & 0 & 0  \\
0 & 0 & 0 & 0   \\
0 & 0 &-\sqrt{\beta_{f} y^4} & 0   \\
0 & 0 & 0 & \sqrt{\gamma_f^+ z^2}  \\
0 & 0 & 0 & 0   \\
0 & 0 & 0 & 0  \\
0 & 0 & 0 & 0  \end{array} $}\right] \sqrt{ \Delta t},   \,\,\,\,\,\\   \,\,\,\,\,  \ell (n,y,z)= \left[
\scalebox{0.85}{$
\begin{array}{ccccccccc}
\sqrt{\beta_{m} y^1}  & -\sqrt{\gamma_m^+ z^1}
& 0 & 0   \\
0 & 0 &\sqrt{\beta_{f} y^4} & -\sqrt{\gamma_f^+ z^2}  \\
\end{array}
$}
 \right] \sqrt{\Delta t}\]

The functions \( f_0 \) and \( f_1 \) appearing in Lemma~\ref{lem:linearized_system} arise from the first-order linearization of the drift function \( f \) and are given as follows:

\begin{flushleft}
$
f_0 (n,y,z)=
\left[
\scalebox{1.0}{$
\begin{array}{c}
-\alpha_m \frac{y^3y^{7}}{N}  \\
\gamma_m^+z^1  \\
\alpha_m \frac{y^3y^{7}}{N}   \\
- \alpha_f \frac{y^{6}y^{7}}{N} - \omega \frac {y^1 y^6}{N_m}\\
\gamma_f^+z^2   \\
\alpha_f \frac{y^{6}y^{7}}{N} +\omega \frac {y^1 y^6}{N_m}\\
-\theta y^8 (y^1 +y^4)\\
\theta y^8 (y^1 +y^4)
\end{array}
$} \right] $
\end{flushleft}
\vspace{1em}
\begin{flushleft}
$f_1 (n,y,z)=
\left[
\scalebox{0.95}{$
\begin{array}{ccccccccccccccccc}
 -\gamma_m^- - \beta _m^-  & 0 &\alpha_m \frac{y^{7}}{N} & 0 & 0 & 0 & \alpha_m \frac{y^{3}}{N}& 0   \\
\gamma_m^-   &  -\rho_m   & 0 & 0 & 0 & 0 & 0 & 0  \\
0 &  \rho_m  & -\alpha_m \frac{y^{7}}{N} & 0 & 0 & 0 & -\alpha_f \frac{y^{6}}{N} & 0  \\
\omega y^6 & 0 & 0 & -\gamma_f^- - \beta _f^-  & 0  & \alpha_f \frac{y^{7}}{N}+  \omega y^1 & \alpha_m \frac{y^{3}}{N}   & 0 \\
0 & 0 & 0  &  \gamma_f^-    & -\rho_f &  0 & 0 & 0  \\
-\omega y^6 & 0 & 0 & 0 & \rho_f  &  - \alpha_f \frac{y^{7}}{N}{N_m}- \omega y^1 & -\alpha_f \frac{y^{6}}{N} & 0 \\
\theta y^8 & 0 & 0 & \theta^8   & 0 & 0 & -\mu  & \theta (y^1+z^1 +y^4+z^2)\\
-\theta y^8  & 0 & 0 & -\theta y^8  & 0 & 0 &0  &  {  B_v(y^6+y^7+y^8)} - \theta (y^1+z^1 +y^4+z^2) - \mu
\end{array}
$}
\right]$
\end{flushleft}

\subsection{The Base Zika Model } \label{app:Base}

Here, we provide an explicit formulation of the model coefficients \( f, h, \sigma, g, \ell \) that define the stochastic dynamics of the system introduced in Subsection \ref{sec: Base}. These coefficients appear in the state-space recursion equations \eqref{state_YZ_Zika}, which govern the evolution of the hidden and observable states of the model are given by $Y = (I^{-}_m, E_m, R_m^{1-}, R_m^{2-} S_m, I_f^-, E_f, R_f^{1-}, R_f^{2-}, S_f, I_v, E_v, S_v)^\top$, while the vector of observable states is $Z = (I^{+}_m, I^{s}_m, I^{+}_f, I^{s}_f)^\top$. This results in a total of $d = 17$ distinct states and $K = 20$ possible transition pathways.

{\footnotesize
\begin{flushleft}
$
f(n,y,z) = y + \Delta t \cdot
\left[
\begin{array}{l}
\beta_m^- y^2 - \gamma_m^- y^1 - \beta_m^+ y^1 \\
\alpha_m y^5 { \frac{y^{11}}{\halfsat+y^{11}}} - (\beta_m^- + \beta_m^s + \beta_m^+) y^2 \\
\gamma_m^- y^1 - \rho_1 y^3 \\
\gamma_m^s z^2 + \gamma_m^+ z^1 - \rho_2 y^4 \\
-\alpha_m y^5 { \frac{y^{11}}{\halfsat+y^{11}}} + \rho_1 y^3 + \rho_2 y^4 \\
\beta_f^- y^7 - \gamma_f^- y^6 - \beta_f^+ y^6 \\
\alpha_f y^{10} { \frac{y^{11}}{\halfsat+y^{11}}} + \omega y^{10 }\kappa_m - (\beta_f^- + \beta_f^s + \beta_f^+) y^7 \\
\gamma_f^- y^6 - \rho_3 y^8 \\
\gamma_f^s z^4 + \gamma_f^+ z^3 - \rho_4 y^9 \\
-\alpha_f y^{10} { \frac{y^{11}}{\halfsat+y^{11}}} - \omega y^{10} \kappa_m + \rho_3 y^8 + \rho_4 y^9 \\
\nu y^{12} - \mu y^{11} \\
-\nu y^{12} + \theta y^{13} ( \kappa_m + \kappa_f ) - \mu y^{12} \\
B_v(y^{11} + y^{12}+ y^{13}) - \theta y^{13}( \kappa_m + \kappa_f ) - \mu y^{13}
\end{array}\right]
$

\end{flushleft}
\begin{flushleft}
\footnotesize

$h(n,y,z) =  h_0(n,z) + h_1(n,z)y $

\end{flushleft}
\begin{flushleft} $
  h_0(n,z) = z+ \left[  \begin{array}{l}
-\gamma_m^+  z^1 \\
-\gamma_m^s z^2 \\
-\gamma_f^+ z^3\\
-\gamma_f^s z^4  
\end{array}\right] \Delta t,   \,\,\,\,\quad  h_1(n,z)= \left [  \begin{array}{ccccccccccccccccc}
\beta_m^+ &\beta_m^+& 0&0&0&0&0&0&0&0&0&0&0\\
0&\beta_m^s&0&0&0&0&0&0&0&0&0&0&0\\
0&0&0&0&0&\beta_f^+&\beta_f^+&0&0&0&0&0&0\\
0&0&0&0&0&0&\beta_f^s&0&0&0&0&0&0
 \end{array} \right ] \Delta t $
\end{flushleft}

\footnotesize
\begin{flushleft}$
g (n,y,z)=
\left[ 
\scalebox{0.85}{$
\begin{array}{ccccccccccc}
 0 & 0 & 0 & 0 & 0 & 0 & 0& 0& - \sqrt{ \beta ^+_m y^1}&0 \\
-\sqrt{\beta_{m}^s y^2}  & -\sqrt{\beta_{m}^+ y^2}  & 0 & 0 & 0 & 0 & 0 & 0& 0 & 0 \\
0 & 0 & 0 & 0 & 0 & 0 & 0 & 0& 0 & 0  \\
0 & 0 & \sqrt{\gamma_m^s z^2} & \sqrt{\gamma_m^+ z^1} & 0 & 0 & 0 & 0 & 0 & 0 \\
0 & 0 & 0 & 0 & 0 & 0 & 0 & 0& 0 & 0  \\
0 & 0 & 0 & 0 & 0 & 0 & 0 & 0 &0&- \sqrt{ \beta ^+_f y^6   }  \\
0 & 0 & 0 & 0 &   -\sqrt{\beta_f^s y^7} & -\sqrt{\beta_f^+ y^7} & 0 & 0& 0 & 0 \\
0 & 0 & 0 & 0 & 0 & 0 & 0 & 0& 0 & 0  \\
0 & 0 & 0 & 0 & 0 & 0 & \sqrt{\gamma_{fs} z^4}
& \sqrt{\gamma_{f+} z^3} & 0 & 0
\\
0 & 0 & 0 & 0 & 0 & 0 & 0 & 0& 0 & 0   \\
0 & 0 & 0 & 0 & 0 & 0 & 0 & 0  & 0 & 0 \\
0 & 0 & 0 & 0 & 0 & 0 & 0 & 0& 0 & 0  \\
0 & 0 & 0 & 0 & 0 & 0 & 0 & 0 & 0 & 0\\
\end{array}
$}
 \right]\sqrt{\Delta t}$
\end{flushleft}

\newgeometry{bottom=2.5cm, top=2cm, left=2.5cm, right=2.5cm, landscape}

\begin{landscape}
\footnotesize


\noindent
$\ell (n,y,z)=
\left[
\scalebox{0.85}{$
\begin{array}{ccccccccccc}
0 & \sqrt{\beta_{m}^+ y^2 }  & 0 & -\sqrt{\gamma _m^+ z^1} & 0 &0  & 0 & 0 &\sqrt{ \beta ^+_m y^1}&0    \\
  \sqrt{\beta_{m}^s y^2 } 
  & 0  &- \sqrt{\gamma _m^s z^2}& 0 & 0 & 0 & 0  &0 & 0  &0    \\
0  &0   &0  &0  &0  & \sqrt{\beta_{f}^+ y^7 }  & 0 &  -\sqrt{\gamma _f^+ z^3} &0& \sqrt{ \beta ^+_f y^6}  \\
  0 & 0  &0  &0  & \sqrt{\beta_{f}^s y^7 } &0  & - \sqrt{\gamma _f^s z^4 } & 0 & 0  &0  \\ 
\end{array}
$}
 \right] \sqrt{\Delta t}$

\vspace{3mm}

\noindent
$\sigma (n,y,z)=
\left[
\scalebox{0.80}{$
\begin{array}{ccccccccccccccccc}
 \sqrt{\beta_{m}^- y^2}  & -\sqrt{\gamma_m^- y^1}  & 0 & 0 & 0 & 0 & 0 & 0 & 0 & 0 & 0 & 0 & 0 & 0 & 0 & 0& 0 \\
  - \sqrt{\beta_{m}^- y^2} &0 & \sqrt{ \alpha_m  y^5 { \frac{y^{11}}{\halfsat+y^{11}r}}}  & 0 & 0 & 0 & 0 & 0 & 0 & 0 & 0 & 0 & 0 & 0 & 0 & 0& 0  \\
0  & \sqrt{\gamma_m^- y^1}  & 0 & -\sqrt{\rho_1 y^3}  & 0 & 0 & 0 & 0 & 0 & 0 & 0 & 0 & 0 & 0 & 0 & 0 & 0 \\
0 & 0 & 0 & 0 & -\sqrt{\rho_2 y^4}  & 0 & 0 & 0 & 0 & 0 & 0 & 0 & 0 & 0 & 0 & 0& 0 \\
 0 & 0 &-\sqrt{\alpha_m y^5 { \frac{y^{11}}{\halfsat+y^{11}}}}  & \sqrt{\rho_1 y^3}  & \sqrt{\rho_2 y^4}  & 0 & 0 & 0 & 0 & 0 & 0 & 0 & 0& 0 & 0 & 0 & 0 \\
 0 & 0 & 0 & 0 & 0 &  \sqrt{\beta_f^- y^7}  & -\sqrt{\gamma_f^- y^6}  & 0 & 0 & 0 & 0 & 0& 0 & 0 & 0 & 0& 0\\
0 & 0 & 0 & 0 & 0 &  -\sqrt{\beta_f^- y^7}& 0 & \sqrt{\alpha_f y^{10 }{ \frac{y^{11}}{\halfsat+y^{11}}}} & \sqrt{\omega y^{10} \kappa_m}   & 0 & 0 & 0 & 0 & 0 & 0 & 0 & 0\\
0 & 0 & 0 & 0 & 0 & 0 &  \sqrt{\gamma_f^- y^6}  & 0 & 0 & -\sqrt{\rho_3 y^8}  & 0 & 0 & 0& 0 & 0 & 0& 0 \\
0 & 0 & 0 & 0 & 0 & 0 & 0 & 0 & 0 & 0 & -\sqrt{\rho_4 y^9}  &0 & 0 & 0 & 0 & 0& 0 \\
0 & 0 & 0 & 0 & 0 &  0 & 0 &-\sqrt{\alpha_f y^{10} { \frac{y^{11}}{\halfsat+y^{11}}}}  & -\sqrt{\omega y^{10} \kappa_m}  & \sqrt{\rho_3 y^8}  & \sqrt{\rho_4 y^9} & 0 & 0 & 0 & 0 & 0 & 0\\
0 & 0 & 0 & 0 & 0 & 0 & 0 & 0 & 0 & 0 & 0 & \sqrt{\nu y^{12}}   & - \sqrt{\mu y^{11} } &0 &0&0 & 0\\
0 & 0 & 0 & 0 & 0 & 0 & 0 & 0 & 0 & 0 & 0 & -\sqrt{\nu y^{12}}  &0 &  \sqrt{\theta y^{13} (\kappa_m + \kappa_f)}& - \sqrt{ \mu y^{12}} & 0 & 0 \\
0 & 0 & 0 & 0 & 0 & 0 & 0 & 0 & 0 & 0 & 0 & 0   &  0 & -\sqrt{\theta y^{13} (\kappa_m + \kappa_f)}& 0 & \sqrt{B_v ( I_v +E_v +y^{13}} &  - \sqrt{ \mu y^{13}} 
\end{array}
$}
\right] \sqrt{\Delta t}$\\
\vspace{2.1mm}
The functions \( f_0 \) and \( f_1 \)  appearing in Lemma~\ref{lem:linearized_system} arise from the first-order linearization of the drift function \( f \) and are given as follows:
 
\vspace{3mm}
$f_1 (n,y,z)=
\left[
\scalebox{0.85}{$
\begin{array}{ccccccccccccccccc}
 -\gamma_m^- -\beta^+_m & \beta _m^-  & 0 & 0 & 0 & 0 & 0 & 0 & 0 & 0 & 0 & 0 & 0  \\
  0 & -(\beta_m^- +\beta_m^s +\beta_m^+ )  & 0 & 0 & \alpha_m \frac{y^{11}}{N} & 0 & 0 & 0 & 0 & 0 & \alpha_m \frac{y^{5}}{N} & 0 & 0   \\
\gamma_m^-   & 0 & -\rho_1   & 0 & 0 & 0 & 0 & 0 & 0 & 0 & 0 & 0 & 0  \\
0 & 0 & 0 &  -\rho_2  & 0 & 0 & 0 & 0 & 0 & 0 & 0 & 0 & 0  \\
 0 & 0 &\rho_1  & \rho_2  & \alpha_m \frac{y^{11}}{N}  & 0 & 0 & 0 & 0 & 0 & \alpha_m \frac{y^{5}}{N} & 0 & 0 \\
 0 & 0 & 0 & 0 & 0 &  -\gamma_f^- -\beta ^+_f & \beta _f^-   & 0 & 0 & 0 & 0 & 0& 0 \\
\omega \frac{y^{10}}{N_m} & \omega \frac{y^{10}}{N_m} & 0 & 0 & 0 & 0 & -(\beta_f^- +\beta_f^s +\beta_f^+ ) &  0 & 0 &  \alpha_f \frac{y^{11}}{N}+ \omega \frac{y^{1} + y^{2} +z^1+z^2}{N_m} &\alpha_f \frac{y^{10}}{N} & 0 & 0 \\
0 & 0 & 0 & 0 & 0 &  \gamma_f^-   & 0 & -\rho_3 &  0 & 0 & 0& 0 & 0  \\
0 & 0 & 0 & 0 & 0 & 0 & 0 & 0 & - \rho_4 & 0  & 0 & 0& 0 \\
-\omega \frac{y^{10}}{N_m} & -\omega \frac{y^{10}}{N_m} & 0 & 0 & 0 &  0 & 0 &\rho_3  & \rho_4 &  -\alpha_f \frac{y^{11}}{N}-\omega \frac{y^{1} + y^{2} +z^1+z^2}{N_m} & -\alpha_f \frac{y^{10}}{N} & 0 & 0\\
0 & 0 & 0 & 0 & 0 & 0 & 0 & 0 & 0 & 0 &-\mu & \nu  & 0\\
\theta \frac{y^{13}}{N_m} &\theta \frac{y^{13}}{N_m} & 0 & 0 & 0 & \theta \frac{y^{13}}{N_f} & \theta \frac{y^{13}}{N_f} & 0 & 0 & 0 & 0 & -\nu -\mu  &  \theta (\frac{y^{1} + y^{2} +z^1+z^2}{N_m} +\frac{y^{6} + y^{7} +z^3+z^4}{N_f})\\
-\theta \frac{y^{13}}{N_m} &-\theta \frac{y^{13}}{N_m} & 0 & 0 & 0 & -\theta \frac{y^{13}}{N_f} &-\theta \frac{y^{13}}{N_f}  & 0 & 0 & 0 & B_v & B_v  & - \theta (\frac{y^{1} + y^{2} +z^1+z^2}{N_m} +\frac{y^{6} + y^{7} +z^3+z^4}{N_f})-  \mu +B_v
\end{array}
$}
\right]$
\end{landscape}
\restoregeometry
\vspace{-0.2cm}
$f_0 (n,y,z)=
\left[
\scalebox{0.8}{$
\begin{array}{c}
0\\
-\alpha_m \frac{y^5y^{11}}{N}  \\
  0\\
\gamma_m^+z^1 + \gamma_m^sz^2  \\
\alpha_m \frac{y^5y^{11}}{N}   \\
 0    \\
- \alpha_f \frac{y^{10}y^{11}}{N} - \omega \frac {(y^1 +y^2)}{N_m}\\
0\\
\gamma_f^+z^3 + \gamma_f^sz^4  \\
\alpha_f \frac{y^{10}y^{11}}{N} +\omega \frac {(y^1 +y^2)}{N_m}\\
0\\
-\theta y^{13}(\frac{y^1+ y^2}{N_m} + \frac{y^6+ y^7}{N_f})\\
 \theta y^{13}(\frac{y^1+ y^2}{N_m} + \frac{y^6+ y^7}{N_f})
\end{array}
$}
 \right]$
\vspace{-3mm}
\subsection{The Extended Zika Model }
\label{app:Extended}
Here, we provide an explicit formulation of the model coefficients \( f, h, \sigma, g, \ell \) that define the stochastic dynamics of the system introduced in Subsection \ref{sec:Extended}. These coefficients appear in the state-space recursion equations \eqref{state_YZ_Zika}, which govern the evolution of the hidden and observable states of the model are given by $Y = (I^{-}_m, E_m, R_m^{1-}, R_m^{2-} S_m, I_f^-, E_f, R_f^{1-}, R_f^{2-}, S_f, I_v, E_v, S_v)^\top$, while the vector of observable states is $Z = (I^{+}_m, I^{s}_m, I^{+}_f, I^{s}_f, R^1_m, R^2_m, R^3_m, R^1_f, R^2_f, R^3_f)^\top$. This results in a total of $d = 23$ distinct states and $K = 20$ possible transition pathways.
\vspace{-0.3cm}
{\footnotesize
\begin{flushleft}
$f(n, y, z) = y +
\begin{bmatrix}
\beta_m^- y^2 - \gamma_m^- y^1 -\beta^+_m y^1 \\
\alpha_m y^5 { \frac{y^{11}}{\halfsat+y^{11}}} - (\beta_m^- + \beta_m^s + \beta_m^+) y^2\\
\gamma_m^- y^1 - \rho_1 y^3 \\
- \rho_2 y^4 \\
- \alpha_m y^5 { \frac{y^{11}}{\halfsat+y^{11}}} + \rho_1 y^3 + \rho_2 y^4 \\
\beta_f^-y^7 - \gamma_f^- y^6 -\beta^+_f y^6 \\
\alpha_f y^{10} { \frac{y^{11}}{\halfsat+y^{11}}} + \omega y^{10} \kappa_m - (\beta_f^- + \beta_f^s + \beta_f^+) y^7 \\
\gamma_f^- y^6 - \rho_3 y^8 \\
- \rho_4 y^9\\
- \alpha_f y^{10} { \frac{y^{11}}{\halfsat+y^{11}}} + \omega y^{10} \kappa_m + \rho_3 y^8 + \rho_4 y^9 \\
\nu y^{12} - \mu y^{11}\\
- \nu y^{12} + \theta y^{13} (\kappa_m + \kappa_f) - \mu y^{12} \\
B_v (y^{11} + y^{12} + y^{13}) - \theta y^{13} (\kappa_m + \kappa_f) - \mu y^{13}
\end{bmatrix} \Delta t
+ 
\begin{bmatrix}
0 \\
0 \\
0 \\
\vartheta_m^3 z^{7} \\
0 \\
0 \\
0 \\
0 \\
\vartheta_f^3 R_z^{10} \\
0 \\
0 \\
0 \\
0
\end{bmatrix}$
\vspace{0.1cm}

$h(n,y,z) =  h_0(n,z) + h_1(n,z)y $
\end{flushleft}
\vspace{-0.5cm}
{\footnotesize
\begin{flushleft}
$
 h_0(n,z) = z+\left[   \begin{array}{l}
-\gamma_m^+  z^1 \\
-\gamma_m^s z^2 \\
-\gamma_f^+ z^3\\
-\gamma_f^s z^4  \\
\gamma_m^+ z^1 +\gamma_m^s z^2 -\vartheta_m^1 z^5\\
\vartheta_m^1z^5 - \vartheta_m^2z^6\\
\vartheta_m^2 z^6 - \vartheta_m^3 z^7\\
\gamma_f^+ z^3 +\gamma_f^s z^4 -\vartheta_f^1 z^8\\
\vartheta_f^1z^8 - \vartheta_f^2z^9\\
\vartheta_f^2 z^9 - \vartheta_f^3 z^10\\
\end{array} \right] \Delta t,   \,\,\,\,\,   h_1(n,z)= \left [ \begin{array}{ccccccccccccccccc}
\beta_m^+&\beta_m^+& 0&0&0&0&0&0&0&0&0&0&0\\
0&\beta_m^s&0&0&0&0&0&0&0&0&0&0&0\\
0&0&0&0&0&\beta_f^+&\beta_f^+&0&0&0&0&0&0\\
0&0&0&0&0&0&\beta_f^s&0&0&0&0&0&0\\
0&0&0&0&0&0&0&0&0&0&0&0&0\\
0&0&0&0&0&0&0&0&0&0&0&0&0\\
0&0&0&0&0&0&0&0&0&0&0&0&0\\
0&0&0&0&0&0&0&0&0&0&0&0&0\\
0&0&0&0&0&0&0&0&0&0&0&0&0\\
0&0&0&0&0&0&0&0&0&0&0&0&0
 \end{array} \right ] \Delta t$

\end{flushleft}

\newgeometry{bottom=1.5cm, top=2cm, left=2.5cm, right=2.5cm, landscape}

\begin{landscape}
\footnotesize


\noindent

$\sigma (n,y,z)=
\left[
\scalebox{0.80}{$ 
\begin{array}{ccccccccccccccccc}
 \sqrt{\beta_{m}^- y^{2}}  & -\sqrt{\gamma_m^- y^{1}}  & 0 & 0 & 0 & 0 & 0 & 0 & 0 & 0 & 0 & 0 & 0 & 0 & 0 & 0& 0 \\
  - \sqrt{\beta_{m}^- y^2} &0 & \sqrt{ \alpha_m y^5 { \frac{y^{11}}{\halfsat+y^{11}}}}  & 0 & 0 & 0 & 0 & 0 & 0 & 0 & 0 & 0 & 0 & 0 & 0 & 0& 0  \\
0  & \sqrt{\gamma_m^- y^1}  & 0 & -\sqrt{\rho_1y^3}  & 0 & 0 & 0 & 0 & 0 & 0 & 0 & 0 & 0 & 0 & 0 & 0 & 0 \\
0 & 0 & 0 & 0 & -\sqrt{\rho_2 y^4}  & 0 & 0 & 0 & 0 & 0 & 0 & 0 & 0 & 0 & 0 & 0& 0 \\
 0 & 0 &-\sqrt{\alpha_m y^5 { \frac{y^{11}}{\halfsat+y^{11}}}}  & \sqrt{\rho_1 y^3}  & \sqrt{\rho_2 y^4}  & 0 & 0 & 0 & 0 & 0 & 0 & 0 & 0& 0 & 0 & 0 & 0 \\
 0 & 0 & 0 & 0 & 0 &  \sqrt{\beta_f^- y^7}  & -\sqrt{\gamma_f^- y^6}  & 0 & 0 & 0 & 0 & 0& 0 & 0 & 0 & 0& 0\\
0 & 0 & 0 & 0 & 0 &  -\sqrt{\beta_f^-y^7}& 0 & \sqrt{\alpha_f y^{10} { \frac{y^{11}}{\halfsat+y^{11}}}} & \sqrt{\omega y^{10} \kappa_m}   & 0 & 0 & 0 & 0 & 0 & 0 & 0 & 0\\
0 & 0 & 0 & 0 & 0 & 0 &  \sqrt{\gamma_f^- y^6}  & 0 & 0 & -\sqrt{\rho_3 y^8}  & 0 & 0 & 0& 0 & 0 & 0& 0 \\
0 & 0 & 0 & 0 & 0 & 0 & 0 & 0 & 0 & 0 & -\sqrt{\rho_4 y^9}  &0 & 0 & 0 & 0 & 0& 0 \\
0 & 0 & 0 & 0 & 0 &  0 & 0 &-\sqrt{\alpha_f y^{10} { \frac{y^{11}}{\halfsat+y^{11}}}}  & -\sqrt{\omega y^{10} \kappa_m}  & \sqrt{\rho_3 y^8}  & \sqrt{\rho_4 y^9} & 0 & 0 & 0 & 0 & 0 & 0\\
0 & 0 & 0 & 0 & 0 & 0 & 0 & 0 & 0 & 0 & 0 & \sqrt{\nu y^{12}}   & - \sqrt{\mu y^{11} } &0 &0&0 & 0\\
0 & 0 & 0 & 0 & 0 & 0 & 0 & 0 & 0 & 0 & 0 & -\sqrt{\nu y^{12}}  &0 &  \sqrt{\theta y^{13} (\kappa_m + \kappa_f)}& - \sqrt{ \mu y^{12}} & 0 & 0 \\
0 & 0 & 0 & 0 & 0 & 0 & 0 & 0 & 0 & 0 & 0 & 0   &  0 & -\sqrt{\theta y^{13}(\kappa_m + \kappa_f)}& 0 & \sqrt{B_v ( y^{11} + y^{12} + y^{13}} &  - \sqrt{ \mu y^{13}} 
\end{array}
$}
\right] \sqrt{\Delta t}$

\vspace{1cm}
\footnotesize
$g (n,y,z)=
\left[
\scalebox{0.85}{$
\begin{array}{cccccccccccccc}
 0 & 0 & 0 & 0 &  0& 0 & 0 & 0 &  -\sqrt{\beta^+_m y^1  } \\
-\sqrt{\beta_{m}^s y^2}  & -\sqrt{\beta_{m}^+y^2}  & 0& 0 & 0& 0 & 0 & 0 & 0 & 0 \\
0 & 0 & 0  & 0  & 0& 0 & 0 & 0 & 0 & 0   \\
0 & 0 & 0  & 0 & 0 & 0 & 0 & 0& 0 & 0 \\
0 & 0 & 0 & 0 & 0 & 0 & 0 & 0 & 0 & 0  \\
0 & 0 & 0 & 0 & 0 & 0 & 0 & 0 &0 & -\sqrt{\beta^+_fy^6  } \\
0 & 0   &  -\sqrt{\beta_f^s y^7} & -\sqrt{\beta_f^+ y^7} & 0& 0 & 0 & 0& 0 & 0   \\
0 & 0 & 0 & 0 & 0 & 0 & 0 & 0 & 0 & 0 \\
0 & 0 & 0 & 0 & 0   &  0 & 0 & 0& 0 & 0 \\
0 & 0 & 0 & 0 & 0 & 0 & 0 & 0  & 0 & 0\\
0 & 0 & 0 & 0 & 0 & 0  & 0 & 0 & 0 & 0  \\
0 & 0 & 0 & 0 & 0 & 0  & 0 & 0 & 0 & 0 \\
0 & 0 & 0 & 0 & 0 & 0& 0 & 0  & 0 & 0 \\
\end{array}
$}
 \right]\sqrt{\Delta t}$

\footnotesize
$\ell (n,y,z)=
\left[
\scalebox{0.85}{$
\begin{array}{cccccccccccccc}
0 & \sqrt{\beta_{m}^+ y^2 }  &  0 & 0 & 0&-\sqrt{\gamma _m^+ z^1}  & 0 & 0 & \sqrt{\beta^+_m y^1  } &0     \\
  \sqrt{\beta_{m}^s y^2 } & 0  &0& 0 & - \sqrt{\gamma _m^s z^2} & 0 & 0  &0 & 0 &0  \\
0  &0   &0  & \sqrt{\beta_{f}^+ y^7 }  & 0   &0&0& -\sqrt{\gamma _f^+ z^3} &0& \sqrt{\beta^+_f y^6  }  \\
  0 & 0    & \sqrt{\beta_{f}^s y^7 } &0  &0&0& - \sqrt{\gamma _f^s z^4 } & 0 & 0 &0   \\
  0 & 0&   0 & 0   & \sqrt{\gamma_m^s z^4} & \sqrt{\gamma_m^+ z^3}  & 0 &0& 0 &0\\
  0 & 0 & 0  & 0 & 0  & 0 & 0&0 & 0 &0 \\
   0 & 0 & 0 & 0 & 0  & 0 & 0  & 0 & 0 &0\\
      0 & 0 & 0 & 0 & 0 & 0 &  \sqrt{\gamma_{fs} z^4}& \sqrt{\gamma_{f+} z^3} & 0 &0\\
       0 & 0 & 0 & 0 & 0 & 0  & 0 & 0 & 0 &0\\
         0 & 0 & 0 & 0 & 0 &0&0  &0 & 0 &0 \\ 
\end{array}
$}\right]
 \sqrt{\Delta t}$

\vspace{0.5cm}
 \normalsize
The functions \( f_0 \) and \( f_1 \)  appearing in Lemma~\ref{lem:linearized_system} arise from the first-order linearization of the drift function \( f \) and are given as follows:

 \vspace{1cm}
\footnotesize
$f_1 (n,y,z)=
\left[
\scalebox{0.95}{$
\begin{array}{ccccccccccccccccc}
 -\gamma_m^--\beta^+_m & \beta _m^-  & 0 & 0 & 0 & 0 & 0 & 0 & 0 & 0 & 0 & 0 & 0  \\
  0 & -(\beta_m^- +\beta_m^s +\beta_m^+ )  & 0 & 0 & \alpha_m \frac{y^{11}}{N} & 0 & 0 & 0 & 0 & 0 & \alpha_m \frac{y^{5}}{N} & 0 & 0   \\
\gamma_m^-   & 0 & -\rho_1   & 0 & 0 & 0 & 0 & 0 & 0 & 0 & 0 & 0 & 0  \\
0 & 0 & 0 &  -\rho_2  & 0 & 0 & 0 & 0 & 0 & 0 & 0 & 0 & 0  \\
 0 & 0 &\rho_1  & \rho_2  & \alpha_m \frac{y^{11}}{N}  & 0 & 0 & 0 & 0 & 0 & \alpha_m \frac{y^{5}}{N} & 0 & 0 \\
 0 & 0 & 0 & 0 & 0 &  -\gamma_f^- -\beta ^+_f & \beta _f^-   & 0 & 0 & 0 & 0 & 0& 0 \\
\omega \frac{y^{10}}{N_m} & \omega \frac{y^{10}}{N_m} & 0 & 0 & 0 & 0 & -(\beta_f^- +\beta_f^s +\beta_f^+ ) &  0 & 0 &  \alpha_f \frac{y^{11}}{N}+ \omega \frac{y^{1} + y^{2} +z^1+z^2}{N_m} &\alpha_f \frac{y^{10}}{N} & 0 & 0 \\
0 & 0 & 0 & 0 & 0 &  \gamma_f^-   & 0 & -\rho_3 &  0 & 0 & 0& 0 & 0  \\
0 & 0 & 0 & 0 & 0 & 0 & 0 & 0 & - \rho_4 & 0  & 0 & 0& 0 \\
-\omega \frac{y^{10}}{N_m} & -\omega \frac{y^{10}}{N_m} & 0 & 0 & 0 &  0 & 0 &\rho_3  & \rho_4 &  -\alpha_f \frac{y^{11}}{N}-\omega \frac{y^{1} + y^{2} +z^1+z^2}{N_m} & -\alpha_f \frac{y^{10}}{N} & 0 & 0\\
0 & 0 & 0 & 0 & 0 & 0 & 0 & 0 & 0 & 0 &-\mu & \nu  & 0\\
\theta \frac{y^{13}}{N_m} &\theta \frac{y^{13}}{N_m} & 0 & 0 & 0 & \theta \frac{y^{13}}{N_f} & \theta \frac{y^{13}}{N_f} & 0 & 0 & 0 & 0 & -\nu -\mu  &  \theta (\frac{y^{1} + y^{2} +z^1+z^2}{N_m} +\frac{y^{6} + y^{7} +z^3+z^4}{N_f})\\
-\theta \frac{y^{13}}{N_m} &-\theta \frac{y^{13}}{N_m} & 0 & 0 & 0 & -\theta \frac{y^{13}}{N_f} &-\theta\frac{y^{13}}{N_f}  & 0 & 0 & 0 & B_v & B_v  & - \theta(\frac{y^{1} + y^{2} +z^1+z^2}{N_m} +\frac{y^{6} + y^{7} +z^3+z^4}{N_f})-  \mu + B_v
\end{array}
$}
\right]$
\vspace*{\fill}
\end{landscape}

\restoregeometry

$f_0 (n,y,z)=
\left[
\scalebox{1.0}{$
\begin{array}{c}
0\\
-\alpha_m \frac{y^5y^{11}}{N}  \\
  0\\
\vartheta_m^3 z^7 \\
\alpha_m \frac{y^5y^{11}}{N}   \\
 0    \\
- \alpha_f \frac{y^{10}y^{11}}{N} - \omega \frac {(y^1 +y^2)}{N_m}\\
0\\
\vartheta_f^3 z^{10}  \\
\alpha_f \frac{y^{10}y^{11}}{N} +\omega \frac {(y^1 +y^2)}{N_m}\\
0\\
-\theta y^{13}(\frac{y^1+ y^2}{N_m} + \frac{y^6+ y^7}{N_f})\\
 \theta y^{13}(\frac{y^1+ y^2}{N_m} + \frac{y^6+ y^7}{N_f})
\end{array}
$}
 \right]$

\section{Proof of Proposition \ref{prop:init_law} }\label{Proof_details_IFE}
	
	Here, we provide the detailed derivations of the initial filter estimate. 
	The computations rely on standard properties of expectation and covariance applied to 
	the Gaussian random factors introduced in Assumption~\ref{ass:dfc_zika}. 
	In particular, we show how the conditional mean $M_0=\mathbb{E}[Y_0\mid \mathcal{F}_0^Z]$ and the 
	conditional covariance $Q_0=\operatorname{Cov}(Y_0\mid \mathcal{F}_0^Z)$ are obtained. 
	The arguments consist mainly of using linearity of the expectation, 
	linearity of the variance, and independence of the driving random coefficients. 
	
	We work conditionally on $\mathcal{F}_0^Z$, and use the placeholders 
	$\dagger\in\{m,f\}$ (sex) and $\#\in\{I,E,R\}$ (compartment). 
	Set
	\[
	A_{\dagger}:=I_{\dagger,0}^{+}+I_{\dagger,0}^{s}, 
	\qquad 
	B_{\dagger}:=R_{\dagger,0}^{1}+R_{\dagger,0}^{2}+R_{\dagger,0}^{3}.
	\]
	By Assumption~\ref{ass:dfc_zika}, the initial dark figure coefficients (DFCs) are independent Gaussians,
	\[
	\dfc_{0,\dagger}^{\#}\sim\mathcal N\!\big(\dfcdmean^{\#},(\dfcdvar^{\#})^2\big),
	\qquad \#\in\{I,E,R\},\ \dagger\in\{m,f\},
	\]
	and are independent of $Z_0$. The vector components $(I_{v,0},E_{v,0},S_{v,0})$ are independent Gaussians with means $(\eccmean^I,\eccmean^E,\eccmean^S)$ and variances $((\eccvar^I)^2,(\eccvar^E)^2,(\eccvar^S)^2)$, and are independent of the human blocks.
	
	\smallskip
	\noindent\textbf{Conditional mean $M_0=\E[Y_0\mid\mathcal{F}_0^Z]$.}
	By linearity of expectation and the DFC definitions,
	\[
	I_{\dagger,0}^{-}\;=\;\dfc_{0,\dagger}^{I}\,A_{\dagger}, 
	\qquad 
	E_{\dagger,0}\;=\;\dfc_{0,\dagger}^{E}\,A_{\dagger}, 
	\qquad 
	R_{\dagger,0}^{1-}\;=\;\dfc_{0,\dagger}^{R}\,B_{\dagger}.
	\]
	Using $R_{\dagger,0}^{2-}=0$ and the population balance
	\(
	S_{\dagger,0}=N_{\dagger}-\big(I_{\dagger,0}^{-}+E_{\dagger,0}+R_{\dagger,0}^{1-}+R_{\dagger,0}^{2-}+I_{\dagger,0}^{+}+I_{\dagger,0}^{s}\big),
	\)
	we obtain, for each $\dagger\in\{m,f\}$, $M_0^{\dagger}$.
	For vectors,
	\(
	M_0^{v}=(\eccmean^{I},\,\eccmean^{E},\,\eccmean^{S})^\top.
	\)
	Thus the full mean is given by $	M_0=\big(	M_0^{m}, M_0^{f}, M_0^{v}\big).$
	
	\smallskip
	\noindent\textbf{Conditional covariance $Q_0=\cov(Y_0\mid\mathcal{F}_0^Z)$.}
	Each random component is an affine function of mutually independent Gaussian factors 
	$\{\dfc_{0,\dagger}^{\#}\}$ (human) and $(I_{v,0},E_{v,0},S_{v,0})$ (vector). Therefore $Y_0\mid\mathcal{F}_0^Z$ is Gaussian and $Q_0$ is block diagonal with human male/female blocks (size $5\times 5$) and a vector block (size $3\times 3$).
	
	\medskip
	\noindent\emph{(a) Human block for a fixed $\dagger\in\{m,f\}$.}
	With the ordering $(I_{\dagger,0}^{-},\,E_{\dagger,0},\,R_{\dagger,0}^{1-},\,R_{\dagger,0}^{2-},\,S_{\dagger,0})$, set
	\[
	a_I:=A_{\dagger}^2\,(\dfcdvar^{I})^2,\qquad
	a_E:=A_{\dagger}^2\,(\dfcdvar^{E})^2,\qquad
	b_R:=B_{\dagger}^2\,(\dfcdvar^{R})^2.
	\]
	Independence across $\#\in\{I,E,R\}$ yields
	\[
	\Var(I_{\dagger,0}^{-})=a_I,\quad
	\Var(E_{\dagger,0})=a_E,\quad
	\Var(R_{\dagger,0}^{1-})=b_R,\quad
	\Var(R_{\dagger,0}^{2-})=0,
	\]
	and all pairwise covariances among $I_{\dagger,0}^{-}$, $E_{\dagger,0}$, $R_{\dagger,0}^{1-}$ vanish. Since
	\(
	S_{\dagger,0}
	=
	C_{\dagger}-(I_{\dagger,0}^{-}+E_{\dagger,0}+R_{\dagger,0}^{1-})
	\)
	with $C_{\dagger}$ constant given $(\mathcal{F}_0^Z)$, we have
	\[
	\Var(S_{\dagger,0})=a_I+a_E+b_R,\qquad
	\cov(S_{\dagger,0},I_{\dagger,0}^{-})=-a_I,\quad
	\cov(S_{\dagger,0},E_{\dagger,0})=-a_E,\quad
	\cov(S_{\dagger,0},R_{\dagger,0}^{1-})=-b_R.
	\]
	Collecting the entries gives the explicit $5\times 5$ block, $Q_0^{\dagger}$.
	By independence between the male and female DFCs, the cross-covariances between $Q_0^{m}$ and $Q_0^{f}$ are zero.
	
	\smallskip
	\noindent\emph{(b) Vector block.}
	With ordering $(I_{v,0},E_{v,0},S_{v,0})$ and independence, we obtain $	Q_0^{v}$.

	\smallskip
	\noindent\emph{(c) Full covariance.}
	Therefore, based on (a) and (b) we obtain $ Q_0$.

 \paragraph{Acknowledgment and Funding}
 L.~Oluoch and R.~Wunderlich gratefully acknowledge financial support from the cooperation program between the Deutsche Forschungsgemeinschaft (DFG) and The World Academy of Sciences (TWAS) under grant number 710382.
F.~Ouabo Kamkumo and R.~Wunderlich gratefully acknowledge the  support by the Deutsche Forschungsgemeinschaft (DFG), award number 458468407.

\addcontentsline{toc}{section}{References}
\bibliographystyle{acm}

\begin{thebibliography}{99}
	\footnotesize

\bibitem{alyami}
Alyami, L., \& Das S. Extended skew Kalman filters for COVID-19 pandemic state estimation. In 2023 Sixth International
Conference of Women in Data Science at Prince Sultan University (WiDS PSU) (2023), Institute of Electrical and Electronics Engineers (IEEE).

\bibitem{Baca}
Baca-Carrasco, D., Velasco-Hernández, J.X. (2016) Sex, mosquitoes and epidemics: an evaluation of zika disease dynamics. Bull Math Biol 78(11):2228–2242. \url {https://doi.org/10.1007/s11538-016-0219-4}

\bibitem{Bain}
Bain, A., \& Crisan, D. (2009). \textit{Fundamentals of stochastic filtering} (Vol. 3). Springer.

\bibitem{Blohm}
Blohm, G. M. et al. Evidence for mother-to-child transmission of Zika virus through breast milk. Clin. Infect. Dis. 66, 1120–1121(2018).

\bibitem{Bra}
Brauer, F., Castillo-Chavez, C., Mubayi, A., \& Towers, S (2016) Some models for epidemics of vector-transmitted diseases. Inf Dis Model 1(1):79–87. \url {https://doi.org/10.1016/j.idm.2016.08.001}

\bibitem{britton}
Britton, T., \& O’Neill, P. D. Bayesian inference for stochastic epidemics in populations with random social structure. Scandinavian Journal of Statistics 29, 3 (2002), 375–390.

\bibitem{BrittonPardoux2019}
Britton,T.,  Pardoux,E., editors. (2019).
\newblock {Stochastic Epidemic Models with Inference}.
\newblock Springer. 
\newblock \url{https://doi.org/10.1007/978-3-030-30900-8}.

\bibitem{Burkholder}
Burkholder, D. L., Pardoux, E., Sznitman, A.-S., \& Pardoux, E. (1991). \textit{Filtrage non lin\'{e}aire et \'{e}quations aux d\'{e}riv\'{e}es partielles stochastiques associ\'{e}es}. In \textit{\'{E}cole d'\'{E}t\'{e} de Probabilit\'{e}s de Saint-Flour XIX—1989} (pp. 68--163). Springer.

\bibitem{calvetti}
Calvetti, D., Hoover, A., Rose, J., \& Somersalo, E. Bayesian particle filter algorithm for learning epidemic dynamics. Inverse Problems 37, 11 (2021), 115008.

\bibitem{cappe}
Cappé, O., Moulines, E., \& Rydén, T. (2007). Inference in Hidden Markov Models. Springer.

\bibitem{CDC}
Centers for Disease Control and Prevention,\textit{ Clinical guidance for healthcare providers for prevention of sexual transmission of Zika virus}, \url{https://www.cdc.gov/zika/hc-providers/clinical-guidance/sexualtransmission.html} (accessed 27 June 2019).

\bibitem{CDC1}
Centers for Disease Control and Prevention, \textit{ First female-to-male sexual transmission of Zika virus infection reported in New YorkCity}, \url{http://www.cdc.gov/media/releases/2016/s0715-zika-female-to-male.html} 

\bibitem{CDC2}
Centers for Disease Control and Prevention. (2025). Sexual Transmission of Zika Virus.  \url {https://www.cdc.gov/zika/hcp/sexual-transmission/index.html}

\bibitem{chen1989kalman}
	{ Chen, H.-F., Kumar, P., \& Van~Schuppen, J.}
	\newblock On {Kalman} filtering for conditionally {Gaussian} systems with
	random matrices.
	\newblock {\em Systems \& Control Letters 13}, 5 (1989), 397--404.
    
\bibitem{Chenli}
Chen, N., Li, Y., \&  Liu, H.(2022) Conditional Gaussian nonlinear system: A fast preconditioner and a cheap surrogate model for complex nonlinear systems. Chaos: An Interdisciplinary Journal of Nonlinear Science 32, 5 .

\bibitem{Chenma}
Chen, N., \& Majda, A. J.(2018). Conditional Gaussian systems for multiscale nonlinear stochastic systems: Prediction, state estimation and uncertainty quantification. Entropy 20, 7 , 509.

\bibitem{colaneri}
Colaneri, K., Damian, C., \& Frez, R. Invisible infections: A partial information approach for estimating the transmission dynamics of the Covid-19 pandemic.  arXiv:2212.13443 (2022).

\bibitem{anderson2011continuous}
David,~F.A., Thomas,~G.K.
\newblock Continuous-time Markov chain models for chemical reaction networks.
\newblock In {Design and analysis of biomolecular circuits}, pages 3--42.Springer, 2011.
\newblock \url{https://doi.org/10.1007/978-1-4419-6766-4-1}.

\bibitem{Davi}
Davidson, A., Slavinski, S., Komoto, K., Rakeman, J. \& Weiss, D.(2016)  Suspected female-to-male sexual transmission of Zika virus – New York City, 2016. MMWR Morb. Mortal. Wkly Rep. 65, 716–717 

\bibitem{Denes}
Dénes, A., Ibrahim, M.A., Oluoch, L., Tekeli, M., Tekeli, T. (2019). Impact of weather seasonality and sexual transmission on the spread of Zika fever. Sci Rep. \url { https://doi.org/10.1038/s41598-019-53062-z} 

\bibitem{doucet}
Doucet, A., de.Freitas, N., \& Gordon, N. (2001). Sequential Monte Carlo Methods in Practice. Springer.

\bibitem{ethier2009markov}
{Ethier, S.~N., and Kurtz, T.~G.}
\newblock {\em {Markov} processes: characterization and convergence}.
\newblock John Wiley \& Sons, 2009.

\bibitem{Fontaine}
Fontaine. A., Franck, L., Didier, B. et al.(2018). Duration of Zika Viremia in Serum. Clin Infect Dis.67(7):1143–1149 \url{https://doi: 10.1093/cid/ciy261 }


\bibitem{Gao}
Gao, D., Lou, Y., He, D., Porco, T.C.et al.(2016). Prevention and control of Zika as a mosquito-borne and sexually transmitted disease: a mathematical modeling analysis. Sci Rep. \url {https://doi.org/10.1038/srep28070}
		
\bibitem{green}
Greenwood, P.E., Gordillo, L.F. (2009). Stochastic Epidemic Modeling. In: Chowell, G., Hyman, J.M., Bettencourt, L.M.A., Castillo-Chavez, C. (eds) Mathematical and Statistical Estimation Approaches in Epidemiology. Springer, Dordrecht. \url {https://doi.org/10.1007/978-90-481-2313-1_2}

\bibitem{Greg}
Gregory, C. J. et al.(2017). Modes of transmission of Zika virus. J. Infect. Dis. 216, S875–S883 .

\bibitem{griffin}
Griffin, I., Martin, S. W., Fischer, M., Chambers, T. V., Kosoy, O. L., Goldberg, C. C., et al. (2019). Zika virus IgM detection and neutralizing antibody profiles 12–19 months after illness onset, Miami-Dade County, Florida, USA. Emerging Infectious Diseases, 25(2), 299–303. \url { https://doi.org/10.3201/eid2502.181286}

\bibitem{guy2015approximation}
{Guy, R., Lar{\'e}do, C., and Vergu, E.}
\newblock Approximation of epidemic models by diffusion processes and their
statistical inference.
\newblock {\em Journal of Mathematical Biology 70}, 3 (2015), 621--646.

\bibitem{guy2016approximation}
{Guy, R., Lar{\'e}do, C., and Vergu, E.}
\newblock Approximation and inference of epidemic dynamics by diffusion
processes.
\newblock {\em Journal de la Soci{\'e}t{\'e} Fran{\c{c}}aise de Statistique
	157}, 1 (2016), 71--100.

\bibitem{Hasan}
Hasan, A., Susanto, H., Tjahjono, V., Kusdiantara, R.et al.(2022). A new estimation method for COVID-19 time-varying reproduction number using active cases. Scientific Reports 12, 1 , 6675.

\bibitem{Indriani}
Indriani, R., \& Hartono, B. (2021). Comparison of EnKF and EKF in forecasting dengue incidence in Indonesia. Jurnal Sains dan Seni ITS, 10(2), D194–D198. \url {https://ejurnal.its.ac.id/index.php/sains_seni/article/view/75581}

\bibitem{Kamkumo}
Kamkumo, F. O., Njiasse, I. M., \& Wunderlich, R. (2025). Estimating Unobservable States in Stochastic Epidemic Models with Partial Information. arXiv. \url {https://arxiv.org/abs/2506.00906}

\bibitem{Kamkumo2}
Kamkumo, F. O., Njiasse, I. M., \& Wunderlich, R. (2025). Stochastic epidemic models with partial information. arXiv. \url{https://arxiv.org/abs/2503.07251}

\bibitem{lal}
Lal, R., Huang, W., \& Li, Z. An application of the ensemble Kalman filter in epidemiological modelling. Plos one 16, 8 (2021), e0256227.

\bibitem{Lessler}
Lessler, J., Chaisson, L. H., Kucirka, L. M., \& others. (2016). Assessing the global threat from Zika virus. Science, 353(6300), \url  {https://doi.org/10.1126/science.}

\bibitem{magal}
Magalhaes, T., Morais, C.N.L., Azevedo, E.A.N., Jacques I.J.A.A. et al. (2022) Two-year decay of Zika virus neutralizing antibodies in people living in an endemic region in Brazil. Am J Trop Med Hyg.  6;107(1):186-189. \url{ doi: 10.4269/ajtmh.21-1279}. 

\bibitem {Maga} 
Magalhaes, T., Foy, B. D., Marques, E. T. A., Ebel, G. D. \& Weger-Lucarelli, J.(2018). Mosquito-borne and sexual transmission of Zika virus: recent developments and future directions. Virus Research 254, 1–9 .

\bibitem{Mead}
Mead, P. S., Duggal, N. K., Hook, S. A., \& others. (2018). Zika virus shedding in semen of symptomatic infected men. The New England Journal of Medicine, 378(15), 1377–1385. \url {https://doi.org/10.1056/NEJMoa1711038}

\bibitem{Njiasse}
Njiasse, I. M., Kamkumo, F. O.,\& Wunderlich, R. (2025). Stochastic optimal control of an epidemic under partial information . arXiv. https://arxiv.org/abs/2503.06804

\bibitem{Njiasse2}
Njiasse, I. M., Kamkumo, F. O., \& Wunderlich, R. (2025). Convergence analysis for the extended \textit{Kalman} filter in continuous time. \textit{Working Paper}.

\bibitem{oneill}
O’Neill, P. D., \& Roberts, G. O. Bayesian inference for partially observed stochastic epidemics. Journal of the Royal Statistical Society Series A: Statistics in Society 162, 1 (1999), 121–129.

\bibitem{Oduyebo}
Oduyebo, T., Polen, K., Walke, H., \& others. (2016). Update: Interim guidance for health care providers caring for women of reproductive age with possible Zika virus exposure—United States, 2016. MMWR. Morbidity and Mortality Weekly Report, 65(12), 315–322. \url {https://doi.org/10.15585/mmwr.mm6512e2}


\bibitem{Pad}
Padmanabhan, P., Seshaiyer, P., Castillo-Chavez, C. (2017) . Mathematical modeling, analysis and simulation of the spread of Zika with the influence of sexual transmission and preventive measures. Lett Biomath. \url {https://doi.org/10.30707/lib4.1padmanabhan}

\bibitem{Peter}
Petersen, L. R., Jamieson, D. J., Powers, A. M. \& Honein, M. A. (2016). Zika virus. N. Engl. J. Med. 375, 294–295 .

\bibitem{Picard1}
Picard, J. (1991). Efficiency of the extended \textit{Kalman} filter for nonlinear systems with small noise. \textit{SIAM Journal on Applied Mathematics}, \textit{51}(3), 843--885.

\bibitem{Picard}
Picard, J. (1993). Estimation of the quadratic variation of nearly observed semimartingales with application to filtering. \textit{SIAM Journal on Control and Optimization}, \textit{31}(2), 494--517.

\bibitem{LiptserShiryaevVolII2001}
Rober,t~S. L., Albert,~N.S. (2001)
\newblock {Statistics of Random Processes II: Applications}.
\newblock  Springer.
\newblock \url{https://doi.org/10.1007/978-3-662-10028-8}.

\bibitem{sarkka}
Sarkka, S. (2013). Bayesian Filtering and Smoothing. Cambridge University Press. (Cambridge, UK)

\bibitem{Sas}
Sasmal, S.K., Ghosh, I., Huppert, A., Chattopadhyay, J. (2018) . Modeling the spread of Zika virus in a stage-structured population: effect of sexual transmission. Bull Math Biol 80(11):3038–3067. \url{https://doi.org/10.1007/s11538-018-0510-7}

\bibitem{Song}
Song, B. H. et al.(2017). Zika virus: history, epidemiology, transmission, and clinical presentation. J. Neuroimmunol. 308, 50–64.

\bibitem{Supa}
Suparit, P., Wiratsudakul, A.\& Modchang, C.(2018). A mathematical model for Zika virus transmission dynamics with a time-dependent mosquito biting rate. Theor. Biol. Med. Model. 15, 11. 

\bibitem{Turmel}
Turmel, J.M., Abgueguen, P., Hubert, B., \& others. (2016). Late sexual transmission of Zika virus related to persistence in the semen. The Lancet, 387(10037), 2501.\url {https://doi.org/10.1016/S0140-6736(16)30775-9}

\bibitem{WHO1}
World Health Organization, (2015).\textit{ WHO Global Health Observatory data repository. Crude birth and death rate, Data by country}. \url{http://apps.who.int/gho/data/node.main.CBDR107?lang=en}

\bibitem{WHO2}
World Health Organization, (2018) \textit{WHO list of blueprint priority diseases}, \url {https://www.who.int/blueprint/priority-diseases/en/} .

\bibitem{Yakob}
Yakob, L.,\& Walker, T. (2016). Assessing the effectiveness of vector control for Zika virus: A modelling study. The Lancet Infectious Diseases, 16(2), 154–160.\url {https://doi.org/10.1016/S1473-3099(15)00433-8}

\bibitem{Yang}
Yang, L., Zhao, Y., \& Li, W. (2022). Metapopulation dengue forecast using EAKF in Guangdong, China. PLoS Neglected Tropical Diseases, 16(5), e011418. \url {https://doi.org/10.1371/journal.pntd.0011418}

\bibitem{Zeng}
Zeng, X., \&  Ghanem, R.(2020). Dynamics identification and forecasting of COVID-19 by switching Kalman filters. Computational Mechanics 66 , 1179–1193.

\bibitem{Zhou}
Zhou, Y., Liu, K., Wang, Y., et al. (2023). Forecasting dengue outbreaks using an ensemble adjustment Kalman filter model in Guangzhou, China. BMC Public Health, 23(1), 123. \url {https://doi.org/10.1186/s12889-025-22545-2}

\bibitem{Zhu}
Zhu, X., Gao, B., Zhong Y., Gu, C., \& Choi, K.S.(2021). Extended Kalman filter based on stochastic epidemiological model for COVID-19 modelling. Computers in Biology and Medicine 137 , 104810.



\end{thebibliography}

\end{document}